\PassOptionsToPackage{unicode}{hyperref}
\PassOptionsToPackage{hyphens}{url}
\PassOptionsToPackage{dvipsnames,svgnames,x11names}{xcolor}

\documentclass[12pt]{article}

\usepackage{NS_General}
\usepackage{mathrsfs}

\usepackage{dsfont}
\newcommand{\mathbbm}[1]{\text{\usefont{U}{dsrom}{m}{n}#1}}
\usepackage{amsmath,amssymb}
\usepackage{mathtools}
\usepackage[hypertexnames=false]{hyperref}
\usepackage[labelfont=bf]{caption}
\usepackage{enumerate}
\usepackage{relsize}
\usepackage{ragged2e}
\usepackage{scalefnt}
\usepackage{iftex}
\ifPDFTeX
  \usepackage[T1]{fontenc}
  \usepackage[utf8]{inputenc}
  \usepackage{textcomp} 
\else
  \usepackage{unicode-math}
  \defaultfontfeatures{Scale=MatchLowercase}
  \defaultfontfeatures[\rmfamily]{Ligatures=TeX,Scale=1}
\fi
\usepackage{lmodern}
\ifPDFTeX\else  
   
\fi
\IfFileExists{upquote.sty}{\usepackage{upquote}}{}
\IfFileExists{microtype.sty}{
  \usepackage[]{microtype}
  \UseMicrotypeSet[protrusion]{basicmath} 
}{}
\makeatletter
\@ifundefined{KOMAClassName}{
  \IfFileExists{parskip.sty}{
    \usepackage{parskip}
  }{
    \setlength{\parindent}{0pt}
    \setlength{\parskip}{6pt plus 2pt minus 1pt}}
}{
  \KOMAoptions{parskip=half}}
\makeatother
\usepackage{xcolor}
\makeatletter
\ifx\paragraph\undefined\else
  \let\oldparagraph\paragraph
  \renewcommand{\paragraph}{
    \@ifstar
      \xxxParagraphStar
      \xxxParagraphNoStar
  }
  \newcommand{\xxxParagraphStar}[1]{\oldparagraph*{#1}\mbox{}}
  \newcommand{\xxxParagraphNoStar}[1]{\oldparagraph{#1}\mbox{}}
\fi
\ifx\subparagraph\undefined\else
  \let\oldsubparagraph\subparagraph
  \renewcommand{\subparagraph}{
    \@ifstar
      \xxxSubParagraphStar
      \xxxSubParagraphNoStar
  }
  \newcommand{\xxxSubParagraphStar}[1]{\oldsubparagraph*{#1}\mbox{}}
  \newcommand{\xxxSubParagraphNoStar}[1]{\oldsubparagraph{#1}\mbox{}}
\fi
\makeatother

\usepackage{longtable,booktabs,array}
\usepackage{calc} 
\usepackage{etoolbox}
\makeatletter
\patchcmd\longtable{\par}{\if@noskipsec\mbox{}\fi\par}{}{}
\makeatother
\IfFileExists{footnotehyper.sty}{\usepackage{footnotehyper}}{\usepackage{footnote}}
\makesavenoteenv{longtable}
\usepackage{graphicx}
\graphicspath{ {./figures/} }
\makeatletter
\def\maxwidth{\ifdim\Gin@nat@width>\linewidth\linewidth\else\Gin@nat@width\fi}
\def\maxheight{\ifdim\Gin@nat@height>\textheight\textheight\else\Gin@nat@height\fi}
\makeatother
\setkeys{Gin}{width=\maxwidth,height=\maxheight,keepaspectratio}
\makeatletter
\def\fps@figure{htbp}
\makeatother

\makeatletter
\@ifpackageloaded{caption}{}{\usepackage{caption}}
\AtBeginDocument{%
\ifdefined\contentsname
  \renewcommand*\contentsname{Table of contents}
\else
  \newcommand\contentsname{Table of contents}
\fi
\ifdefined\listfigurename
  \renewcommand*\listfigurename{List of Figures}
\else
  \newcommand\listfigurename{List of Figures}
\fi
\ifdefined\listtablename
  \renewcommand*\listtablename{List of Tables}
\else
  \newcommand\listtablename{List of Tables}
\fi
\ifdefined\figurename
  \renewcommand*\figurename{Figure}
\else
  \newcommand\figurename{Figure}
\fi
\ifdefined\tablename
  \renewcommand*\tablename{Table}
\else
  \newcommand\tablename{Table}
\fi
}
\@ifpackageloaded{float}{}{\usepackage{float}}
\floatstyle{ruled}
\@ifundefined{c@chapter}{\newfloat{codelisting}{h}{lop}}{\newfloat{codelisting}{h}{lop}[chapter]}
\floatname{codelisting}{Listing}

\makeatother
\makeatletter
\@ifpackageloaded{caption}{}{\usepackage{caption}}
\@ifpackageloaded{subcaption}{}{\usepackage{subcaption}}
\makeatother

\ifLuaTeX
  \usepackage{selnolig}
\fi
\usepackage[]{natbib}
\usepackage{bookmark}

\IfFileExists{xurl.sty}{\usepackage{xurl}}{}
\hypersetup{
  pdftitle={Title},
  pdfauthor={Author 1; Author 2},
  pdfkeywords={3 to 6 keywords, that do not appear in the title},
  colorlinks=true,
  linkcolor={blue},
  filecolor={Maroon},
  citecolor={Blue},
  urlcolor={Blue},
  pdfcreator={LaTeX via pandoc}}
\usepackage[strings]{underscore}

\renewcommand{\E}{\mathbb{E}}

\newcommand{\anon}{1}

\begin{document}

\def\spacingset#1{\renewcommand{\baselinestretch}%
{#1}\small\normalsize} \spacingset{1}

\if1\anon
{
  \title{\bf Nonparametric Efficient Estimation of Dynamic Treatment Regimes with Competing Risks}
  
  \author{
    Nitya Shah$^1$, 
    Laura D. Carbone$^{2,3}$, 
    Howard A. Fink$^{4,5,6}$, \\
    John T. Schousboe$^{7,8}$, 
    and 
    Jared D. Huling$^1$\thanks{Corresponding Author: \href{mailto:huling@umn.edu}{huling@umn.edu}}
  }

  \date{}

  \maketitle

    \smaller[1]
    \begin{center}
    $^1$Division of Biostatistics \& Health Data Science, School of Public Health, University of Minnesota, Minneapolis, MN \\
    $^2$Center of Innovation for Complex Chronic Healthcare, Edward Hines Jr. VA Hospital, Hines, IL \\
    $^3$Medical College of Georgia at Augusta University, Augusta, GA \\
    $^4$Geriatric Research Education and Clinical Center, Veterans Affairs Health Care System, Minneapolis, MN\\
    $^5$Center for Care Delivery \& Outcomes Research, Veterans Affairs Health Care System, Minneapolis, MN\\
    $^6$Department of Medicine, University of Minnesota, Minneapolis, MN\\
    $^7$HealthPartners Institute, Bloomington, MN \\
    $^8$Division of Health Policy \& Management, School of Public Health, University of Minnesota, Minneapolis, MN
    \end{center}
    \normalsize
} \fi

\if0\anon
{
  \bigskip
  \bigskip
  \bigskip
  \begin{center}
    {\LARGE\bf Nonparametric Efficient Estimation of Dynamic Treatment Regimes with Competing Risks}
\end{center}
  \medskip
} \fi

\bigskip
\begin{abstract}
Many observational studies evaluate the risks and benefits of time-varying interventions. In these longitudinal settings, the primary outcome of interest is often a clinical event that is precluded by mortality, which acts as a competing risk. Evaluating dynamic treatment regimes (DTRs) is complicated by time-varying confounding, right-censoring, and progressive sample size reduction over time, for which traditional methods such as the g-formula or inverse probability weighting may be misspecified or highly unstable. In this work, we develop the nonparametric efficiency theory and derive the efficient influence function for the cumulative incidence of an event under a DTR in the presence of a competing risk and right-censoring. Building on this result, we propose a sequentially doubly robust estimator that accommodates flexible machine learning for nuisance estimation while retaining $\sqrt{n}$-consistency and asymptotic normality. We illustrate the utility of our framework by estimating long-term cumulative incidence of clinical fractures under different bisphosphonate ``drug holiday'' regimes for osteoporosis.
\end{abstract}

\noindent
{\it Keywords:} Cumulative incidence, causal inference, efficient influence function, time-varying exposures, double robustness
\vfill

\newpage
\spacingset{1.8}

\section{Introduction}
\label{sec:intro}

Evaluating time-varying interventions using longitudinal observational data is a central challenge in modern causal inference. Of particular interest are dynamic treatment regimes (DTRs), which formalize these interventions as individualized treatment rules that adapt over time in response to a patient's evolving clinical history \citep{murphy_marginal_2001}. In fields such as public health and personalized medicine, the primary objective under this framework is to estimate the counterfactual cumulative risk of an outcome of interest under population-wide adherence to a given DTR.

Estimating long-term risks presents substantial methodological challenges, however, which consequently limit the utility of standard survival analysis techniques. In longitudinal settings, healthcare providers frequently modify exposures based on patients' evolving characteristics and prior outcomes. This leads to complex time-dependent confounding, where covariate measurements simultaneously act as confounders for future treatment assignment and are themselves affected by past treatment history \citep{hernan_causal_2023}. Furthermore, because longitudinal studies often follow individuals over extended periods of time, right-censoring due to loss to follow-up can preclude the observation of the primary outcome. If attrition depends on a patient's clinical history, it can introduce severe selection bias \citep{hernan_hazards_2010, hernan_causal_2023}. Moreover, especially in older populations, mortality acts as a competing risk that prevents the outcome of interest from occurring. Failure to account for this, such as by treating death as a censoring event, can lead to biased cumulative incidence estimates for the target event \citep{fine_proportional_1999}.

Causal inference methods such as the g-formula and inverse probability weighted (IPW) estimators provide formal strategies for estimating causal effects while addressing time-varying confounding and censoring, and have been widely applied in numerous longitudinal studies \citep{robins_analytic_1993, robins_marginal_2000, hernan_causal_2023}. However, these methods require correct specification of nuisance functions, including models for treatment assignment, censoring, and outcome(s) of interest, which can be especially difficult to accomplish in time-varying settings. Parametric or semiparametric models may lack the flexibility to capture complex clinical decision-making and heterogeneity in patient characteristics, which can introduce bias and reduce the interpretability of results \citep{bang_doubly_2005}. Moreover, IPW estimators are less statistically efficient and highly sensitive to extreme weights, which can be particularly problematic when estimating the risk of rare outcomes \citep{tsiatis_semiparametric_2006, funk_doubly_2011}. These limitations have motivated statistical methods that remain valid under flexible estimation of nuisance functions, while achieving desirable large-sample properties such as consistency, double robustness, and semiparametric efficiency \citep{tsiatis_semiparametric_2006, funk_doubly_2011, van_der_laan_targeted_2006, kennedy_semiparametric_2023}.

Despite substantial theoretical advances in causal inference for longitudinal data, extending DTRs to settings with competing risks remains limited. Efforts to address this gap include the dynamic-regime marginal structural models proposed by \citet{morzywolek_estimation_2022}, which tailor traditional marginal structural models to handle competing events. However, estimation in this framework relies on IPW and is thus susceptible to the same issues of high variance, instability, and vulnerability to nuisance model misspecification mentioned previously. Alternatively, the Longitudinal Modified Treatment Policy (LMTP) framework \citep{diaz_causal_2025} allows for flexible, nonparametric estimation with time-varying exposures under less restrictive modeling assumptions. However, the causal estimands targeted by LMTPs typically correspond to population-level treatment modifications and do not directly address the evaluation of prespecified and individualized DTRs. Taken together, existing methods do not provide a unified framework for the robust and efficient estimation of causal effects for DTRs that explicitly account for competing events, motivating the development of new semiparametric approaches tailored to complex longitudinal decision-making settings.

To highlight these abstract methodological gaps and illustrate the need for statistical efficiency, we consider a motivating application in the long-term management of osteoporosis. Osteoporosis is a chronic skeletal condition characterized by decreased bone mass and the deterioration of bone microarchitecture, leading to higher risk of fractures \citep{morin_osteoporosis_2025}. As the population continues to age, osteoporosis and low bone mass present a growing public health burden, affecting over 53 million adults in the United States alone \citep{cauley_public_2013, wright_recent_2014}. Osteoporotic fractures can contribute to functional impairment, increased mortality, and considerable economic burden on healthcare systems \citep{burge_incidence_2007, michael_lewiecki_hip_2018}. As the gold standard for fracture prevention, bisphosphonate (BP) therapies are highly effective in reducing subsequent fractures in patients \citep{bauer_osteoporosis_2018, fink_long-term_2019}. However, determining the optimal duration of BP treatment is difficult. Prolonged BP use is associated with a modest increase in the risk of certain rare adverse events, such as atypical femoral fractures \citep{shane_atypical_2014, fink_long-term_2019}.

In consideration of these risks, clinical guidelines suggest reassessing ongoing BP therapy after three to five years, with a treatment pause, or ``drug holiday,'' advised for patients at low fracture risk to mitigate long-term adverse effects \citep{adler_managing_2016}. Determining the optimal timing and duration for such a holiday based on a patient's evolving clinical history represents a classic individualized DTR problem. Because the target population consists primarily of older adults, mortality acts as a prominent competing risk, frequently precluding the observation of clinical fractures and other adverse events. Accurate estimation of the cumulative risk of these events is essential for informing clinical decisions that weigh the benefits of fracture prevention against the low, but potentially severe, risk of harm \citep{black_atypical_2020, schilcher_bisphosphonate_2011}. Evaluating drug holiday regimes highlights another practical challenge in longitudinal causal inference. Although databases may initially contain millions of records, sequentially conditioning on patient histories can drastically reduce the available sample size when evaluating a specific DTR, since the cohort is continually diminished over time by treatment deviations, loss to follow-up, mortality, and any events of interest. When evaluating rare outcomes under this kind of structural sparsity, traditional estimators such as IPW may suffer from variance inflation and weight instability. This progressive sample size reduction uniquely motivates the necessity for a nonparametrically efficient estimator that minimizes variance and provides interpretable causal estimates.

To address these challenges in the context of BP use and its associated adverse events in osteoporosis patients, we develop nonparametric efficiency theory and derive the efficient influence function for the cumulative incidence under competing risks. Building on this rigorous foundation, we propose a novel nonparametric efficient estimator for the cumulative incidence of an adverse event under a DTR. Our efficient influence function-based approach offers a robust, broadly applicable framework for estimating cumulative risks in the presence of time-varying exposure, a competing event, and censoring due to loss to follow-up, all while leveraging extensive longitudinal data integrating both clinical records and claims data. Our estimator provides clinically interpretable cumulative risk estimates under varying treatment durations and scenarios, such as drug holidays, which can readily be extended to other settings where similar methodological challenges arise. 

The remainder of this paper is organized as follows. Section \ref{sec:methods} introduces our statistical framework, beginning with notation and data structure (Section \ref{sec:notation}), followed by a formal definition of the causal estimand (Section \ref{sec:estimand}) and the necessary identifying assumptions (Section \ref{sec:assumptions}). We then describe our proposed methodology and comparator approaches, covering identification and parametric estimation (Section \ref{sec:identification_and_para}) as well as nonparametric efficient estimation (Section \ref{sec:nonparametric_efficiency_theory}). Section \ref{sec:sims} presents a simulation study to assess the empirical performance of our estimator relative to established methods. In Section \ref{sec:rwe}, we discuss the application of our framework to the estimation of fracture incidence under different BP treatment regimes. Finally, Section \ref{sec:discussion} concludes with a discussion of our findings and directions for future research.

\section{Methods}
\label{sec:methods}

\subsection{Notation and Data Structure}
\label{sec:notation}

Following \citet{morzywolek_estimation_2022}, we assume that we observe data on discrete time points indexed by $k=0, \dots, K$. For each $k$ and for each individual, outcomes and covariates are measured and treatments are assigned in a study of some duration $K > 0$. Measurements collected prior to the start of the study are indexed at $k = 0$, whereas $k = K$ corresponds to values recorded at the end of the study. Let $Y_k$, $k = 0,...,K$, represent whether the outcome of interest (e.g., fracture) has occurred by time $k$; $Y_k = 1$ indicates that the outcome has been observed by time $k$ and $Y_k = 0$ means that it has not. In the same manner, $Z_k$, $k = 0,...,K$, is an indicator for whether the competing event (e.g., death) has been observed by time $k$. For any time-varying $V_k,k = 0,...,K$, let $\overline{V}_k \equiv (V_0,...,V_k)$ be a vector of all previously measured values of $V_k$ up to and including time $k$. In a similar vein, let $\underline{V}_k \equiv (V_k,...,V_K)$ be a vector of all future values of $V_k$, including time $k$. For all $k$, $Y_k = 1$ and $Z_k = 1$ imply that $\underline{Y}_{k+1} = 1$ and $\underline{Z}_{k+1} = 1$, respectively. We also specify that $Y_{0} = Z_{0} = 0$ for all participants to avoid including any individuals that have already experienced the outcome of interest or passed away prior to the start of the study. Additionally, let $X_0$ be the set of patient characteristics available at the start of the study, including baseline values for time-varying covariates (e.g., medication use and biomarker status) as well as static covariates (e.g., race and ID information). Let $X_k$, for $k = 1,...,K$, represent patient characteristics at time $k$, or as measured in the interval between time points $k-1$ and $k$. Define $\Hcal_k \equiv \left\{ Z_k, Y_k, X_k \right\}$ as the set of patient outcomes and covariates at time $k$.

The random variable $A_k$, $k = 0,...,K-1$, is a treatment indicator at time point $k$; $A_k = 0$ represents no treatment given, while $A_k = 1$ indicates treatment use at time $k$. In the context of a drug holiday, for every time point $k$ at which an individual $i$ is taking a break from bisphosphonate use, $A_{k,i} = 1$. Finally, in this study, we allow for censoring due to loss to follow-up via a censoring indicator $U_k$, for $k = 0,...,K$. If a study participant has been censored by time $k$, $U_k = 0$; otherwise, if they remain uncensored by time $k$, $U_k = 1$. We assume that, for all $k$, $U_k = 0 \implies \underline{U}_{k+1} = 0$ (i.e., once an individual is censored, they remain censored for the remainder of the study). At every time point $k$, we observe the variables $O_k = \{Z_k, Y_k, X_k, A_k, U_k\}$, in that temporal order.

\subsection{Causal Estimand}
\label{sec:estimand}

In the field of causal inference, treatment effects are generally defined and estimated using the potential outcomes framework \citep{splawa-neyman_application_1990, rubin_estimating_1974}. Suppose $a = (a_0, a_1,...,a_{k-1})$ represents a vector of treatment assignments, one at each discrete time point from baseline through time $k-1$. Under this framework, we can define $Y^a_k$ as the potential outcome for the event of interest which we would observe if all study participants had received treatment $A = a$ through time $k-1$. Similarly, $Z^a_k$ would be the potential outcome for the competing event had all participants received the sequence of treatments $A = a$ through time $k-1$. In our motivating example, not only does an individual's treatment vary with each time point $k$, but their treatment assignment at each time point is dependent on their previous treatment, outcome, and covariate history. This dependence introduces the potential for time-varying confounding \citep{hernan_causal_2023}. 

Similar to \citet{morzywolek_estimation_2022}, our goal is to evaluate the impact of a series of treatment decisions on the cumulative incidence of some event of interest. A dynamic treatment regime (DTR) $d$ is a sequence of treatment decision rules that specify the treatment an individual would receive at each time point based on their previous clinical history. Defining the treatment decision made at each time $k$ as $d_k(\overline{\Hcal}_k, \overline{A}_{k-1})$, or $d_k$ for ease of notation, the complete DTR is the ordered collection of these rules across the study ($d \equiv \overline{d}_{K-1}$). Using potential outcomes notation, let $Y^d_k$ be the potential outcome for the event of interest which we would observe if all participants had followed DTR $d$, i.e., had been assigned treatments according to $d$, until $k-1$. Similarly, let $Z^d_k$ be the potential outcome for the competing event had all individuals followed $d$ until $k-1$. In our motivating application, we are interested in estimating the risk of fracture associated with a set of $m$ prespecified bisphosponate DTRs, denoted $\Dcal = \{d^{1},..., d^{m}\}$.  By identifying {$\Pr\left(Y_k^d = 1\right)$} for each $d \in \Dcal$, we can identify the optimal course of treatment; i.e., select the regime that minimizes incidence of the target event by some time $k$ in the population. For $k = 1,...,K$, {$\Pr\left(Y_k^d = 1\right)$}, or the cumulative incidence of the outcome of interest at time $k$, can be rewritten as: 
\begin{equation}\allowdisplaybreaks
    \Pr\left(Y_k^d = 1\right) = \sum_{t=1}^{k} \left[ h_t^{(Y)}(d) \left\{1 - h_t^{(Z)}(d)\right\} \times \prod_{j=1}^{t-1}\left\{1 - h_j^{(Y)}(d)\right\}\left\{1 - h_j^{(Z)}(d)\right\}\right],
    \label{eq:cumu_inc}
\end{equation}
which involves the estimation of the discrete-time counterfactual event-specific hazards $h_t^{(Y)}(d)$ and $h_t^{(Z)}(d)$ for the outcome of interest and the competing event, respectively \citep{morzywolek_estimation_2022,young_causal_2020}. These are the event-specific hazards at time point $k$ under DTR $d$ among all individuals who are in the population and also have yet to experience the outcome of interest or the competing event by time $k$, defined as:
\begin{gather}\allowdisplaybreaks
    h_t^{(Y)}(d) = \Pr\left(Y_t^d = 1\: \big| \: Y_{t-1}^d = 0, Z_{t}^d = 0\right) \text{ and}\label{eq:h1}\\
    h_t^{(Z)}(d) = \Pr\left(Z_t^d = 1\: \big| \: Z_{t-1}^d = 0, Y_{t-1}^d = 0\right)\label{eq:h2}.
\end{gather}
The counterfactual event-specific hazard at time $t$ for the competing event, $h_t^{(Z)}(d)$, is conditional on no events through the previous time point, whereas the the counterfactual event-specific hazard at time $t$ for the outcome of interest, $h_t^{(Y)}(d)$, is conditional on no outcome of interest through the previous time point and no competing event through the \textit{current} time point $t$. This is consistent with our assumed temporal ordering of the observed variables $O_t$. Furthermore, although the time-varying hazards \eqref{eq:h1} and \eqref{eq:h2} are defined using potential outcomes, they lack a causal interpretation because they are susceptible to an inherent selection bias \citep{morzywolek_estimation_2022}. Following baseline, the cause-specific hazards are estimated among those who survive in the study population at each time point, potentially leading to inaccurate estimates if individuals who have a higher risk of the outcome are differentially excluded over time \citep{hernan_hazards_2010}. Nevertheless, \eqref{eq:h1} and \eqref{eq:h2} are useful as intermediate quantities in the estimation of the cumulative incidence \eqref{eq:cumu_inc}, which better reflects the overall risk of the outcome of interest.

Ideally, we would compare all hypothetical $d^{1},..., d^{p} \in \Dcal$ with a randomized trial with $p$ arms, one for each regime. This, however, becomes logistically infeasible as the number of regimes $p$ and the number of time points $K$ increase. Instead, we evaluate each DTR retrospectively using observational data by identifying individuals whose observed treatment histories coincide exactly with the sequence specified by regime $d$. Similar to the work of \citet{morzywolek_estimation_2022}, this is done by defining a compatibility indicator ${C}^{(d)}_{k} = \mathbbm{1}\left( \overline{A}_k = \overline{d}_k\right)$, which is 1 if the individual’s treatment history up to time $k$ aligns with $d$ and 0 otherwise. This is used to implement artificial censoring; individuals are censored at the time point at which their observed treatment path first deviates from $d$ \citep{robins_analytic_1993, murphy_marginal_2001}.

\subsection{Causal Assumptions}
\label{sec:assumptions}

In order to estimate the hazards of the outcome of interest and the competing event at each time point $k$ as a function of our observed data, we are required to make three standard causal assumptions, which are similar to those made by \citet{morzywolek_estimation_2022} and \citet{young_causal_2020}. For each regime $d$, we assume the following for all $k = 0,..., K$.

\begin{assumption}{1}{Sequential Ignorability}\label{sequential-ignorability}
\begin{gather*}
    (\underline{Y}^d_{k+1}, \underline{Z}^d_{k+1}) \indep A_k \; | \; \overline{X}_{k} = \overline{x}_{k}, \overline{Y}_{k} = 0, \overline{Z}_{k} = 0, \overline{U}_{k-1} = 1, \overline{C}_{k-1} = 1 \text{, and}\\
    (\underline{Y}^d_{k+1}, \underline{Z}^d_{k+1}, \underline{X}^d_{k+1}, \underline{A}^d_{k+1}) \indep U_k \; | \; \overline{X}_{k} = \overline{x}_{k}, \overline{Y}_{k} = 0, \overline{Z}_{k} = 0, \overline{A}_{k} = \overline{a}_{k}, \overline{U}_{k-1} = 1.
\end{gather*}
\end{assumption}

\begin{assumption}{2}{Positivity}\label{positivity}
\begin{gather*}
    \Pr(A_k = d_k\; | \; \overline{X}_{k} = \overline{x}_{k}, \overline{Y}_{k} = 0, \overline{Z}_{k} = 0, \overline{U}_{k-1} = 1, \overline{C}_{k-1} = 1) > 0 \;\; \text{w.p.} \; 1 \text{, and}\\
    \Pr(U_k = 1\; | \; \overline{X}_{k} = \overline{x}_{k}, \overline{Y}_{k} = 0, \overline{Z}_{k} = 0, {\overline{A}_{k} = \overline{a}_{k}}, \overline{U}_{k-1} = 1) > 0 \;\; \text{w.p.} \; 1,
\end{gather*}
\noindent {where ``w.p. 1'' denotes ``with probability 1'' with respect to the true data distribution.}
\end{assumption}

\begin{assumption}{3}{Consistency}\label{consistency}
\begin{equation*}
    \text{If\;\;}U_k = 1, \text{\;then\;\;} \overline{Y}_{k} = \overline{Y}_{k}^{\overline{A}_{k-1}},\;\; \overline{Z}_{k} = \overline{Z}_{k}^{\overline{A}_{k-1}}, \text{\;and\;\;} \overline{X}_{k} = \overline{X}_{k}^{\overline{A}_{k-1}}.
\end{equation*}
\end{assumption}
The first part of Assumption \ref{sequential-ignorability} requires that at any time point $k$, there is no unmeasured confounding. In other words, for individuals that have not yet experienced the event of interest or the competing event, their future potential outcomes for both events are independent of their treatment assignment at time $k$, conditional on all previously measured characteristics. The second part of Assumption \ref{sequential-ignorability} requires that whether an individual is lost to follow up at time $k$ is independent of all future potential outcomes, patient covariates, and treatment assignments. Positivity, or Assumption \ref{positivity}, states that at any time $k$, there is a nonzero probability of individuals being compatible with any DTR $d$, as well as of individuals not being lost to follow up, both conditional on previously measured information. Finally, Assumption \ref{consistency} states that, for all uncensored individuals, their observed outcomes and covariates correspond to their potential outcomes under their observed treatment history. This also implies that an individual's observed outcomes are not influenced by the treatments given to others, which is the standard causal assumption of no interference. 

\subsection{Identification and Parametric Estimation}
\label{sec:identification_and_para}

\subsubsection{Weighting Identification and Estimation}

Artificial censoring due to incompatibility with a treatment regime $d$ at a given time $k$ can increase selection bias if the remaining patients are no longer representative of the original population. \citet{robins_correcting_2000} demonstrated that inverse probability of censoring (IPC) weighting can effectively correct for this bias by reweighting the observed data to account for the inherent censoring mechanism. We define IPC weights for our dataset under each regime $d$ which reflect both the artificial censoring and any censoring due to loss to follow-up:
\begin{equation}\label{eq:ipcw}
    W^{(d)}_k = \frac{{C}^{(d)}_{k}}{{\prod_{j=0}^k \pi^{C^{(d)}}_j}} \times \frac{\prod_{j=0}^k{U}_{j}}{\prod_{j=0}^k \pi^U_j},
\end{equation}
where
\begin{gather}
    \pi^{C^{(d)}}_j = \Pr\left(C^{(d)}_j = 1 \,\middle|\, \overline{\Hcal}_j, \overline{U}_{j-1} = 1, \overline{C}^{(d)}_{j-1} = 1 \right), \text{ and} \label{eq:pi_C}\\
    \pi^U_j = \Pr\left(U_j = 1 \,\middle|\, \overline{\Hcal}_j, \overline{U}_{j-1} = 1, \overline{C}^{(d)}_j = 1 \right).\label{eq:pi_U}
\end{gather}
For $j = 0,\dots,k$, $\pi^{C^{(d)}}_j$ and $\pi^U_j$ are the propensity scores for compatibility and being uncensored, respectively. By convention, we set $\pi^{C^{(d)}}_j = \pi^U_j = 1$ if $\overline{Y}_{j} \neq 0$ or $\overline{Z}_{j} \neq 0$. Here and throughout, we adopt the conventions $C^{(d)}_{-1} = U_{-1} = 1$ and $W^{(d)}_{j} = 1$ for $j < 0$, so that $\pi^{C^{(d)}}_0 = \Pr(A_0 = d_0 \mid \Hcal_0)$ and $\pi^U_0$ are well defined. The compatibility product in \eqref{eq:ipcw} begins at $j = 0$ because the numerator $C^{(d)}_k$ censors individuals whose baseline treatment $A_0$ deviates from $d_0$; reweighting by $\pi^{C^{(d)}}_0$ corrects for this baseline artificial censoring and gives $\E\big[W^{(d)}_0\big] = 1$, a property used repeatedly in the proofs. The following theorem formally justifies the use of the IPC weights in \eqref{eq:ipcw} to identify the cumulative incidence function.

\begin{theorem}[Lemma 1 of \cite{morzywolek_estimation_2022}]
\label{theorem:weighting_identification}
Under Assumptions \ref{sequential-ignorability}, \ref{positivity}, and \ref{consistency}, 
\begin{gather*}\allowdisplaybreaks
    {h}_k^{(Y)}(d) = \frac{\E\left[Y_{k}(1-Z_{k})(1-Y_{k-1})W^{(d)}_{k-1}\right]}{\E\left[(1-Z_{k})(1-Y_{k-1})W^{(d)}_{k-1}\right]}, \text{ and}\\
    {h}_k^{(Z)}(d) = \frac{\E\left[Z_{k}(1-Y_{k-1})(1-Z_{k-1})W^{(d)}_{k-1}\right]}{\E\left[(1-Y_{k-1})(1-Z_{k-1})W^{(d)}_{k-1}\right]}.
\end{gather*}
\end{theorem}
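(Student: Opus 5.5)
We identify numerator and denominator separately as counterfactual joint probabilities and then take their ratio. By definition \eqref{eq:h1},
\[
h_k^{(Y)}(d)=\frac{\Pr(Y_k^d=1,Z_k^d=0,Y_{k-1}^d=0)}{\Pr(Z_k^d=0,Y_{k-1}^d=0)}.
\]
It therefore suffices to show the following: for any bounded function $g$ of the ``event block'' $(\overline{Y}_k,\overline{Z}_k)$ at time $k$ that vanishes unless $\overline{Y}_{k-1}=\overline{Z}_{k-1}=0$,
\[
\E\bigl[g(\overline{Y}_k,\overline{Z}_k)W^{(d)}_{k-1}\bigr]=\E\bigl[g(\overline{Y}^d_k,\overline{Z}^d_k)\bigr].
\]
Monotonicity of $Y$ and $Z$ guarantees that the integrands in the statement have this form. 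The choice $g=Y_k(1-Z_k)(1-Y_{k-1})$ gives the numerator for $h^{(Y)}$, and $g=(1-Z_k)(1-Y_{k-1})$ gives the denominator. The analogous choices with $Z_k(1-Y_{k-1})(1-Z_{k-1})$ and $(1-Y_{k-1})(1-Z_{k-1})$ handle $h^{(Z)}$. Positivity (Assumption \ref{positivity}) ensures the weights are finite and the denominators positive.

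We prove the weighted identity by backward induction, peeling off one time point at a time. Write $W^{(d)}_{k-1}=W^{(d)}_{k-2}\cdot C^{(d)}_{k-1}U_{k-1}/(\pi^{C^{(d)}}_{k-1}\pi^U_{k-1})$, using that $C^{(d)}_{k-1}=C^{(d)}_{k-2}\mathbbm{1}(A_{k-1}=d_{k-1})$ and that $U$ is monotone.

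The first step is to condition on $(\overline{\Hcal}_{k-1},\overline{A}_{k-1})$ together with being uncensored and compatible through $k-2$. On the event $C^{(d)}_{k-1}U_{k-1}=1$, consistency (Assumption \ref{consistency}) lets us replace $(Y_k,Z_k)$ by the potential outcomes under $\overline{A}_{k-1}=\overline{d}_{k-1}$, that is, by $(Y_k^d,Z_k^d)$ evaluated at the observed history.

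The second step applies the censoring part of Assumption \ref{sequential-ignorability}, which lets us drop $U_{k-1}$ from the conditioning. The factor $\E[U_{k-1}\mid\cdot]/\pi^U_{k-1}=1$ then cancels. The treatment part of Assumption \ref{sequential-ignorability} similarly removes $\mathbbm{1}(A_{k-1}=d_{k-1})/\pi^{C^{(d)}}_{k-1}$. When an event has already occurred at $k-1$, $g$ is zero anyway, consistent with the convention that $\pi=1$ there.

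After these two steps we are left with $\E\bigl[W^{(d)}_{k-2}\,\tilde g(\overline{\Hcal}_{k-1})\bigr]$, where $\tilde g$ is a conditional expectation of a function of the counterfactual future given the history. We iterate this argument down to $j=0$, where $\E[W^{(d)}_0\,\cdot]$ removes baseline artificial censoring. The result is a g-formula-type iterated expectation, which collapses to $\E[g(\overline{Y}^d_k,\overline{Z}^d_k)]$ by the tower property.

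The main obstacle is the bookkeeping in this induction. The function being integrated at stage $j$ is a function of the counterfactual future $(\underline{Y}^d_{j+1},\underline{Z}^d_{j+1})$ and, implicitly, of $\underline{X}^d$. Dropping $A_j$ from the conditioning therefore requires ignorability of $A_j$ jointly with respect to the future covariates, not just the future events. The stated first part of Assumption \ref{sequential-ignorability} omits $X^d$. I would handle this in one of two ways. The first option is to read the assumption as the standard sequential randomization, which includes $\underline{X}^d$ as in its censoring part, following Lemma 1 of \citet{morzywolek_estimation_2022}. The second option is to structure the induction so that at each step the conditional expectation is expressed through the observed-data regressions $Q_j=\E[\,\cdot\mid\overline{\Hcal}_j,\overline{A}_j=\overline{d}_j,\overline{U}_j=1]$. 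We would then prove that both sides equal the same iterated g-formula, establishing the counterfactual side via consistency and ignorability stage by stage.
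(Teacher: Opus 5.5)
The paper gives no proof of this statement. It imports the result from Lemma 1 of \citet{morzywolek_estimation_2022} and uses it as a black box: it is the input to Lemma~\ref{lemma:young_lemma4}, and hence to Theorem~\ref{theorem:g-formula_identification}. Your backward-induction IPW argument is therefore a genuinely different, self-contained route, and it is sound once the obstacle you flag is handled as described below.

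It also proves more than is stated. Take your identity $\E[g(\overline{Y}_k,\overline{Z}_k)W^{(d)}_{k-1}]=\E[g(\overline{Y}^d_k,\overline{Z}^d_k)]$ with $g=Y_k(1-Z_k)(1-Y_{k-1})$. It gives $\psi_k=\Pr(Y^d_k=1,Z^d_k=0,Y^d_{k-1}=0)=\E[Y_k(1-Z_k)(1-Y_{k-1})W^{(d)}_{k-1}]$ in one step. The paper reaches this only through the telescoping in Lemmas~\ref{lemma:young_lemma4}--\ref{lemma:young_lemma5}. With your identity, Theorem~\ref{theorem:g-formula_identification} would follow from it plus the final integration argument of that proof (or Lemma~\ref{lemma:wt_collapse}), with no need for those two lemmas. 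Mechanically, your peeling step is the counterfactual analogue of Lemma~\ref{lemma:wt_reduction}. Assumption~\ref{sequential-ignorability} does the factorization there that the definition of the propensities, (P3), does in the paper.

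The obstacle you flag is not real, so you need neither fix. Option (a) should be discarded in any case, since it proves the result under a stronger hypothesis than the one stated. The obstacle arises only because you replace the integrand by the regression $\tilde g$ on the observed history, which depends on $X_{k-1}$, realized after $A_{k-2}$.

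Instead, carry the counterfactual integrand itself through every stage. Since $W^{(d)}_{k-2}$ is a function of the history that $\tilde g$ conditions on, the tower property gives $\E[W^{(d)}_{k-2}\tilde g]=\E[W^{(d)}_{k-2}\,g(\overline{Y}^d_k,\overline{Z}^d_k)]$.

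At stage $j$, on $\{W^{(d)}_{j-1}>0\}$, consistency gives $(\overline{Y}_j,\overline{Z}_j)=(\overline{Y}^d_j,\overline{Z}^d_j)$. So you can insert $\mathbbm{1}(\overline{Y}_j=0,\overline{Z}_j=0)$ at no cost. On that event, $g$ is a function of $(\underline{Y}^d_{j+1},\underline{Z}^d_{j+1})$ alone, which is exactly what both parts of Assumption~\ref{sequential-ignorability} cover; $\underline{X}^d$ never enters. This is the familiar fact that IPW and g-formula identification need sequential ignorability only for the outcome counterfactuals. Your option (b) works for the same reason, but it is a detour.

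Two small points remain.
\begin{itemize}
\item Your first step uses consistency in the form ``$U_{k-1}=1$ and $C^{(d)}_{k-1}=1$ imply $(Y_k,Z_k)=(Y^d_k,Z^d_k)$''. This is the natural reading of Assumption~\ref{consistency}, given that $U_{k-1}$ precedes $(Z_k,Y_k)$, but the displayed assumption is literally indexed by $U_k$.
\item Positivity does not make the denominators positive. Their positivity, e.g.\ $\Pr(Z^d_k=0,Y^d_{k-1}=0)>0$, is already presupposed for the hazards to be defined at all.
\end{itemize}
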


\noindent{In the proof of their Lemma 1, \citet{morzywolek_estimation_2022}} argue that, provided the propensity score models are correctly specified, the following IPC weighted estimators for \eqref{eq:h1} and \eqref{eq:h2} consistently estimate the respective expectation ratios presented in Theorem \ref{theorem:weighting_identification}:
\begin{gather}
    \hat{h}_k^{(Y)}(d) = \frac{\sum_{i = 1}^n Y_{i,k}(1-Z_{i,k})(1-Y_{i,k-1})\hat{W}^{(d)}_{i,k-1}}{\sum_{i = 1}^n (1-Z_{i,k})(1-Y_{i,k-1})\hat{W}^{(d)}_{i,k-1}}, \text{ and}\label{eq:h1_hat}\\ 
    \hat{h}_k^{(Z)}(d) = \frac{\sum_{i = 1}^n Z_{i,k}(1-Y_{i,k-1})(1-Z_{i,k-1})\hat{W}^{(d)}_{i,k-1}}{\sum_{i = 1}^n(1-Y_{i,k-1})(1-Z_{i,k-1})\hat{W}^{(d)}_{i,k-1}}.\label{eq:h2_hat}
\end{gather}
In Section 4.4 of their main text, the authors define the following IPC weighted Aalen-Johansen estimator for the cumulative incidence:
\begin{equation}
    \hat\Pr\left(Y_k^d = 1\right) = \sum_{t=1}^{k} \left[ \hat{h}_t^{(Y)}(d) \left\{1 - \hat{h}_t^{(Z)}(d)\right\} \times \prod_{j=1}^{t-1}\left\{1 - \hat{h}_j^{(Y)}(d)\right\}\left\{1 - \hat{h}_j^{(Z)}(d)\right\}\right].
    \label{eq:ipcw_cumu_inc}
\end{equation}
From \eqref{eq:ipcw_cumu_inc}, we can see that the cumulative incidence at time $k$ can be estimated by substituting \eqref{eq:h1_hat} and \eqref{eq:h2_hat} into \eqref{eq:cumu_inc} and summing across all time points through time $k$. 

\subsubsection{G-formula Identification and Estimation}\label{sec:g-formula_identification}

Alongside IPC weighting, the g-formula, first proposed by \cite{robins_new_1986}, provides an alternative identification approach that expresses the cumulative incidence function as a function of the observed data by modeling the outcome and covariate distributions. To formalize this alternative identification strategy, we first define the expectations: 
\begin{gather}
    \mu_{Y_k}(\overline{x}_{k-1}) \equiv \E \left[Y_k \:\big|\: \overline{Z}_k = 0, \overline{Y}_{k-1} = 0, \overline{U}_{k-1} = 1, \overline{C}_{k-1} = 1, \overline{X}_{k-1} = \overline{x}_{k-1} \right],\text{ and}\label{eq:mean_of_Y_k}\\
    \nu_{1- Z_k}(\overline{x}_{k-1}) \equiv \E \left[ 1- Z_k\:\big|\: \overline{Y}_{k-1} = 0, \overline{Z}_{k-1} = 0, \overline{U}_{k-1} = 1, \overline{C}_{k-1} = 1, \overline{X}_{k-1} = \overline{x}_{k-1}\right].\label{eq:mean_of_1-Z_k}
\end{gather}
We introduce vectors $G^{(k)} = \left\{G_1^{(k)}, G_2^{(k)}, ..., G_k^{(k)},  G_{k+1}^{(k)}\right\}$ and $R^{(k)} = \left\{R_1^{(k)}, R_2^{(k)}, ..., R_{k-1}^{(k)}\right\}$ to store iterated expectation terms, as defined below. Following the recursive format used by \citet{diaz_causal_2025} in the identification step of the LMTP framework, we let $G_{k+1}^{(k)} = 1$. We further define $G_k^{(k)} \equiv \mu_{Y_k}(\overline{x}_{k-1})\; \nu_{1 - Z_k}(\overline{x}_{k-1})$, and specify $R_j^{(k)} \equiv \mu_{1 - Y_j}(\overline{x}_{j-1})\; \nu_{1 - Z_j}(\overline{x}_{j-1})$. Let $\overline{\Ocal}_k \equiv \left\{ \overline{Y}_k = 0, \overline{Z}_k = 0, \overline{U}_{k-1} = 1, \overline{C}_{k-1} = 1, \overline{X}_{k-1} \right\}$ be the event representing being in the risk set and not having experienced the event of interest nor the competing event prior to time $k$. For $t = k-1, k-2, ..., 1$, we recursively define: 
\begin{equation}\label{eq:G_recursive}
    G_t^{(k)} = \E \left[ R_t^{(k)} \times G_{t+1}^{(k)}\: \big| \: \overline{\Ocal}_t \right].
\end{equation}
\noindent{When} fully iterated to include all data observed up to time $k$, $G_1^{(k)}$ is equal to:
\begin{equation}\label{eq:g-formula}
    \E \left[ \E \left[ \cdots \E \left[ \mu_{Y_k}(\overline{X}_{k-1})\; \nu_{1 - Z_k}(\overline{X}_{k-1})\; \prod_{j=1}^{k-1} \mu_{1 - Y_{j}}(\overline{X}_{j-1})\; \nu_{1 - Z_{j}}(\overline{X}_{j-1})\: \big| \: \overline{\Ocal}_{k-1} \right] \cdots \big|\; \overline{\Ocal}_{2}\right] \; \Big|\; \overline{\Ocal}_{1} \right].
\end{equation}

The following result, adapted from \cite{young_causal_2020}, shows that, under standard identification assumptions, the g-formula expression in (\ref{eq:g-formula}) is equal to the cumulative incidence.
\begin{theorem}\label{theorem:g-formula_identification}
Under Assumptions \ref{sequential-ignorability}, \ref{positivity}, and \ref{consistency}, the incidence at time $k$ is identified by the g-formula:
\begin{equation*}
    h_k^{(Y)}(d) \left\{1 - h_k^{(Z)}(d)\right\} \times \prod_{j=1}^{k-1}\left\{1 - h_j^{(Y)}(d)\right\}\left\{1 - h_j^{(Z)}(d)\right\} = \E \left[G_1^{(k)}\right].
\end{equation*}
\end{theorem}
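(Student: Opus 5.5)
We plan to prove the identity by induction along the time index: a causal factorization of the left-hand side on one end, the nested conditional expectations in \eqref{eq:g-formula} on the other, and Assumptions \ref{sequential-ignorability}--\ref{consistency} used to match them one time step at a time.

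\textbf{Step 1: rewrite the left-hand side.} Using definitions \eqref{eq:h1}--\eqref{eq:h2} and the monotonicity of $Y^d$ and $Z^d$, the product of conditional hazards telescopes:
\begin{equation*}
h_k^{(Y)}(d)\{1-h_k^{(Z)}(d)\}\prod_{j=1}^{k-1}\{1-h_j^{(Y)}(d)\}\{1-h_j^{(Z)}(d)\}=\Pr\left(Y^d_k=1,\ Z^d_k=0,\ \overline{Y}^d_{k-1}=0,\ \overline{Z}^d_{k-1}=0\right).
\end{equation*}
This holds because each factor is the probability of the next event in the ordering $Z_j,Y_j$, conditional on survival through the preceding ones. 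The target is therefore the counterfactual probability that the event of interest occurs first at time $k$.

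\textbf{Step 2: write this probability as an iterated expectation.} We express it over the counterfactual covariate history $\overline{X}^d_{k-1}$, conditioning successively at each $t$ on $\overline{X}^d_{t-1}$ and on survival through $t$. This gives the counterfactual analogue of \eqref{eq:g-formula}: nested expectations of products of counterfactual one-step quantities $\Pr(Z^d_t=0\mid\cdot)$ and $\Pr(Y^d_t=0\mid\cdot)$ (or $=1$ at $t=k$).

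\textbf{Step 3: replace each counterfactual conditional law by its observed counterpart.} At each $t$ we use the first part of Assumption \ref{sequential-ignorability} to add the conditioning event $A_{t-1}=d_{t-1}$, i.e.\ $C_{t-1}=1$. We use the second part to add $U_{t-1}=1$. Assumption \ref{positivity} guarantees these conditioning events have positive probability. Assumption \ref{consistency} then replaces $(Y^d_t,Z^d_t,X^d_t)$ by the observed $(Y_t,Z_t,X_t)$ on the event $\{\overline{C}_{t-1}=1,\overline{U}_{t-1}=1\}$, where the observed treatment history equals $\overline{d}_{t-1}$.

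This step needs an induction: assume the conditional laws of $(\overline{X}^d_{t-1},\overline{Y}^d_{t-1},\overline{Z}^d_{t-1})$ on the survival events agree with the observed ones given $\overline{\mathcal O}_{t-1}$; then show the law of $(Z^d_t, Y^d_t, X^d_t)$ given the past agrees with that of $(Z_t,Y_t,X_t)$ given $\overline{\mathcal O}_t$. Concretely, the innermost step converts $\Pr(Z^d_k=0\mid\cdot)$ into $\nu_{1-Z_k}$ of \eqref{eq:mean_of_1-Z_k}, and $\Pr(Y^d_k=1\mid Z^d_k=0,\cdot)$ into $\mu_{Y_k}$ of \eqref{eq:mean_of_Y_k}, so the inner product is $G_k^{(k)}$. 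Moving outward, each factor becomes $R_t^{(k)}$, and the outer integration over $X_{t}$ given the past becomes the conditional expectation in \eqref{eq:G_recursive}. This yields $G_t^{(k)}$ by backward induction from $t=k$ to $t=1$, and finally $\E[G_1^{(k)}]$ at baseline (with $Y_0=Z_0=0$ and $C_{-1}=U_{-1}=1$).

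\textbf{Main obstacle.} We expect the difficulty to lie in Step 3. The independence statements in Assumption \ref{sequential-ignorability} concern \emph{future} potential outcomes given the observed history. They must be chained so that conditioning on $C_{t-1}=1$ and $U_{t-1}=1$ can be inserted while the conditioning on $\overline{X}_{t-1}$ remains observed rather than counterfactual. Because $d_t$ depends on $\overline{\mathcal H}_t$, the event $\{A_t=d_t\}$ is history-dependent. We would first handle this by the standard argument of \citet{young_causal_2020}: fix $\overline{x}_{t-1}$, note that on $\overline{C}_{t-1}=1$ the regime's treatments are deterministic functions of the observed history, and apply ignorability pointwise. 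The ordering within $O_t$ (with $Z_t$ before $Y_t$) must also be tracked carefully so that $\mu_{Y_k}$ conditions on $\overline{Z}_k=0$ while $\nu$ conditions only on $\overline{Z}_{k-1}=0$, exactly as in the hazard definitions.
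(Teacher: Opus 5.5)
Your Step 1 is correct, but Steps 2--3 have a genuine gap. You first write a purely counterfactual g-formula that integrates over the counterfactual covariate history $\overline{X}^d_{k-1}$. You then identify it factor by factor, with the induction hypothesis that the law of $(Z^d_t,Y^d_t,X^d_t)$ given the counterfactual past equals the law of $(Z_t,Y_t,X_t)$ given $\overline{\mathcal O}_t$. For the covariate factors this needs the counterfactual covariates to be exchangeable with respect to treatment. Assumption~\ref{sequential-ignorability} does not give you that: its first part only makes $(\underline{Y}^d_{k+1},\underline{Z}^d_{k+1})$ independent of $A_k$, and only the censoring part includes $\underline{X}^d_{k+1}$.

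The hypothesis can in fact fail. Let an unmeasured $V\sim\mathrm{Bern}(1/2)$ set $A_0=V$ and $X_1^{a}=V$ for every $a$. Let $(Y_j,Z_j)$ be pure noise and $A_1$ a fair coin, and take $U\equiv 1$ and $d\equiv 0$. Assumptions~\ref{sequential-ignorability}--\ref{consistency} all hold. Yet $X^d_1\sim\mathrm{Bern}(1/2)$, while $X_1$ given $\overline{\mathcal O}_1$ (which contains $C_0=1$, i.e.\ $A_0=0$) is degenerate at $0$. The theorem's conclusion still holds here, because only the integrals must agree, not the individual factors. The same problem defeats your ``apply ignorability pointwise'' remedy. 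Assumption~\ref{sequential-ignorability} conditions on the \emph{observed} history together with $\overline{C}_{k-1}=1$ and $\overline{U}_{k-1}=1$. It therefore cannot be invoked inside conditional expectations given $\overline{X}^d_{t-1}$ whose outer integrating measures have not been restricted to compatible, uncensored histories.

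The repair is to change what you induct on, so that covariates are only ever integrated against observed laws. Let $E^d=\{Y^d_k=1,\,Z^d_k=0,\,Y^d_{k-1}=0\}$, which is your Step 1 event by monotonicity. Prove by backward induction on $t=k,\dots,1$ that
\[
\Pr\bigl(E^d \,\big|\, \overline{X}_{t-1},\ \overline{Y}_{t-1}=0,\ \overline{Z}_{t-1}=0,\ \overline{C}_{t-2}=1,\ \overline{U}_{t-2}=1\bigr)=G^{(k)}_t .
\]
Each step proceeds as follows:
\begin{enumerate}
\item On the conditioning event, consistency already fixes the earlier counterfactual indicators at zero, so $E^d$ reduces to an event in $(\underline{Y}^d_t,\underline{Z}^d_t)$.
\item The first part of Assumption~\ref{sequential-ignorability} at index $t-1$ then inserts $C_{t-1}=1$; its conditioning set is exactly the displayed one. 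The second part inserts $U_{t-1}=1$.
\item Consistency swaps $(Z^d_t,Y^d_t,X^d_t)$ for $(Z_t,Y_t,X_t)$.
\item You peel off $\nu_{1-Z_t}\mu_{1-Y_t}$, or $\nu_{1-Z_k}\mu_{Y_k}$ when $t=k$, and tower over the observed $X_t$ given $\overline{\mathcal O}_t$. The inner probability is then the case $t+1$.
\end{enumerate}
At $t=1$ the conditioning reduces to $X_0$, and taking expectations gives $\E[G^{(k)}_1]$. Alternatively, you could strengthen the treatment part of Assumption~\ref{sequential-ignorability} to include $\underline{X}^d_{k+1}$, as the censoring part already does. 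Your factorwise route is then the classical derivation, but that is not the theorem as stated.

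For comparison, the paper sidesteps the issue entirely:
\begin{itemize}
\item It takes the hazard-level IPW identification of Theorem~\ref{theorem:weighting_identification} as given.
\item It telescopes the weighted ratios (Lemmas~\ref{lemma:young_lemma4}--\ref{lemma:young_lemma5}) down to $\E[Y_k(1-Z_k)(1-Y_{k-1})W_{k-1}]$.
\item It integrates the weights out to reach the nested regressions.
\end{itemize}
All causal reasoning is thus delegated to that cited result. Your corrected direct argument would be self-contained and would use exactly the exchangeability in Assumption~\ref{sequential-ignorability}.
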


\noindent{The} proof of Theorem \ref{theorem:g-formula_identification} is given in the Supplementary Material (Section \ref{sec:identification}). We can estimate the \textit{cumulative} incidence at time $k$ by calculating and summing \eqref{eq:g-formula} for all time points through time $k$, represented in the following corollary. 

\begin{corollary}
    Under the same assumptions used in Theorem \ref{theorem:g-formula_identification}, the cumulative incidence at time $k$ is identified by the g-formula:
\begin{equation*}
    \sum_{t=1}^k \left[ h_t^{(Y)}(d) \left\{1 - h_t^{(Z)}(d)\right\} \times \prod_{j=1}^{t-1}\left\{1 - h_j^{(Y)}(d)\right\}\left\{1 - h_j^{(Z)}(d)\right\} \right] = \sum_{t=1}^k \E\left[G_1^{(t)}\right].
\end{equation*}
\end{corollary}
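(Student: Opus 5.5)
The plan is to obtain the corollary directly from Theorem \ref{theorem:g-formula_identification}, applied once for each time index $t = 1, \dots, k$, followed by summation. The only structural ingredient is the hazard decomposition \eqref{eq:cumu_inc} of $\Pr(Y_k^d = 1)$. The left-hand side of the corollary is precisely that finite sum over $t$ of the incidence at time $t$.

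First, fix an arbitrary $t \in \{1, \dots, k\}$ and check that Theorem \ref{theorem:g-formula_identification} applies with $k$ replaced by $t$. The theorem is stated for a generic time index. The vectors $G^{(t)}$ and $R^{(t)}$ are defined for each horizon through the recursion \eqref{eq:G_recursive}, with $G_{t+1}^{(t)} = 1$ and $G_t^{(t)} = \mu_{Y_t}\,\nu_{1-Z_t}$. Assumptions \ref{sequential-ignorability}--\ref{consistency} are imposed for all time points $0, \dots, K$, so they hold in particular up to time $t \le k \le K$. The theorem then gives
\[
h_t^{(Y)}(d)\left\{1-h_t^{(Z)}(d)\right\}\prod_{j=1}^{t-1}\left\{1-h_j^{(Y)}(d)\right\}\left\{1-h_j^{(Z)}(d)\right\} = \E\left[G_1^{(t)}\right].
\]
For $t = 1$ the product is empty and $G_1^{(1)} = \mu_{Y_1}\nu_{1-Z_1}$, so this boundary case is covered by the same statement. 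I would note this case explicitly.

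Second, sum these $k$ identities over $t = 1, \dots, k$. Finite sums preserve equality, which yields the displayed equation. Combined with \eqref{eq:cumu_inc}, this also identifies $\Pr(Y_k^d = 1)$ itself.

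There is no real obstacle here. The only point requiring care is bookkeeping: each summand uses its own horizon-specific recursion $G^{(t)}$, not truncations of $G^{(k)}$, so the terms must not be conflated. Positivity ensures that the conditional expectations in every recursion are well defined on the relevant risk sets.
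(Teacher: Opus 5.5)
Your proposal is correct and matches the paper's argument: the corollary follows by applying Theorem \ref{theorem:g-formula_identification} at each horizon $t = 1,\dots,k$, each with its own recursion $G^{(t)}$, and summing the resulting identities. The paper states this without a separate proof, so your explicit check of the $t=1$ case and of applicability at every horizon $t \le K$ is a harmless addition.
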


\noindent{Our} g-formula estimator for the incidence at time $k$ is then given by:
\begin{equation}\label{eq:g-formula_est}
\begin{split}
    \hat{G}_1^{(k)} = \frac{1}{n} \sum_{i=1}^n \hat{\E} \bigg[ \hat\E \bigg[ \cdots &\hat\E \bigg[ \hat{\mu}_{Y_k}(\overline{X}_{i,k-1})\; \hat{\nu}_{1 - Z_k}(\overline{X}_{i,k-1})\\ 
    &\, \quad \times \prod_{j=1}^{k-1} \hat{\mu}_{1 - Y_{j}}(\overline{X}_{i,j-1})\; \hat{\nu}_{1 - Z_{j}}(\overline{X}_{i,j-1})\: \big| \: \overline{\Ocal}_{i,k-1} \bigg] \cdots \big|\; \overline{\Ocal}_{i,2}\bigg] \big|\; \overline{\Ocal}_{i,1}\bigg].
\end{split}
\end{equation}
Consistency of \eqref{eq:g-formula_est} for the cumulative incidence relies on correctly specified outcome models \eqref{eq:mean_of_Y_k} and \eqref{eq:mean_of_1-Z_k}, which are estimated using regression techniques \citep{young_causal_2020}. The representations of the cumulative incidence function via IPC weighting and the g-formula naturally motivate estimation strategies that draw on elements of both approaches. 

\subsection{Nonparametric Estimation}
\label{sec:nonparametric_efficiency_theory}

\subsubsection{Efficient Influence Function}
To construct an estimator with desired asymptotic properties, we begin by deriving the efficient influence function (EIF) for the cumulative incidence function within a nonparametric framework. The EIF characterizes the most efficient regular, asymptotically linear estimator by characterizing the sensitivity of a parameter to perturbations in the underlying data distribution. Using the EIF, we can construct estimators that are consistent, asymptotically normal, and achieve the nonparametric efficiency bound under standard regularity conditions \citep{tsiatis_semiparametric_2006, kennedy_semiparametric_2023}. We denote the incidence at time $k$ under a regime $d$ as $\psi_k$:
\begin{equation}\label{eq:incidence}
    \psi_k \equiv h_k^{(Y)}(d) \left\{1 - h_k^{(Z)}(d)\right\} \times \prod_{j=1}^{k-1} \left\{1 - h_j^{(Y)}(d)\right\}\left\{1 - h_j^{(Z)}(d)\right\}.
\end{equation}
Let $\varphi_k$ represent the EIF for $\psi_k$. 

\begin{theorem}\label{theorem:EIF}
    Under Assumptions \ref{sequential-ignorability}, \ref{positivity}, and \ref{consistency}, the EIF for $\psi_k$, $k = 1,\dots,K$, is given by:
\begin{equation}
    \begin{split}\label{eq:EIF}
    \varphi_k &= W_{k-1} \left\{ \prod_{j=0}^{k-1} (1-Y_j)(1-Z_j) \right\} \Big\{ Y_k (1-Z_k) - G_k^{(k)} \Big\}\\
    &\qquad + \sum_{t = 1}^{k-1} W_{t-1} \left\{ \prod_{j=0}^{t-1} (1-Y_j)(1-Z_j) \right\} \Big\{ (1 - Y_t) (1-Z_t)G_{t+1}^{(k)} - G_t^{(k)} \Big\}\\
    &\qquad + G_1^{(k)} - \psi_k.
\end{split}
\end{equation}
\end{theorem}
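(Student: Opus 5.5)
By Theorem \ref{theorem:g-formula_identification}, $\psi_k = \E[G_1^{(k)}]$. Under Assumptions \ref{sequential-ignorability}--\ref{consistency}, $\psi_k$ is therefore a functional of the observed-data law $P$ alone, namely an iterated sequential regression. We work in the nonparametric model, so the tangent space is saturated and the EIF is the unique influence function. We compute it as the Gateaux derivative of $P \mapsto \E_P[G_1^{(k)}(P)]$ along a regular submodel $P_\epsilon$ with score $s(O)$. The target identity is $\tfrac{d}{d\epsilon}\psi_k(P_\epsilon)|_{\epsilon=0} = \E[\varphi_k\, s(O)]$. Alternatively, and more cleanly, we treat the observations as point masses and apply the chain rule of derivative operators $\mathbb{IF}(\cdot)$ used in \citet{kennedy_semiparametric_2023}.

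First we write $\psi_k$ as a composition. Set $\psi_k = \E[G_1^{(k)}]$, with $G_t^{(k)} = \E[R_t^{(k)} G_{t+1}^{(k)} \mid \overline{\Ocal}_t]$. Here $R_t^{(k)}$ and $G_k^{(k)}$ are themselves conditional means of $(1-Y_t)(1-Z_t)$ or $Y_k(1-Z_k)$ given the risk-set history. So $\mu$ and $\nu$ combine into the single regression of the product indicator, because $\E[(1-Y_t)(1-Z_t)\mid\cdot] = \mu_{1-Y_t}\nu_{1-Z_t}$ by the ordering $Z_t$ before $Y_t$.

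The basic building block is the influence function of a conditional mean restricted to a risk set. For a functional of the form $\E[\,\E[V \mid \overline{\Ocal}_t]\, f]$, the influence function of the inner regression contributes
\[
\frac{\mathbbm{1}(\overline{\Ocal}_t)}{\Pr(\overline{\Ocal}_t \mid \text{past})}\{V - \E[V\mid \overline{\Ocal}_t]\}.
\]
The denominators are precisely $\prod_j \pi^{C^{(d)}}_j \pi^U_j$. After multiplying by the indicators of staying event-free, this factor becomes $W_{t-1}\prod_{j<t}(1-Y_j)(1-Z_j)$. The density ratios for the covariate transitions cancel between successive conditionings, as in the standard longitudinal g-formula derivation of \citet{bang_doubly_2005}.

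We then proceed by backward induction on $t$. The inductive claim is that the influence function of $\E[G_t^{(k)}]$, viewed as a functional of the law of $(O_{t-1},\dots,O_k)$ given the earlier history, equals the tail sum of the terms in \eqref{eq:EIF} from $t$ to $k$, plus $G_t^{(k)}$ minus its mean. The base case $t=k$ is the influence function of a weighted regression of $Y_k(1-Z_k)$, which yields the first line of \eqref{eq:EIF}. In the inductive step, differentiating $G_t^{(k)} = \E[R_t^{(k)}G_{t+1}^{(k)}\mid\overline{\Ocal}_t]$ produces two contributions. One comes from perturbing the outer conditional law, giving the term $W_{t-1}\prod(\cdot)\{(1-Y_t)(1-Z_t)G_{t+1}^{(k)} - G_t^{(k)}\}$. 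The other comes from perturbing $G_{t+1}^{(k)}$ itself, which is handled by the induction hypothesis, with the extra weight factor at time $t$ absorbed into $W$. Finally, the outermost expectation contributes $G_1^{(k)}-\psi_k$.

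The main obstacle is the weight bookkeeping at each step. We must verify that conditioning on $\overline{C}_{t-1}=1$ and $\overline{U}_{t-1}=1$ in $\overline{\Ocal}_t$, while the regime-specific treatment probabilities are not re-weighted inside the regressions, yields exactly $W_{t-1}$ rather than $W_t$. This depends on the temporal ordering $O_t = (Z_t, Y_t, X_t, A_t, U_t)$. It also requires the convention $\pi=1$ after an event, so that the indicators $(1-Y_j)(1-Z_j)$ correctly zero out contributions outside the risk set.

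As a final check, we verify that $\E[\varphi_k]=0$ by iterated expectations, and that the von Mises remainder $\psi_k(\bar P)-\psi_k(P)+\E_P[\varphi_k(\bar P)]$ is second order. The remainder is a sum of products of errors in $\pi$ and $G$, which confirms that $\varphi_k$ is indeed the influence function.
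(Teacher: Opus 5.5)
Your argument is correct in outline, but it takes a genuinely different route from the paper. The paper never derives $\varphi_k$; it verifies it. It writes $\psi_k$ as an integral along the regime-fixed path and differentiates, which gives the sum of the $Y_j$, $Z_j$ and $X_{j-1}$ scores evaluated at $\overline{a}_{j-1}=\overline{d}_{j-1}$, $\overline{u}_{j-1}=1$. It then computes $\E[\varphi_k\,\ell'_\varepsilon(O_k;0)]$ directly. Score centering removes most cross terms, the weight-reduction identity (Lemma \ref{lemma:wt_reduction}) makes the weighted terms telescope, and the weight-collapse identity (Lemma \ref{lemma:wt_collapse}) turns the one surviving term back into the path integral. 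You instead build $\varphi_k$ constructively, by backward induction over the sequential regressions. Your closing remainder check is exactly the content of the paper's Lemma \ref{lemma:T2_term}: $R_2=\sum_{t=1}^k b_t$ holds exactly, with each $b_t$ a product of a time-$(t-1)$ propensity error and a time-$t$ regression error. Your route explains where every weight comes from, and through that one identity it gives both pathwise differentiability and the sequential double robustness used in Theorem \ref{theorem:EIF_asymptotics}. The paper's route needs no heuristic step, but it requires tracking which of many scores survive.

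Two points need tightening.
\begin{enumerate}
\item With continuous $\overline{X}_{t-1}$, the point-mass chain rule is only heuristic, so the remainder step is the actual proof, not a sanity check. You need $R_2(P,P_\varepsilon)=O(\varepsilon^2)$ along regular submodels, which uses bounded inverse propensities (stronger than Assumption \ref{positivity}). That gives $\psi_k(P_\varepsilon)-\psi_k(P)=\int\varphi_k(\cdot\,;P)\,\mathrm{d}(P_\varepsilon-P)+o(\varepsilon)$.
\item Your building block is misstated. The paper's $\overline{\Ocal}_t$ includes survival through $t$, so $1/\Pr(\overline{\Ocal}_t\mid\text{past})$ would also contain survival probabilities. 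The correct weight is the ratio of the g-computation law to the observed law on the risk set $E_t$. In that ratio the survival and covariate-transition factors cancel, leaving only $\prod_{j<t}\pi^{C}_j\pi^U_j$ in the denominator. It is this cancellation, not multiplication by the event-free indicators, that produces $W_{t-1}\prod_{j<t}(1-Y_j)(1-Z_j)$.
\end{enumerate}
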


For $t=1,\dots,k$, the $G_t^{(k)}$ terms are defined in \eqref{eq:G_recursive}. Let $\psi_k^*$ be the \textit{cumulative} incidence at time $k$, i.e., $\psi_k^* = \sum_{t=1}^k \psi_t$, as defined in \eqref{eq:cumu_inc}. We can obtain the EIF for $\psi_k^*$ by calculating and summing \eqref{eq:EIF} for all time points through time $k$, represented in the following corollary. 

\begin{corollary}
     Under Assumptions \ref{sequential-ignorability}, \ref{positivity}, and \ref{consistency}, the EIF for $\psi_k^*$ is given by $ \varphi^*_k = \sum_{t=1}^k \varphi_t$.
\end{corollary}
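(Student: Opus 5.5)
The corollary follows from linearity of the pathwise derivative. By the definition stated just before the corollary, $\psi_k^* = \sum_{t=1}^k \psi_t$ is a finite sum of the incidence functionals $\psi_t$ in \eqref{eq:incidence}. Theorem \ref{theorem:EIF} gives the EIF $\varphi_t$ of each of these. So it suffices to show that a finite sum of pathwise differentiable functionals is pathwise differentiable, with influence function equal to the sum of the individual ones. Because the model is nonparametric, that influence function is automatically the efficient one.

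First I fix the setup. Take an arbitrary regular one-dimensional parametric submodel $\{P_\epsilon\}$ through the true distribution $P$, with score $s(O)$, where $O = (O_0,\dots,O_K)$. Theorem \ref{theorem:EIF} says that for each $t = 1,\dots,k$ the map $\epsilon \mapsto \psi_t(P_\epsilon)$ is differentiable at $\epsilon = 0$, with
\[
\frac{\partial}{\partial\epsilon}\psi_t(P_\epsilon)\Big|_{\epsilon=0} = \E\left[\varphi_t(O)\, s(O)\right],
\]
where $\E[\varphi_t] = 0$ and $\E[\varphi_t^2] < \infty$. The positivity condition (Assumption \ref{positivity}) keeps the weights $W_{t-1}$ bounded, and this is what guarantees the finite variance.

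Next I differentiate the sum. The derivative of a finite sum is the sum of the derivatives, so
\[
\frac{\partial}{\partial\epsilon}\psi_k^*(P_\epsilon)\Big|_{\epsilon=0} = \sum_{t=1}^k \E[\varphi_t s] = \E\Big[\Big(\textstyle\sum_{t=1}^k \varphi_t\Big) s\Big].
\]
The combined function $\varphi_k^* = \sum_{t=1}^k \varphi_t$ has mean zero. It also has finite variance, by Minkowski's inequality applied to the finitely many square-integrable terms. Hence $\varphi_k^*$ is a gradient of $\psi_k^*$.

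Finally I argue efficiency. In the nonparametric model the tangent space is all of $L_2^0(P)$, so the gradient is unique and therefore equals the EIF.

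The main point needing care is that all the $\varphi_t$ must be gradients with respect to the same observed-data model and the same submodels. This holds because each $\psi_t$ has a g-formula representation (Theorem \ref{theorem:g-formula_identification}) as a functional of the single observed-data law $P$ of $O$. With that settled, the rest is routine.
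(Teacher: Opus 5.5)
Your proposal is correct and follows the same route as the paper, which gives no separate proof and obtains $\varphi_k^*$ by summing the time-specific EIFs from Theorem \ref{theorem:EIF}, i.e., by linearity of the pathwise derivative in a nonparametric model. One small correction: Assumption \ref{positivity} only makes the propensities positive almost surely, not bounded away from zero, so it does not bound $W_{t-1}$; you should take square-integrability of each $\varphi_t$ from Theorem \ref{theorem:EIF} itself, or from an explicit condition such as the bound $|W_{t-1}|<M$ imposed in Theorem \ref{theorem:EIF_asymptotics}.
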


We demonstrate in the Supplementary Material (Section \ref{sec:supp-EIF}) that \eqref{eq:EIF} is the EIF for the incidence at $k$ by establishing that it is the pathwise derivative of \eqref{eq:incidence} \citep{kennedy_semiparametric_2016}.

\subsubsection{EIF-based Estimator and its Asymptotic Properties}
\label{sec:estimator_and_asymptotics}

The EIF for the incidence in (\ref{eq:EIF}) has mean zero, i.e., $\E [\varphi_k] = 0$, and includes the subtraction of $\psi_k$. This motivates a simple one-step estimator derived from the property $\E [\varphi_k + \psi_k] = \psi_k$. The EIF-based estimator for \eqref{eq:incidence} is thus given by the following sample average:
\begin{align}\label{eq:EIF-based_estimator}
    \hat{\psi_k} =\, &\frac{1}{n}\, \sum_{i = 1}^n \left[\hat{W}_{i,k-1} \left\{ \prod_{j=0}^{k-1} (1-Y_{i,j})(1-Z_{i,j}) \right\} \Big\{ Y_{i,k} (1-Z_{i,k}) - \hat{G}_{i,k}^{(k)} \Big\}\nonumber\right.\\
    &\qquad\qquad + \sum_{t = 1}^{k-1} \hat{W}_{i,t-1} \left\{ \prod_{j=0}^{t-1} (1-Y_{i,j})(1-Z_{i,j}) \right\} \Big\{ (1 - Y_{i,t}) (1-Z_{i,t})\hat{G}_{i,t+1}^{(k)} - \hat{G}_{i,t}^{(k)} \Big\}\nonumber\\
    &\qquad\qquad + \left.\vphantom{\prod_{j=0}^{k-1}}\hat{G}_{i,1}^{(k)}\right].
\end{align}
The weight terms $\hat{W}_{i,t}, \; t = 0,\dots,k-1,$ in \eqref{eq:EIF-based_estimator} are estimates of the true IPC weights in \eqref{eq:ipcw} and are obtained by estimating the propensity scores \eqref{eq:pi_C} and \eqref{eq:pi_U} at each time for each individual. Additionally, the $\hat{G}_{i,t}^{(k)}$ terms, $t=1,\dots,k$, are estimates of the respective quantities defined in \eqref{eq:G_recursive} and are obtained by estimating the outcome models \eqref{eq:mean_of_Y_k} and \eqref{eq:mean_of_1-Z_k}, then \eqref{eq:g-formula_est}. The EIF-based estimator for $\psi_k^*$ is given by $\hat\psi_k^* = \sum_{t=1}^k \hat{\psi_t}$. 

The estimator in \eqref{eq:EIF-based_estimator} achieves desirable theoretical large-sample properties, as summarized in Theorem \ref{theorem:EIF_asymptotics}. However, estimation of the nuisance functions (which increase in number as $k$ increases) requires further assumptions that limit the complexity and magnitude of the resulting estimators. Using the same sample both to estimate nuisance parameters and to evaluate the empirical mean of the influence function can lead to overfitting. One approach to control this is to assume that the function class involved is sufficiently ``simple,'' i.e., Donsker \citep{kennedy_semiparametric_2023}. We assume here that the nuisance functions are estimated from a separate, independent sample. In practice, sample splitting, or cross-fitting \citep{clark_transportability_2024,zeng_efficient_2024} is often used; our assumption of the use of an independent sample for estimation can be trivially extended to handle cross-fitting. This involves dividing the data into separate parts such that the nuisance parameters are estimated on one part of the sample and the mean of the influence function is computed on a different, independent part. By formally separating the two estimation steps, sample splitting breaks the dependence that leads to overfitting. This separation allows us weaken assumptions and only require consistency of the nuisance parameter estimates \citep{kennedy_semiparametric_2023, zeng_efficient_2024}.
\begin{theorem}\label{theorem:EIF_asymptotics}
\allowdisplaybreaks
    Suppose that all nuisance functions, consisting of the outcome models $\mu_{Y_k}$, {$\nu_{1-Z_k}$}, and $\left\{ \mu_{1 - Y_j}, \nu_{1 - Z_j} \right\}$ for $j = 1, \dots, k-1$, and the treatment and censoring propensity models $\left\{\pi^{C(d)}_j, \pi^U_j\right\}$ for {$j =0, \dots, k-1$}, are estimated from a distinct and independent sample. Furthermore, assume the following conditions hold:
    \begin{enumerate}
        \item For $t = 1, \dots, k$, $\left\|\hat{W}_{t-1} - {W}_{t-1}\right\| = o_p(1)$, and $\left\|\hat{G}_t^{(k)} - {G}_t^{(k)}\right\| = o_p(1)$, {where  $\|f\|^2 = \int f^2 \, \mathrm{d}\P$}.
        \item Letting $\Bcal_t = (1-Y_t)(1-Z_t)$ and $\Bcal^*_k = Y_k(1-Z_k)$, there exists an $M < \infty$ such that $\left|\Bcal_k^{*} - \hat{G}_k^{(k)}\right| < M$ w.p. 1; for $t=1,\dots,k$, $\left|W_{t-1}\right| < M$ w.p. 1; and, for $t=1,\dots,k-1$, $\left|\Bcal_t \hat{G}_{t+1}^{(k)} - \hat{G}_t^{(k)}\right| < M$ w.p. 1.
        \item There exists some $\epsilon > 0$ such that, for $t=1,\dots,k$, $\prod_{j=0}^{t-1}{\hat{\pi}_j^U \hat{\pi}_j^{C^{(d)}}} \geq \epsilon$ w.p. 1.
        \item For $t=1,\dots,k$, the nuisance function estimators satisfy the product rate condition:
        \begin{equation}\label{eq:sequential_rate_condition}
        \left\| G_t^{(k)} - \hat{G}_t^{(k)} \right\| \left\| \frac{1}{\hat{\pi}_{t-1}^U \hat{\pi}_{t-1}^C} - \frac{1}{\pi_{t-1}^U \pi_{t-1}^C} \right\| = o_p\left(\frac{1}{\sqrt{n}}\right).
    \end{equation}
    \end{enumerate}
    Then,
    \begin{equation*}
        \sqrt{n}\left(\hat\psi_k - \psi_k\right) = \frac{1}{\sqrt{n}}\sum_{i=1}^n \varphi_{i,k} + o_p(1),
    \end{equation*}
    where $\varphi_{i,k}$ is the $i$-th sample EIF of the incidence at time $k$ and, as a result, the EIF-based estimator in \eqref{eq:EIF-based_estimator} is $\sqrt{n}$-consistent, asymptotically normal, and its asymptotic variance, the variance of $\varphi_k$, achieves the non-parametric efficiency bound. 
\end{theorem}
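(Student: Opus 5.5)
The proof follows the standard one-step/cross-fitting decomposition (Kennedy 2023). Write $\P_n$ for the empirical measure on the evaluation sample and $\P$ for the true distribution. Let $\hat\varphi_k$ denote the EIF in \eqref{eq:EIF} with $(W_{t-1},G_t^{(k)})$ replaced by $(\hat W_{t-1},\hat G_t^{(k)})$, and with $\psi_k$ replaced by $\hat\psi_k$ (or equivalently with $\psi_k$ dropped). Then $\hat\psi_k=\P_n[\hat\varphi_k+\hat\psi_k]$, and we decompose
\begin{equation*}
\hat\psi_k-\psi_k=(\P_n-\P)\varphi_k+(\P_n-\P)(\hat\varphi_k-\varphi_k)+R_2(\hat\P,\P),
\end{equation*}
where $R_2=\P[\hat\varphi_k]+\hat\psi(\hat \P)-\psi_k$ is the remainder, with the plug-in understood through $\hat G_1^{(k)}$. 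The first term gives the stated linear expansion via the CLT, since $\varphi_k$ has mean zero and finite variance by the boundedness in condition 2. It therefore suffices to show that the other two terms are $o_p(n^{-1/2})$.

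\textbf{Empirical process term.} Conditional on the independent nuisance sample, $\hat\varphi_k-\varphi_k$ is a fixed function. Chebyshev's inequality then gives $(\P_n-\P)(\hat\varphi_k-\varphi_k)=O_p(\|\hat\varphi_k-\varphi_k\|/\sqrt n)$. Expand $\hat\varphi_k-\varphi_k$ term by term as
\begin{equation*}
(\hat W_{t-1}-W_{t-1})\{\Bcal_t\hat G_{t+1}^{(k)}-\hat G_t^{(k)}\}+W_{t-1}\{\Bcal_t(\hat G_{t+1}^{(k)}-G_{t+1}^{(k)})-(\hat G_t^{(k)}-G_t^{(k)})\},
\end{equation*}
with the analogous expression for the $t=k$ term, plus $\hat G_1^{(k)}-G_1^{(k)}$. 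The bounds in conditions 2 and 3, combined with the $L_2$ consistency in condition 1, give $\|\hat\varphi_k-\varphi_k\|=o_p(1)$, so this term is $o_p(n^{-1/2})$.

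\textbf{Remainder term.} This is the main step. The plan is to telescope over $t$ using the recursion \eqref{eq:G_recursive} and the weighting identity behind Theorem \ref{theorem:weighting_identification}.

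1. Condition on $\overline{\Ocal}_t$ together with $A_{t-1},U_{t-1}$, and use iterated expectations. The true $G$'s satisfy $\E[\Bcal_t G_{t+1}^{(k)}\mid \overline{\Ocal}_t,\dots]=G_t^{(k)}$ on the compatible, uncensored risk set.

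2. From this, $\P[W_{t-1}\prod_j(1-Y_j)(1-Z_j)\{\Bcal_t\hat G_{t+1}^{(k)}-\hat G_t^{(k)}\}]$ equals the true-weighted mean of $(\hat G_{t+1}^{(k)}-G_{t+1}^{(k)})$ at the next stage, minus the true-weighted mean of $(\hat G_t^{(k)}-G_t^{(k)})$ at stage $t$. Summed over $t$, these differences telescope against $\P[\hat G_1^{(k)}-G_1^{(k)}]$, using $\E[W_0]=1$ at the first stage.

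3. After the cancellation, what remains is
\begin{equation*}
R_2=\sum_{t=1}^k\P\Big[(\hat W_{t-1}-W_{t-1})\prod_{j<t}(1-Y_j)(1-Z_j)\,(G_t^{(k)}-\hat G_t^{(k)})\Big],
\end{equation*}
possibly up to sign. To obtain this form, take the expectation of the residual $\Bcal_t\hat G_{t+1}^{(k)}-\hat G_t^{(k)}$ given the past, which yields an error in $G_t^{(k)}$ weighted by $\hat W_{t-1}-W_{t-1}$.

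4. Write $\hat W_{t-1}-W_{t-1}$ as a telescoping sum of single-stage propensity errors, with bounded factors by conditions 2 and 3. Cauchy--Schwarz then bounds each summand by a constant times $\|G_t^{(k)}-\hat G_t^{(k)}\|\,\|1/(\hat\pi^U_{t-1}\hat\pi^C_{t-1})-1/(\pi^U_{t-1}\pi^C_{t-1})\|$, which is $o_p(n^{-1/2})$ by condition \eqref{eq:sequential_rate_condition}.

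I expect the bookkeeping in steps 2--3 to be the main obstacle: checking that the cross terms at mismatched stages cancel exactly. In particular, the conditional expectation of $\hat W_{t-1}/W_{t-1}$ given the history must reduce to a ratio of propensities, so that only the stage-$(t-1)$ propensity error pairs with the stage-$t$ regression error. If earlier-stage weight errors survive, I would absorb them with the boundedness in condition 3 and the cumulative weight consistency in condition 1.

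Finally, asymptotic normality follows from the CLT, and efficiency follows because $\varphi_k$ is the EIF by Theorem \ref{theorem:EIF}.
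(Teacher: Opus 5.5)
Your decomposition and the empirical-process argument match the paper's. The remainder step, however, has a genuine gap.

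\textbf{Where the argument breaks.} Your true-weight telescoping in step 2 correctly gives $R_2=\sum_t\P[(\hat W_{t-1}-W_{t-1})\Acal_t\,r_t]$, with $r_t=\Bcal_t\hat G^{(k)}_{t+1}-\hat G^{(k)}_t$ the estimated residual. In step 3 you then take the conditional mean of $r_t$ given the past to be $G^{(k)}_t-\hat G^{(k)}_t$. Because $\hat G^{(k)}_{t+1}$ is not the true regression, that mean is actually $(G^{(k)}_t-\hat G^{(k)}_t)-\E\big[\Bcal_t(G^{(k)}_{t+1}-\hat G^{(k)}_{t+1})\mid \text{past}\big]$. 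Keeping the second piece, the exact identity is
\begin{equation*}
R_2=\sum_{t=1}^{k}\P\Big[\big\{(\hat W_{t-1}-W_{t-1})-(\hat W_{t-2}-W_{t-2})\big\}\,\Acal_t\,\big(G^{(k)}_t-\hat G^{(k)}_t\big)\Big],\qquad \hat W_{-1}=W_{-1}=1.
\end{equation*}
Your displayed formula is therefore wrong for $k\ge 2$; for $k=2$ it omits $\P[(W_0-\hat W_0)\Acal_2(G^{(k)}_2-\hat G^{(k)}_2)]$.

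This is what breaks step 4. Telescoping the cumulative gap $\hat W_{t-1}-W_{t-1}$ pairs $G^{(k)}_t-\hat G^{(k)}_t$ with the propensity error at every stage $s\le t-1$. Condition 4 controls only $s=t-1$. Your fallback (boundedness from condition 3 plus $L_2$-consistency from condition 1) gives only $o_p(1)\cdot o_p(1)$ for the cross-stage products, not $o_p(n^{-1/2})$. You would need product-rate conditions that the theorem does not assume.

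\textbf{The missing step.} What is needed is a one-stage weight reduction, the paper's Lemma \ref{lemma:T2_intermediate_result}(i).
\begin{itemize}
\item $\Acal_t$ and $G^{(k)}_t-\hat G^{(k)}_t$ are functions of the history preceding $(A_{t-1},U_{t-1})$.
\item Condition first on everything before $U_{t-1}$, then on everything before $A_{t-1}$, and apply (P3). This replaces $U_{t-1}C_{t-1}$ by $\pi^U_{t-1}\pi^{C}_{t-1}$ inside the expectation.
\item Inside the expectation, $W_{t-1}$ becomes $W_{t-2}$, and $\hat W_{t-1}$ becomes $\hat W_{t-2}\rho_{t-1}$, where $\rho_{t-1}=\pi^U_{t-1}\pi^{C}_{t-1}/(\hat\pi^U_{t-1}\hat\pi^{C}_{t-1})$.
\item Hence $R_2=\sum_t\P[\hat W_{t-2}(\rho_{t-1}-1)\Acal_t(G^{(k)}_t-\hat G^{(k)}_t)]$, and only the stage-$(t-1)$ error survives.
\item That error is multiplied by $0\le\hat W_{t-2}\Acal_t\le 1/\epsilon$ (condition 3), and $|\rho_{t-1}-1|\le|1/(\hat\pi^U_{t-1}\hat\pi^{C}_{t-1})-1/(\pi^U_{t-1}\pi^{C}_{t-1})|$.
\item Cauchy--Schwarz then gives exactly condition 4.
\end{itemize}
Your obstacle paragraph points in this direction. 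The quantity to reduce, though, is the consecutive-stage increment of the estimated weights, $\hat W_{t-1}-\hat W_{t-2}$. The cumulative gap $\hat W_{t-1}-W_{t-1}$ does not isolate a single stage.
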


For inference, the asymptotic variance of the EIF-based estimator in \eqref{eq:EIF-based_estimator} is equal to the variance of the EIF in \eqref{eq:EIF}. In practice, this can be estimated by computing the sample variance of the estimated EIF, after centering it by its sample mean. This can be shown to be a consistent estimator of the variance if all nuisance parameter models are correctly specified \citep{shook-sa_double_2024}. Due to the linearity of the EIF, it is straightforward to show that the asymptotic distribution of the EIF-based estimator of the cumulative incidence function is then $\sqrt{n}\left(\hat\psi_k^* - \psi_k^*\right) = \frac{1}{\sqrt{n}}\sum_{i=1}^n\sum_{t=1}^k \varphi_{i,t} + o_p(1)$. 

Additionally, Theorem \ref{theorem:EIF_asymptotics} details the conditions under which $\hat{\psi_k}$ is asymptotically normal and achieves $\sqrt{n}$-consistency. We show these results in the Supplementary Material (Section \ref{sec:asymptotics_proof}). The sequential product rate condition in \eqref{eq:sequential_rate_condition} highlights why our estimator achieves strong sequential double robustness. Rather than requiring $\sqrt{n}$-convergence across all time points simultaneously, \eqref{eq:sequential_rate_condition} dictates that at each time $t$, the \textit{product} of the estimation errors from the g-formula terms and the IPC weights must vanish faster than $n^{-1/2}$. This structure accommodates flexible, nonparametric machine learning algorithms whose individual convergence rates may be slower than the parametric rate. For instance, if one nuisance model converges at a slower rate of $o_p(n^{-1/4})$, the target parameter retains $\sqrt{n}$-consistency provided the second model also converges as $o_p(n^{-1/4})$ rate at that specific time. This product structure is also what protects against misspecification of one of the models. If the propensity score models are incorrectly specified, this error remains bounded ($O_p(1)$), meaning $\sqrt{n}$-consistency is preserved as long as the outcome models are correctly specified and converge at a rate of $o_p(n^{-1/2})$. Conversely, if the outcome models are wrong, the estimator remains consistent provided the propensity models converge at a rate of $o_p(n^{-1/2})$. Consequently, our proposed estimator achieves $\sqrt{n}$-consistency at each time point for which either the propensity score models \textit{or} outcome models are correctly specified at the appropriate rates \citep{diaz_causal_2024}. 

\section{Simulation Study}
\label{sec:sims}

The simulation study presented illustrates the properties of our EIF-based estimator and investigates its performance relative to the IPW Aalen-Johansen estimator \citep{morzywolek_estimation_2022} and the g-formula approach \citep{robins_new_1986}.

\subsection{Simulation Settings and Estimation} \label{sec:sims-dgp}

We conducted this simulation study to mimic the data structure expected of a longitudinal observational study. For each simulation replication, we fixed $K=8$ and generated $O_k$, for $k=0, \dots, 8$, (as defined in Section \ref{sec:notation}) for six distinct sample sizes: $N \in \left\{1000, 2000, 4000, 8000, 16000, 32000\right\}$. To evaluate the the IPW, g-formula, and proposed EIF-based estimators, we considered two distinct regimes. For time points $k=0, \dots, 8$, Regime 1 is defined as $d^1 = \left\{0,0,0,0,0,0,0,0,0\right\}$ and represents no treatment throughout the study. Regime 2 is defined as $d^2 = \left\{0,0,0,0,0,1,1,1,1\right\}$ and represents no treatment in the first half of the study followed by treatment in the second half.

We simulated four continuous, time-varying covariates $X_{i,k}^{(m)}$, and four continuous, time-static covariates, $X_{i}^{*(m)}$, $m=1, \dots, 4$, across all times $k=0, \dots, 8$ for all individuals $i = 1, \dots, N$. The time-varying treatment, $A_{i,k}$, for individual $i$ at time $k$ was simulated using a logistic model. $A_{i,k}$ was compared to each of the two pre-specified DTRs, $d^1$ and $d^2$, yielding a sequence of compatibility indicators for each regime, $C_{i,k}^{(1)}$ and $C_{i,k}^{(2)}$, respectively, for every individual $i$ at each time $k$. Once an individual became incompatible with a regime (i.e., $C_{i,k}^{(1)} = 0$ or $C_{i,k}^{(2)} = 0$), they remained incompatible with that regime for all subsequent time points $k' > k$. If the event or competing event occurred at time $k$ but a participant was compatible at time $k-1$, that individual remained compatible for all $k' > k$. 

Building on the simulation of covariates and treatment, we generated the time-varying outcome of interest and competing event ($Y_{i,k}$ and $Z_{i,k}$, respectively) for individual $i$ at time $k$ as Bernoulli random variables. At baseline, no individuals had experienced the event of interest or the competing event (i.e., $Y_{i,0} = Z_{i,0} = 0$). For all time points following an event or competing event, the corresponding outcome indicators were set to 1. Finally, we generated censoring using a logistic model as well. No individuals were censored at baseline and all who became censored at time $k$ remained censored for all subsequent time points $k' > k$. The exact baseline distributions, model equations, and parameter values for the data generation are detailed in Section \ref{sec:supp_sim_details} of the Supplementary Material. 

As part of our simulation framework, we estimated four models: propensity score models for treatment assignment (and, in turn, compatibility) and censoring, and outcome models for the event of interest and the competing event. To evaluate estimator performance under varying conditions of model reliability, we implemented three distinct estimation scenarios. 

In the nonparametric scenario, models were estimated using the SuperLearner algorithm, which is an ensemble machine learning method that generates a weighted average prediction from a collection of learners using cross-validation \citep{polley_superlearner_2024}. We implemented three folds for the internal cross-validation step. Our SuperLearner library included multivariate adaptive regression splines (MARS), random forest, a neural network, a generalized linear model (GLM) with all two-way interactions, and penalized regression. For both propensity score models, both the generalized linear model with two-way interactions and penalized regression were correctly specified based on our data-generating mechanism. In contrast, none of the learners in the library exactly matched the outcome model, so it was likely misspecified across all candidates. The second scenario utilized correctly specified parametric GLMs for the propensity scores and event-specific hazards \eqref{eq:h1} and \eqref{eq:h2}, incorporating the full set of static and time-varying covariates that were generated. Although these models included the full covariate set, complete specification of the hazards does not guarantee a correctly specified model for the cumulative incidence. Because the estimand is a nonlinear function of the event-specific hazards, it cannot be represented as a linear combination of predictors on the logit link scale. This structural discrepancy ensures that the overall g-formula is misspecified even when the underlying hazard models are correctly specified, since the functional form of the nested conditional means is not analytically available. Finally, the third scenario introduced a misspecified parametric GLM approach. In contrast to the structural misspecification of the second scenario, in this scenario, we deliberately restricted the covariate set for all models (propensity scores, censoring, and outcome) to include only one static covariate and one time-varying covariate. This omission of predictors allowed us to assess estimator performance under considerable model misspecification.

We compared the performance of three estimators: the proposed EIF-based estimator in (\ref{eq:EIF}), the IPW Aalen-Johansen (IPW-AJ) estimator \citep{morzywolek_estimation_2022}, and the g-formula \citep{robins_new_1986}. Performance was evaluated based on bias, coverage, and the Root Integrated Mean Squared Error (RIMSE). For $K$ discrete time points,
$$\text{RIMSE} = \sqrt{\frac{1}{K} \sum_{t=1}^K \text{MSE}(t)}\,,$$
where $\text{MSE}(t) = \frac{1}{B} \sum_{b=1}^{B} (\hat{\theta}_{b,t} - \theta_{0,t})^2$ is the mean squared error at time $t$ across $B$ simulation replications, and $\hat{\theta}_{b,t}$ and $\theta_{0,t}$ are the estimated and true cumulative incidence, respectively.

\subsection{Results}\label{sec:sims-results}

In Figure \ref{fig:RIMSE}, we present the RIMSE for the cumulative incidence for each estimator under both regimes ($d^1$ and $d^2$) across the three distinct model-specification scenarios as a function of the sample size $N$. For the IPW-AJ and EIF-based estimators in the nonparametric and correctly specified parametric scenarios, the negative linear relationship between $\log_2(\text{RIMSE})$ and $\log_2(\text{N})$ indicates $\sqrt{n}$-consistency for both methods. On the other hand, the g-formula exhibits slower convergence, reflecting the structural misspecification of the outcome process discussed in Section \ref{sec:sims-dgp}. In the misspecified parametric scenario, in which we have omitted covariates, the IPW-AJ and EIF-based estimators continue to outperform the g-formula, but no longer exhibit $\sqrt{n}$-convergence. Instead, the RIMSE for these estimators plateaus at larger sample sizes, particularly under $d^1$, suggesting that intentional model misspecification introduced an asymptotic bias that cannot be reduced by increasing sample size. Overall, while the IPW-AJ estimator demonstrates greater numerical stability in smaller samples, the EIF-based estimator performs comparably or marginally better as sample size increases. 

\begin{figure}[!htb]
\centering
  \includegraphics[width=0.95\textwidth]{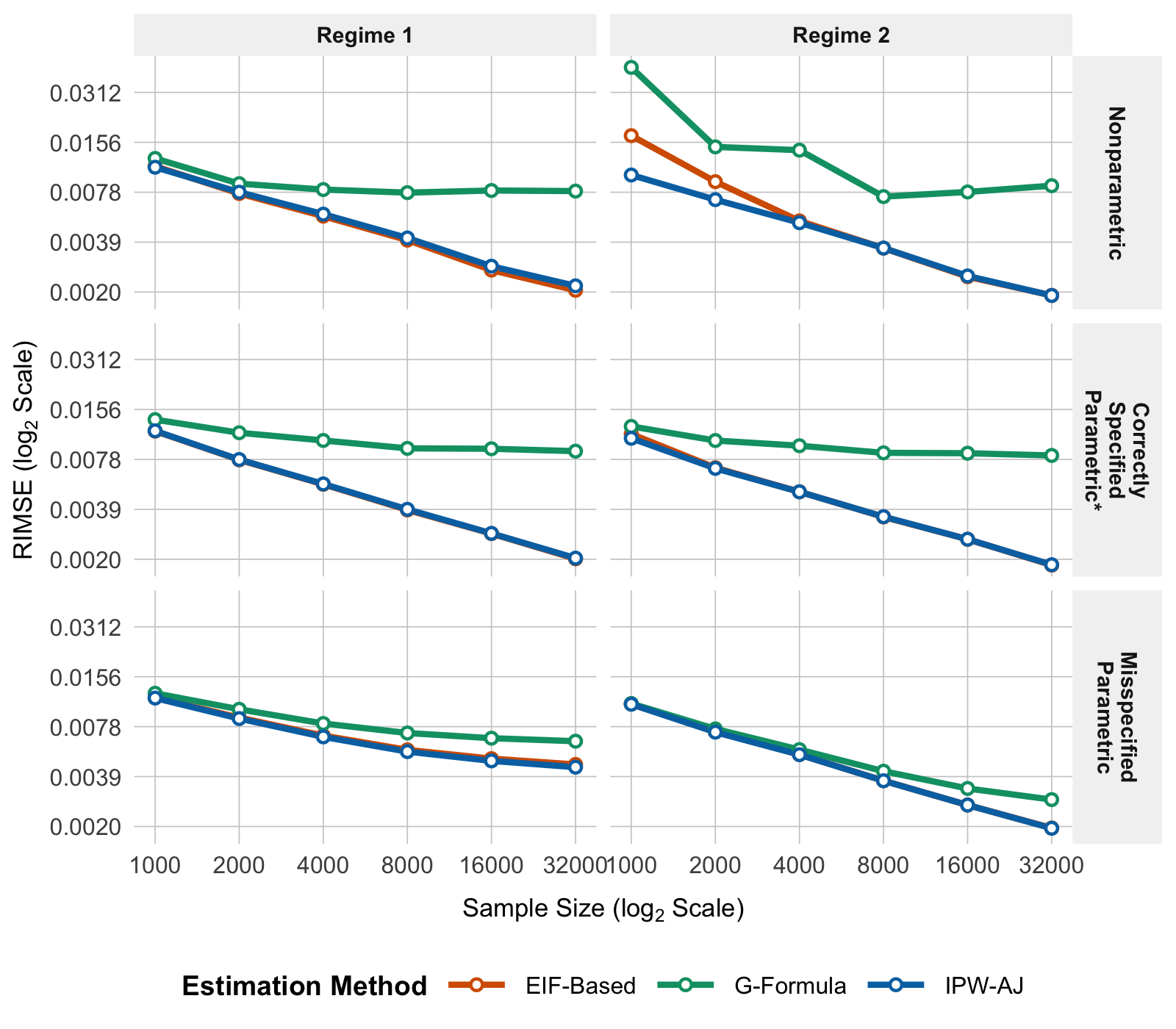} 
  \caption{\textbf{Root integrated mean squared error across regimes and estimation scenarios as a function of sample size, on the $\log_2$ scale.} $^*$Correctly specified refers to the individual propensity score and hazard models; the overall g-formula remains structurally misspecified due to the nonlinear functional form of the cumulative incidence.} 
  \label{fig:RIMSE}
\end{figure}

Figure \ref{fig:Reg1_AvgErr} depicts the bias, estimated by the average error, in the cumulative incidence over time under two different sample sizes and across the estimation scenarios for $d^1$. Consistent with theory, the EIF-based estimator maintains near-zero bias in both the nonparametric and correctly specified parametric scenarios, which is particularly evident as the sample size increases to $N = 32,000$. In the nonparametric and misspecified parametric scenarios, all three estimators exhibited a distinct increase in bias at later time points ($k \geq 6$), likely due to data sparsity as fewer individuals remain compatible with the regime toward the end of follow-up. In the misspecified parametric scenario, the EIF-based and IPW-AJ estimators have reduced bias relative to the g-formula, but do not achieve near-zero bias due to the omitted confounders. Despite the upward trend at later time points, the results under the nonparametric and correctly specified parametric approaches show that the EIF-based estimator achieves minimal bias as the sample size increases, provided at least one set of nuisance models is correctly specified. 

\begin{figure}[!htb]
\centering
    \includegraphics[width=0.95\textwidth]{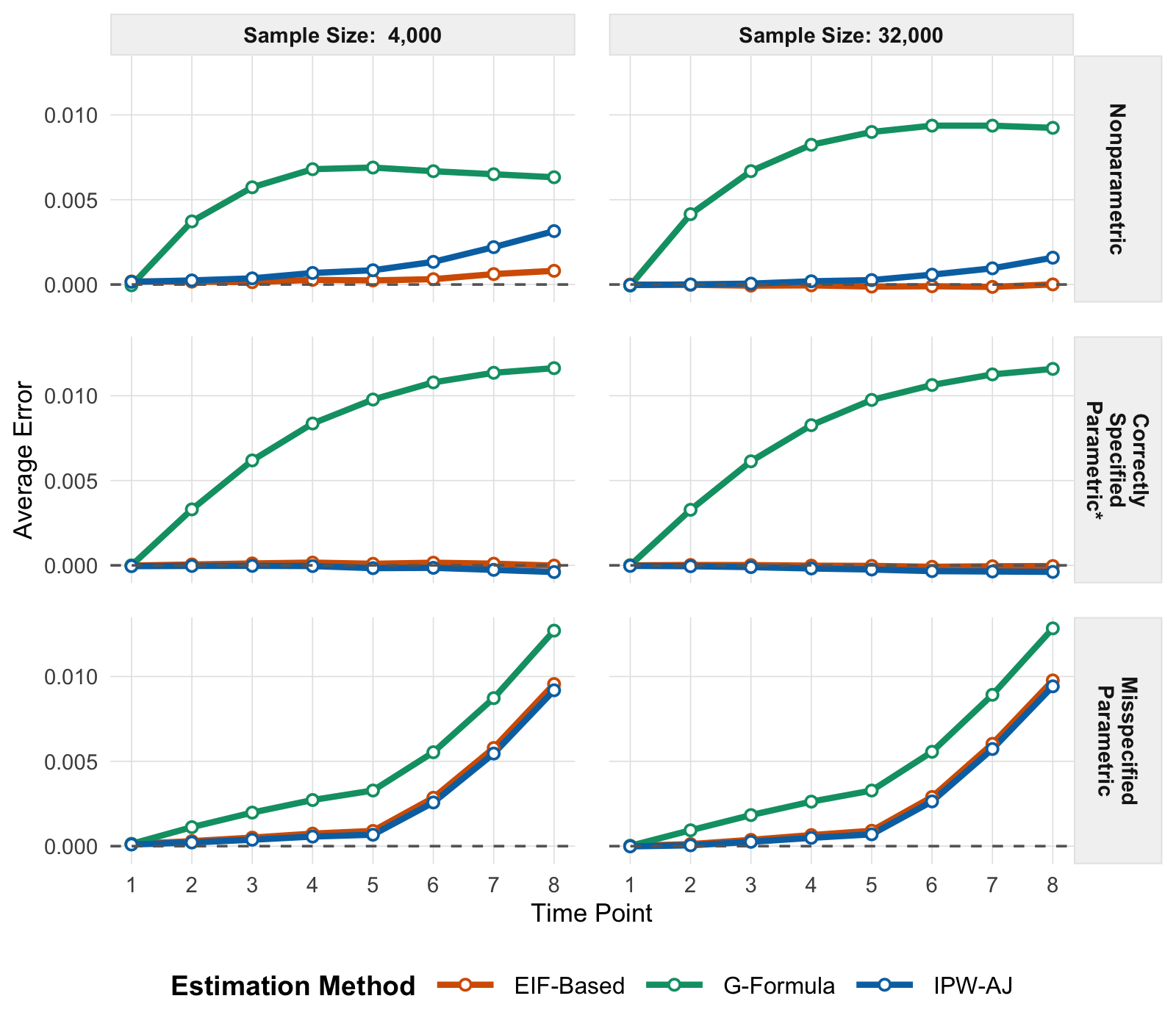}
    \caption{
        \textbf{Bias in cumulative incidence under Treatment Regime 1.} The dashed horizontal line indicates zero error (unbiasedness).
    }
    \label{fig:Reg1_AvgErr}
\end{figure}

Figure \ref{fig:Reg2_AvgErr} presents the bias across time and estimation scenarios for $d^2$. Across all three estimation scenarios, the IPW-AJ and EIF-based estimators exhibit near-zero bias throughout the follow-up period. In contrast, the g-formula exhibits systematically higher bias, with consistent trends across both sample sizes. In the misspecified scenario, all estimators demonstrate sensitivity to omitted confounders, trending toward negative bias by the end of follow-up. At $N = 4,000$, the EIF-based estimator shows noticeable finite-sample instability in the nonparametric scenario, characterized by a downward trend in bias at later time points ($k \geq 6$). However, when the sample size increases to $N = 32,000$, this instability largely resolves. In the nonparametric and correctly specified scenarios for the higher sample size, the EIF-based estimator is the most accurate, with bias closest to zero at all time points. These results suggest that while the EIF-based estimator requires a large enough sample size to stabilize under complex parametric model specification, it ultimately yields higher asymptotic accuracy by more effectively targeting the true cumulative incidence.

\begin{figure}[!htb]
\centering
    \includegraphics[width=0.95\textwidth]{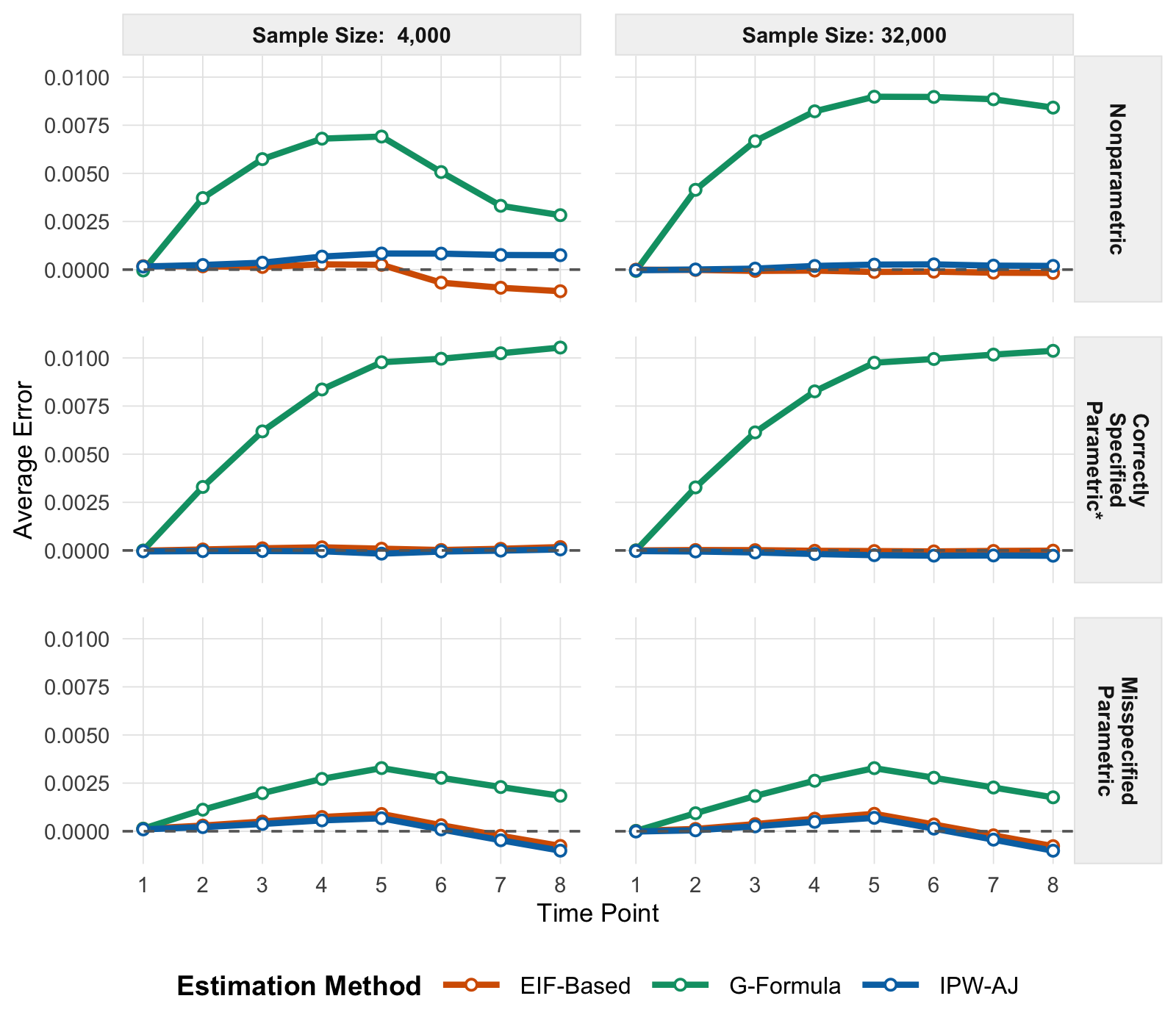}
    \caption{
        \textbf{Bias in cumulative incidence under Treatment Regime 2.} The dashed horizontal line indicates zero error (unbiasedness). 
    }
    \label{fig:Reg2_AvgErr}
\end{figure}

Figure \ref{fig:EIF_coverage} presents the 95\% coverage probabilities over time for the EIF-based estimator as a function of sample size under both regimes. Under $d^2$, the estimator maintains nominal or slightly conservative coverage throughout the follow-up period across all estimation scenarios, correctly characterizing uncertainty even under the misspecification in the third scenario. In contrast, under $d^1$, there is a sharp decrease in coverage at later time points as sample size increases for the misspecified parametric case. Because the confidence intervals shrink at larger sample sizes, but are centered on a biased estimate (Figure \ref{fig:Reg1_AvgErr}), they eventually fail to encompass the true value. This highlights that when persistent bias is present, larger samples can paradoxically increase the likelihood of excluding the true parameter value.

\begin{figure}[!htb]
\centering
\includegraphics[width=0.95\textwidth]{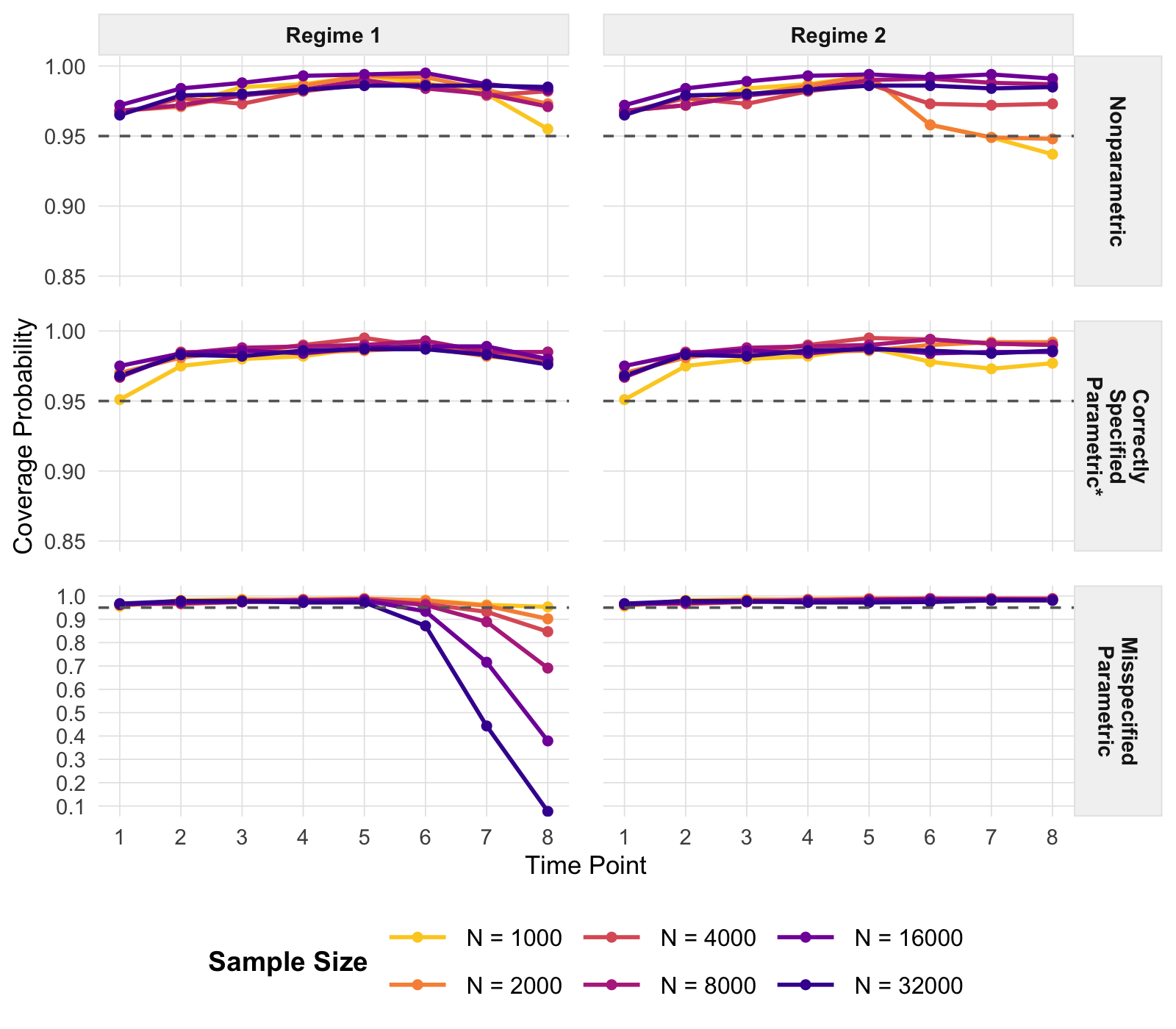}
    \caption{
        \textbf{Coverage of cumulative incidence using the EIF-based estimator across regimes and estimation scenarios.} The dashed line represents nominal 95\% coverage.
    }
    \label{fig:EIF_coverage}
\end{figure}

\section{Analysis of Time to Bisphosphonate Discontinuation}
\label{sec:rwe}

To further demonstrate the practical utility of our proposed EIF-based estimator, we applied our framework to a large longitudinal dataset from the Veterans Health Administration (VHA). This application investigates the fracture risk associated with continuing versus discontinuing oral bisphosphonate (BP) treatment among older male veterans who have already completed an initial 3-year course of treatment for osteoporosis.

\subsection{Design and Estimation}

This analysis uses real-world electronic health record data spanning the Veterans Affairs Health Care System, structurally linked with Fee for Service (FFS) Medicare and Medicare Advantage records to ensure comprehensive longitudinal capture of patient outcomes. The cohort consists of 109,286 male Veterans aged 50 or older who had been prescribed three consecutive years of an oral BP treatment with an average proportion of days (PDC) covered of at least 50\% during that time. Patient clinical histories were discretized into consecutive 90-day intervals, denoted by $k = 1,\dots,16$, representing a 4-year follow-up window immediately following the completion of the initial 3-year treatment course. 

In this follow-up window, BP treatment for each study participant ($A_k$) was defined as $\text{PDC} \ge 50\%$ within each 90-day interval. The primary clinical event of interest ($Y_k$) was a composite outcome capturing any incident clinical fracture in time interval $k$, while death was the terminal competing risk ($Z_k$). We evaluated two clinically distinct, time-varying treatment regimes. The first regime was that of complete discontinuation after three years of treatment, represented by no treatment throughout the 16 intervals of follow-up. The second regime was that of sustained oral BP therapy throughout the entire follow-up period (resulting in a total of seven years of treatment). The time-static covariates in this dataset included age at time of first oral BP prescription or script, BMI, race, ethnicity, and prevalent clinical fractures at the time of cohort entry. For each interval, the time-varying covariates for each subject were estimated glomerular filtration rate (EGFR), VA frailty index, and the presence of any of the following medical conditions: rheumatoid arthritis, diabetes mellitus, malabsorption, chronic liver disease, depression, anxiety, dementia, and chronic obstructive pulmonary disease (COPD).

In a real-world setting, patients frequently switch treatments or routes of administration. For this cohort, patients were censored (i.e., $U_k = 0$) if they were lost to follow-up, initiated an intravenous (IV) BP, or were prescribed a non-bisphosphonate osteoporosis drug. At each time interval $k$, study participants were still compatible with either of the two regimes if they had followed the exact treatment sequence specified by that regime up to that point, and were uncensored, alive, and fracture-free. Supplementary Figure \ref{fig:compat_sample_sizes} depicts the number of individuals compatible with each regime over time.

To estimate the cumulative incidence of clinical fractures in older male veterans under each of these two treatment regimes, we compared four different estimation methods. We first computed an unweighted regime-specific Aalen-Johansen estimator by using an indicator to identify the subset of regime-compatible, uncensored subjects at each time point. These individuals were given a weight of 1 in our estimation, while all others were given a weight of 0. To adjust for time-varying confounding and informative censoring, we used the three strategies previously evaluated in our simulation study: the IPW Aalen-Johansen (IPW-AJ) estimator, the g-formula, and the proposed EIF-based estimator. All underlying nuisance parameter models, including the propensity scores, censoring models, and g-formula conditional expectations, were estimated using high-dimensional random forests.

\subsection{Results}

The estimated cumulative incidence curves for the outcome of any clinical fracture under the two treatment regimes are presented in Figure \ref{fig:eif_cifs_for_bp_dtrs}. The primary analysis contrasts our proposed doubly robust EIF-based estimator with the naive, unweighted Aalen-Johansen estimates across the 16 follow-up intervals and between the two regimes. Throughout the four years of follow-up, the unweighted estimates (represented by the dashed lines) are substantially lower than the EIF-based estimates (represented by the solid lines) for both the discontinuation and continuation regimes. This notable downward separation is likely indicative of strong time-varying confounding and informative censoring within the cohort, where failing to adjust for the treatment and censoring mechanisms may have led to a severe underestimation of fracture risk.

\begin{figure}[!htb]
    \centering
    \includegraphics[width=0.95\textwidth]{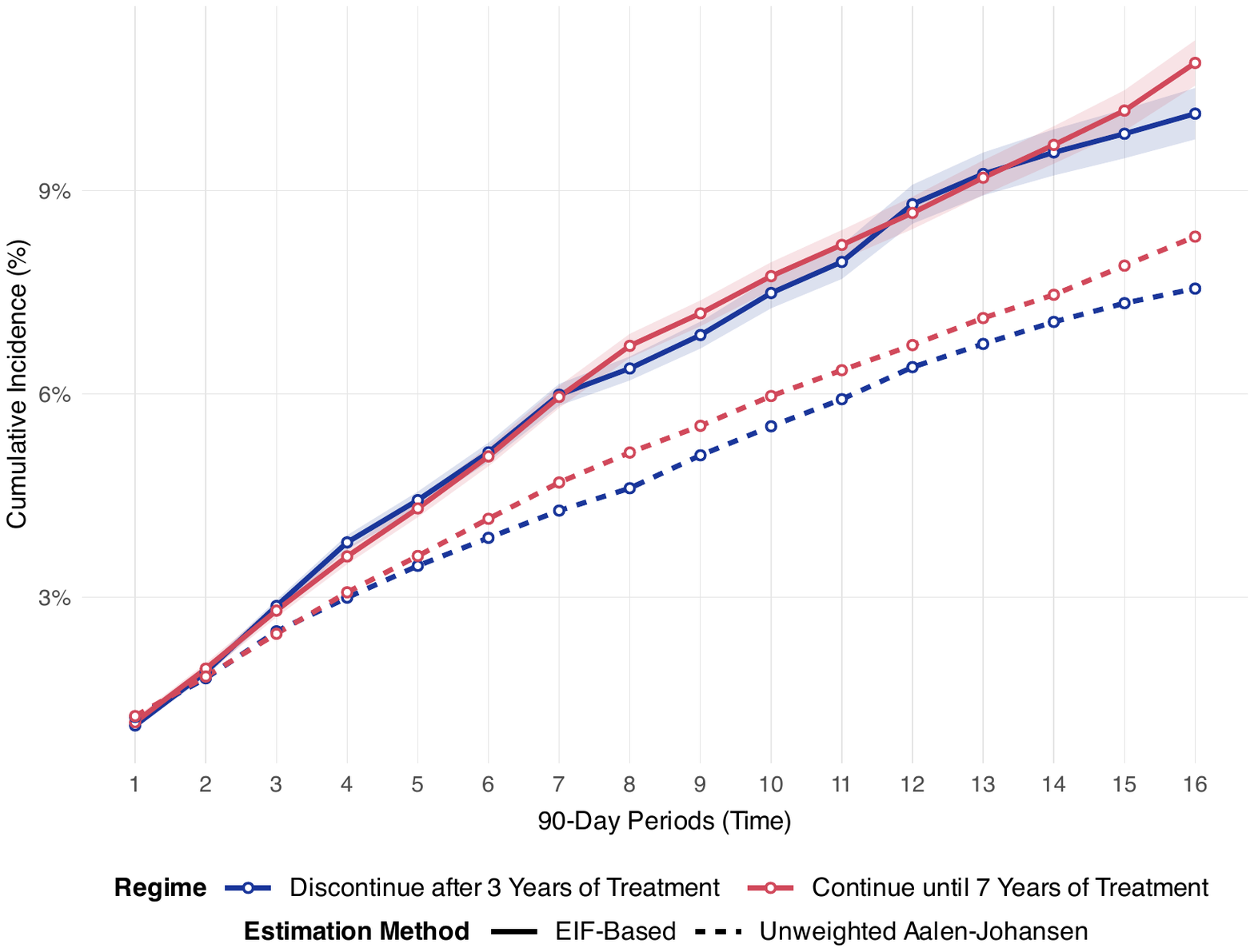}
    \caption{\textbf{Estimated cumulative incidence of incident clinical fractures under two different treatment regimes throughout four years of follow-up using our proposed EIF-based estimator.} The dashed lines represent the cumulative incidence using an unweighted Aalen-Johansen estimator.}
    \label{fig:eif_cifs_for_bp_dtrs}
\end{figure}

Additionally, we see that the difference in cumulative incidence under the two regimes is reduced after adjustment with our EIF-based method. Under the proposed method, the estimated cumulative incidence of fractures at the end of the 4-year follow-up window (interval 16) is approximately 10.13\% for the group discontinuing oral BP therapy after three years, compared to approximately 10.88\% for those continuing BP therapy for an additional four years. The closely overlapping 95\% confidence intervals indicate that among older male patients who have already successfully completed an initial 3-year course of oral BP treatment, there is no evidence of long-term clinical benefit, nor risk, associated with extending oral BP treatment for an additional four years. A secondary analysis comparing the cumulative incidence estimated using the standard causal approaches to the EIF-based estimator is provided in Supplementary Figure \ref{fig:three_method_comparison}.

\section{Discussion}
\label{sec:discussion}

The estimation of long-term risks for adverse events under complex, time-varying treatment strategies remains a critical methodological challenge for health outcomes research. Several causal inference approaches have been developed to address time-varying exposures, but no estimators achieve nonparametric efficiency while maintaining robustness against model misspecification in the presence of competing risks. While the g-formula, introduced by \citet{robins_analytic_1993}, can estimate cumulative incidence under time-dependent interventions, its reliance on a correctly specified outcome process that includes difficult-to-model conditional expectations of functions of outcome models makes it vulnerable to functional form misspecification. As demonstrated in our simulation study (Section \ref{sec:sims}), even with a complete set of covariates, the g-formula exhibits persistent bias because the recursive nature of the cumulative incidence is challenging to model well. In contrast, IPW estimators, such as the IPC weighted Aalen-Johansen estimator developed by \citet{morzywolek_estimation_2022}, can achieve $\sqrt{n}$-consistency, but rely on the correct specification of the treatment and censoring mechanisms and use significantly less data as compatibility sample sizes decay over time. 

The development of our novel nonparametric efficient estimator for the cumulative incidence of an adverse event under DTRs addresses these limitations. By deriving the EIF and validating its pathwise differentiability, we provide a rigorous framework for $\sqrt{n}$-consistent and asymptotically normal inference \citep{kennedy_semiparametric_2023}. Our simulations across model-specification scenarios provide critical insights into how these theoretical properties translate to finite-sample performance. Both the EIF-based and IPW-AJ estimators demonstrated negligible bias when nuisance models were correctly specified, while the g-formula demonstrated significant bias due to outcome model misspecification (Figures \ref{fig:Reg1_AvgErr} and \ref{fig:Reg2_AvgErr}). 

Our results when using a misspecified parametric estimation approach do, however, provide a clear look at the limits of double robustness. The sharp decline in coverage observed under Regime 1 (Figure \ref{fig:EIF_coverage}) highlights that the efficiency of the EIF-based estimator at larger sample sizes can lead to a total loss of nominal coverage if model misspecification introduces persistent asymptotic bias. Beyond model specification, the practical utility of our approach remains bound to our identification assumptions. Long follow-up periods and complex treatment histories can result in positivity violations, leading to extreme IPC weights. These operational challenges highlight the need for large longitudinal data resources, and further motivates the incorporation of flexible data-adaptive methods (such as SuperLearner) to ensure stable estimation. Additionally, our framework relies on the assumption of no unmeasured confounding for treatment, censoring, and the competing risk across all intervals. Despite these finite-sample, identification, and misspecification-related limitations, the sequential double robustness of our EIF-based estimator provides a crucial methodological advancement for estimating cumulative incidence in observational settings.

One potential extension of this work is the incorporation of covariate information, such as demographic factors, baseline health status, and comorbidity data, into an estimator of \textit{conditional} effects, allowing for the estimation of cumulative incidence in subpopulations. Such an approach would not only enhance our understanding of the distribution of these outcomes, but may also help identify high-risk subgroups that might otherwise be overlooked when relying solely on population-level estimates. Additional work could further investigate interactions among covariates, exploring how combinations of traits may amplify or mitigate the risk of adverse outcomes. Tailored estimates from such analyses could be especially valuable in clinical settings, supporting healthcare providers in crafting individualized care plans for patients, or in public health interventions, where they could inform risk-based screening strategies and treatment guidelines for high-risk subpopulations.

Our proposed EIF-based estimator offers a rigorous, generalizable, and clinically interpretable framework for estimating cumulative risks under dynamic treatment regimes. By explicitly handling time-varying exposure, a competing event, and censoring in a sequentially doubly robust manner, it is well-suited to leverage the types of extensive longitudinal data that are now available in modern healthcare settings. While our motivating example focuses on the critical public health challenge of optimizing BP treatment and holiday patterns to minimize risk of adverse events, this methodological approach is readily extendable to other clinical settings where similar challenges arise. 

\section{Disclosure Statement}\label{disclosure-statement}

The authors report there are no competing interests to declare.

\section{Acknowledgments}\label{acknowledgments}

This work was supported by grant R01AG079118 from the National Institute on Aging (NIA) and the Office of the Director at the National Institutes of Health. The views expressed in this work are those of the authors and do not necessarily reflect the position or policy of the Department of Veterans Affairs, the United States government or the Office of the National Institutes of Health.
Support for VA/CMS data provided by the Department of Veterans Affairs, Office of Research and Development, VA Information Resource Center (Project Numbers SDR 02-237 and 98-004).

\bibliography{Paper1_Resources_Truncated}

\pagebreak

\setcounter{section}{0}
\setcounter{figure}{0}
\setcounter{table}{0}

\renewcommand{\thesection}{S.\arabic{section}}
\renewcommand{\thefigure}{S.\arabic{figure}}
\renewcommand{\thetable}{S.\arabic{table}}

\begin{center}
    \Large{\textbf{SUPPLEMENTARY MATERIAL}}
\end{center}

\section{Notation, Conventions, and Preliminary Facts}\label{sec:supp-prelims}

This section summarizes some of the conventions and basic facts used throughout the proofs. Because the proofs repeatedly involve manipulations of conditional expectations given histories that combine random variables (covariates), events (survival, being uncensored, being compatible with the regime), we state these facts here.

\subsection{Conventions} Throughout the Supplementary Material we set $Y_{-1} = Z_{-1} = Y_0 = Z_0 = 0$, $U_{-1} = C^{(d)}_{-1} = 1$, and $W^{(d)}_j = 1$ for $j < 0$; empty sums equal zero and empty products equal one. As in the main text, $\pi^{C^{(d)}}_j = \pi^U_j = 1$ whenever $\overline{Y}_j \neq 0$ or $\overline{Z}_j \neq 0$. For a function $f$ of the observed data, $\|f\|^2 = \int f^2 \, \mathrm{d}\P$. Because all lemmas in Sections \ref{sec:identification} and \ref{sec:supp-EIF} involve a single arbitrary regime $d$, we drop the superscript $(d)$ on $W_t$, $C_t$, and $\pi^C_t$ when unambiguous. We abbreviate
\begin{equation*}
    \Acal_t = \prod_{j=0}^{t-1}(1-Y_j)(1-Z_j), \qquad \Bcal_t = (1-Y_t)(1-Z_t), \qquad \Bcal^*_k = Y_k(1-Z_k),
\end{equation*}
and note that, under the temporal ordering $O_j = \{Z_j, Y_j, X_j, A_j, U_j\}$, the collection of all variables realized strictly before $(Z_t, Y_t)$ is $\left(\overline{X}_{t-1}, \overline{Y}_{t-1}, \overline{Z}_{t-1}, \overline{A}_{t-1}, \overline{U}_{t-1}\right)$. Several arguments below condition on this full history, which we always write out explicitly; conditioning on the covariate history $\overline{X}_{t-1}$ by itself is generally not sufficient, as the tower-property steps we use also use outcome, treatment, and censoring information. Finally, define the at-risk event
\begin{equation*}
    E_t = \left\{\overline{Y}_{t-1} = 0,\; \overline{Z}_{t-1} = 0,\; \overline{U}_{t-1} = 1,\; \overline{C}_{t-1} = 1\right\}.
\end{equation*}
Because the regime $d$ assigns treatment as a deterministic function of the observed history, on the event $\{\overline{Y}_{t-1} = 0, \overline{Z}_{t-1} = 0\}$ the compatible treatment path is determined recursively by the covariates alone; we write $\overline{d}_{t-1}(\overline{x}_{t-1})$ for this path, that is, $d_0(x_0)$, $d_1\left(\overline{x}_1, \overline{d}_0(x_0)\right)$, and so on, with all outcome arguments set to zero.

\subsection{Basic Facts Used Throughout}

\begin{enumerate}
    \item[(P1)] \textit{Scores are conditionally centered.} For any partition of the data into a variable $V$ and a conditioning set $B$, the score of the conditional factor satisfies $\E\left[\ell'_\varepsilon(V \mid B; 0) \mid B\right] = 0$, because $\E_\varepsilon\left[\partial_\varepsilon \log p_\varepsilon(V \mid B) \mid B\right] = \partial_\varepsilon \int p_\varepsilon(v \mid B)\, \mathrm{d}v = 0$ at $\varepsilon = 0$ under the usual regularity conditions that allow for the interchange of derivative and integral. Consequently, for any integrable $Q$ that is a function of $B$,
    \begin{equation*}
        \E\left[Q \, \ell'_\varepsilon(V \mid B; 0)\right] = \E\left[Q \, \E\left\{\ell'_\varepsilon(V \mid B; 0) \mid B\right\}\right] = 0.
    \end{equation*}
    \item[(P2)] \textit{Conditioning on equivalent information on an event.} Let $B$ be a collection of observed variables, let $B_0$ be a sub-collection of $B$, and let $E$ be an event determined by $B$. Suppose that, on the event $E$, every variable in $B$ is equal to some fixed function of $B_0$. Then, for any integrable $V$,
    \begin{equation*}
        \E[V \mid B]\,\mathbbm{1}(E) = \E[V \mid B_0, E]\,\mathbbm{1}(E) \quad \text{a.s.},
    \end{equation*}
    where $\E[V \mid B_0, E]$ denotes the conditional expectation given $B_0$ and the occurrence of $E$. In words, once $E$ is known to have occurred, the value of $B$ contains no information beyond the value of $B_0$, so the two conditional expectations agree.

    \textit{Proof.} Both sides are functions of $B$. For any bounded function $h$ of $B$, the hypothesis implies $h(B)\mathbbm{1}(E) = g(B_0)\mathbbm{1}(E)$ for some function $g$. Hence
    \begin{align*}
        \E\left[\E\{V \mid B\}\, h(B)\mathbbm{1}(E)\right] &= \E\left[V h(B)\mathbbm{1}(E)\right] = \E\left[V g(B_0)\mathbbm{1}(E)\right]\\
        &= \E\left[\E\{V \mid B_0, E\}\, g(B_0)\mathbbm{1}(E)\right]\\
        &= \E\left[\E\{V \mid B_0, E\}\, h(B)\mathbbm{1}(E)\right],
    \end{align*}
    where the first and third equalities follow from a basic property of conditional expectation ($h(B)\mathbbm{1}(E)$ is a function of $B$, and $g(B_0)\mathbbm{1}(E)$ is a function of $B_0$ and $\mathbbm{1}(E)$). Since $h$ is arbitrary, the two sides agree almost surely on $E$. \hfill$\square$

    In particular, on $E_t$ the variables $\overline{Y}_{t-1}, \overline{Z}_{t-1}, \overline{U}_{t-1}$ are constant and $\overline{A}_{t-1} = \overline{d}_{t-1}(\overline{X}_{t-1})$ is a function of $\overline{X}_{t-1}$, so conditioning on {$\left(\overline{X}_{t-1}, \overline{Y}_{t-1}, \overline{Z}_{t-1}, \overline{A}_{t-1}, \overline{U}_{t-1}\right)$} and conditioning on $(\overline{X}_{t-1}, E_t)$ agree on $E_t$.
    \item[(P3)] \textit{The propensities are the relevant conditional means on compatible histories.} For each $j \geq 0$, define the events $F_j = \left\{\overline{C}_j = 1, \overline{U}_{j-1} = 1, \overline{Y}_j = 0, \overline{Z}_j = 0\right\}$ and $F'_j = \left\{\overline{C}_{j-1} = 1, \overline{U}_{j-1} = 1, \overline{Y}_j = 0, \overline{Z}_j = 0\right\}$. Then
    \begin{align*}
        \E\left[U_j \mid \overline{X}_j, \overline{Y}_j, \overline{Z}_j, \overline{A}_j, \overline{U}_{j-1}\right] \mathbbm{1}\left(F_j\right) &= \pi^U_j\, \mathbbm{1}\left(F_j\right),\\
        \E\left[\mathbbm{1}(A_j = d_j) \mid \overline{X}_j, \overline{Y}_j, \overline{Z}_j, \overline{A}_{j-1}, \overline{U}_{j-1}\right] \mathbbm{1}\left(F'_j\right) &= \pi^{C}_j\, \mathbbm{1}\left(F'_j\right).
    \end{align*}
    \textit{Proof.} Both displays are instances of (P2). For the first, take $E = F_j$: on $E$, the treatment history $\overline{A}_j$ equals the deterministic path $\overline{d}_j(\overline{X}_j)$ and the outcome and censoring histories are constant, so every conditioning variable is a function of $\overline{\Hcal}_j = \left(\overline{Z}_j, \overline{Y}_j, \overline{X}_j\right)$ on $E$; by (P2), the conditional expectation of $U_j$ given the full list agrees on $E$ with the conditional probability given $\left(\overline{\Hcal}_j, E\right)$, which is the definition \eqref{eq:pi_U} of $\pi^U_j$. The second display is analogous, with $E = F'_j$, on which $C_j = \mathbbm{1}(A_j = d_j)$ and $\pi^{C}_j$ from \eqref{eq:pi_C} is the conditional probability of that indicator. \hfill$\square$
    \item[(P4)] \textit{Functions of the past.} The following facts hold by construction: (i) $\Acal_t$, $\mathbbm{1}(E_t)$, and $W_{t-1}$ are functions of the history $\left(\overline{X}_{t-1}, \overline{Y}_{t-1}, \overline{Z}_{t-1}, \overline{A}_{t-1}, \overline{U}_{t-1}\right)$; for $W_{t-1}$ this holds because each $C_j = \mathbbm{1}\{\overline{A}_j = \overline{d}_j(\overline{\Hcal}_j, \overline{A}_{j-1})\}$, each $U_j$, and each propensity $\pi^C_j$, $\pi^U_j$ (a function of $\overline{\Hcal}_j$ and the treatment history) with $j \leq t-1$ is a function of these variables; (ii) $G_t^{(k)}$ and $R_t^{(k)}$ are functions of $\overline{X}_{t-1}$, by the recursive definition \eqref{eq:G_recursive}; (iii) a score $\ell'_\varepsilon(V \mid B; 0)$ is a function of $(V, B)$; (iv) $W_{t-1}\Acal_t = W_{t-1}\Acal_t \mathbbm{1}(E_t)$, because the numerator of $W_{t-1}$ contains $\prod_{j=0}^{t-1} U_j C_j$ and $\Acal_t$ contains the survival indicators.
    \item[(P5)] \textit{One-step conditional means of the sequential regressions.} For $t = 1, \dots, k-1$,
    \begin{align*}
        \E\left[\Bcal_t\, G_{t+1}^{(k)} \,\middle|\, {\overline{X}_{t-1}, \overline{Y}_{t-1}, \overline{Z}_{t-1}, \overline{A}_{t-1}, \overline{U}_{t-1}}\right] \mathbbm{1}(E_t) &= G_t^{(k)}\, \mathbbm{1}(E_t), \qquad \text{and}\\
        \E\left[\Bcal^*_k \,\middle|\, {\overline{X}_{k-1}, \overline{Y}_{k-1}, \overline{Z}_{k-1}, \overline{A}_{k-1}, \overline{U}_{k-1}}\right] \mathbbm{1}(E_k) &= G_k^{(k)}\, \mathbbm{1}(E_k).
    \end{align*}
    \textit{Proof.} By (P2), on $E_t$ we may replace conditioning on {the full history $\left(\overline{X}_{t-1}, \overline{Y}_{t-1}, \overline{Z}_{t-1}, \overline{A}_{t-1}, \overline{U}_{t-1}\right)$} by conditioning on $(\overline{X}_{t-1}, E_t)$. Peeling off the variables $(Z_t, Y_t, X_t)$ in temporal order,
    \begin{align*}
        \E\left[(1-Z_t)(1-Y_t)\, G_{t+1}^{(k)} \,\middle|\, \overline{X}_{t-1}, E_t\right] &= \Pr\left(Z_t = 0 \mid \overline{X}_{t-1}, E_t\right)\\
        &\qquad \times \E\left[(1-Y_t)\, G_{t+1}^{(k)} \,\middle|\, Z_t = 0, \overline{X}_{t-1}, E_t\right]\\
        &= \nu_{1-Z_t}\left(\overline{X}_{t-1}\right) \Pr\left(Y_t = 0 \mid \overline{Z}_t = 0, \overline{X}_{t-1}, E_t\right)\\
        &\qquad \times \E\left[G_{t+1}^{(k)} \,\middle|\, \overline{Y}_t = 0, \overline{Z}_t = 0, \overline{X}_{t-1}, E_t\right]\\
        &= \nu_{1-Z_t}\left(\overline{X}_{t-1}\right) \mu_{1-Y_t}\left(\overline{X}_{t-1}\right) \E\left[G_{t+1}^{(k)} \,\middle|\, \overline{\Ocal}_t\right]\\
        &= \E\left[R^{(k)}_t\, G_{t+1}^{(k)} \,\middle|\, \overline{\Ocal}_t\right] = G_t^{(k)},
    \end{align*}
    where the first two equalities condition on the discrete events $\{Z_t = 0\}$ and $\{Y_t = 0\}$ (the complementary events contribute zero because of the factors $1-Z_t$ and $1-Y_t$); the third equality identifies the two conditional probabilities as $\nu_{1-Z_t}$ and $\mu_{1-Y_t}$ from \eqref{eq:mean_of_Y_k}-\eqref{eq:mean_of_1-Z_k} and recognizes $\left\{\overline{Y}_t = 0, \overline{Z}_t = 0, \overline{X}_{t-1}, E_t\right\}$ as the conditioning set $\overline{\Ocal}_t$; the fourth pulls the factor $R^{(k)}_t = \mu_{1-Y_t}\nu_{1-Z_t}$, a function of $\overline{X}_{t-1}$, inside the conditional expectation; and the last is the definition \eqref{eq:G_recursive}. The second display of (P5) follows in the same way from $\E\left[Y_k(1-Z_k) \mid \overline{X}_{k-1}, E_k\right] = \Pr(Z_k = 0 \mid \overline{X}_{k-1}, E_k)\,\E\left[Y_k \mid \overline{Z}_k = 0, \overline{X}_{k-1}, E_k\right] = \nu_{1-Z_k}\left(\overline{X}_{k-1}\right)\mu_{Y_k}\left(\overline{X}_{k-1}\right) = G_k^{(k)}$. \hfill$\square$
\end{enumerate}

\section{G-formula for the Cumulative Incidence}\label{sec:identification}

To prove Theorem \ref{theorem:g-formula_identification}, we first establish a series of lemmas that allow us to rewrite the cumulative incidence into expectations using the observed data. Lemma \ref{lemma:young_lemma4} provides the algebraic expansion of the incidence at time $k$ as a function of the cause-specific hazards. 

\begin{lemma}\label{lemma:young_lemma4}
For any $k \in \{1, \dots, K\}$, under the conventions of Section \ref{sec:supp-prelims} (in particular $Y_{-1} = Z_{-1} = 0$ and $W_{-1} = 1$, so that every term below with index $-1$ is well defined),
\begin{align*}
{h}_{k}^{(Y)}(d)&\left\{1-{h}_{k}^{(Z)}(d)\right\}\prod_{j=1}^{k-1}\left\{1-{h}_{j}^{(Y)}(d)\right\}\left\{1-{h}_{j}^{(Z)}(d)\right\}\\
&= \E\left[Y_{k}(1-Z_{k})(1-Y_{k-1})W^{(d)}_{k-1}\right]\\
&\quad \times \prod_{j=1}^{k} \left[\frac{\E\left[(1-Y_{j-1})(1-Z_{j-1})W^{(d)}_{j-1}\right] - \E\left[Z_{j}(1-Y_{j-1})(1-Z_{j-1})W^{(d)}_{j-1}\right]}{\E\left[(1-Z_{j})(1-Y_{j-1})W^{(d)}_{j-1}\right]}\right]\\
&\quad \times \prod_{j=1}^{k} \left[\frac{\E\left[(1-Z_{j-1})(1-Y_{j-2})W^{(d)}_{j-2}\right] - \E\left[Y_{j-1}(1-Z_{j-1})(1-Y_{j-2})W^{(d)}_{j-2}\right]}{\E\left[(1-Y_{j-1})(1-Z_{j-1})W^{(d)}_{j-1}\right]}\right].
\end{align*}
\end{lemma}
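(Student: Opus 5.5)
The plan is to obtain the identity purely algebraically from Theorem~\ref{theorem:weighting_identification} (Lemma~1 of \citet{morzywolek_estimation_2022}), which writes each counterfactual hazard as a ratio of IPC-weighted expectations. To keep the bookkeeping manageable I will introduce, for $j = 0, \dots, k$, the shorthands
\begin{gather*}
a_j = \E\left[(1-Y_{j-1})(1-Z_{j-1})W^{(d)}_{j-1}\right], \qquad b_j = \E\left[(1-Z_{j})(1-Y_{j-1})W^{(d)}_{j-1}\right],\\
y_j = \E\left[Y_{j}(1-Z_{j})(1-Y_{j-1})W^{(d)}_{j-1}\right], \qquad z_j = \E\left[Z_{j}(1-Y_{j-1})(1-Z_{j-1})W^{(d)}_{j-1}\right].
\end{gather*}
In this notation Theorem~\ref{theorem:weighting_identification} reads $h^{(Y)}_j(d) = y_j/b_j$ and $h^{(Z)}_j(d) = z_j/a_j$ for $j = 1, \dots, k$. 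Positivity (Assumption~\ref{positivity}) guarantees that the weights are well defined, and I will assume the risk-set denominators $a_j, b_j$ are nonzero, which the hazards already require.

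\textbf{Step 1 (substitution).} Substituting these ratios into the left-hand side gives
\begin{equation*}
\frac{y_k}{b_k}\prod_{j=1}^{k}\frac{a_j - z_j}{a_j}\prod_{j=1}^{k-1}\frac{b_j - y_j}{b_j}.
\end{equation*}
Here the factor $1-h^{(Z)}_k$ has been merged into the $Z$-product, so that this product runs up to $k$, while the $Y$-product stops at $k-1$.

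\textbf{Step 2 (reading off the right-hand side).} In the stated right-hand side, the numerator of the first bracketed product is exactly $a_j - z_j$ and its denominator is $b_j$. The numerator of the second bracketed product is $b_{j-1} - y_{j-1}$, and its denominator is $a_j$.

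\textbf{Step 3 (reindexing).} I reindex the second product by $j \mapsto j+1$, so that its numerators become $\prod_{j=0}^{k-1}(b_j - y_j)$. The $j=0$ factor must then be checked using the conventions of Section~\ref{sec:supp-prelims}. Since $Y_{-1}=Z_{-1}=Y_0=Z_0=0$ and $W_{-1}=1$, we have $b_0 = \E[1] = 1$ and $y_0 = 0$, so this factor equals $1$.

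\textbf{Step 4 (matching).} After Step~3 both sides equal
\begin{equation*}
y_k \,\frac{\prod_{j=1}^k (a_j - z_j)\,\prod_{j=1}^{k-1}(b_j - y_j)}{\prod_{j=1}^k a_j \,\prod_{j=1}^k b_j}.
\end{equation*}
On the left this is just a regrouping of the Step~1 expression. On the right it comes from collecting the denominators $b_j$ of the first product and $a_j$ of the second.

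I expect no genuine obstacle, since the argument is a rearrangement of finite products. The only delicate points are two index shifts. First, the $Y$-survival factors stop at $k-1$ while the $Z$-survival factors run to $k$; this asymmetry is absorbed by the $1/b_k$ that comes with $h^{(Y)}_k$. Second, the boundary term $j=1$ of the second product must be verified carefully to equal $1/a_1$, using the index $-1$ conventions. As a sanity check, I also note that $a_j - z_j = b_j$ by monotonicity of $Z$, although the proof does not need this.
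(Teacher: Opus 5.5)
Your proof is correct and follows the same route as the paper. Both substitute the weighted-ratio representations of Theorem~\ref{theorem:weighting_identification} and then regroup the finite product of numerators and denominators. The only difference is at the boundary: you identify the leftover factor $1/\E[W^{(d)}_0]$ directly with the $j=1$ factor of the second product, whose numerator is $1$ by the index conventions. The paper instead evaluates both to one using $\E[W^{(d)}_0]=1$ via (P3), so your version is slightly cleaner because it needs no property of the propensities.
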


\begin{proof}[Proof of Lemma \ref{lemma:young_lemma4}]
Following the framework of \cite{young_causal_2020}, we begin by substituting the definitions of the discrete-time hazard functions $h_k^{(Y)}(d)$ and $h_k^{(Z)}(d)$ into the left-hand side. {We work with the more general product $\prod_{j=t}^{k-1}$ for an arbitrary starting index $t \in \{1, \dots, k-1\}$ and fit it to $t = 1$ at the end of the proof; each hazard is replaced by its weighted-expectation representation from Theorem \ref{theorem:weighting_identification}.} By definition of these conditional expectations:
\allowdisplaybreaks
\begin{align*}
    &{h}_{k}^{(Y)}(d)\left\{1-{h}_{k}^{(Z)}(d)\right\}\prod_{j=t}^{k-1}\left\{1-{h}_{j}^{(Y)}(d)\right\}\left\{1-{h}_{j}^{(Z)}(d)\right\} \\
    &= \frac{\E\left[Y_{k}(1-Z_{k})(1-Y_{k-1})W^{(d)}_{k-1}\right]}{\E\left[(1-Z_{k})(1-Y_{k-1})W^{(d)}_{k-1}\right]} 
    \left[1-\frac{\E\left[Z_{k}(1-Y_{k-1})(1-Z_{k-1})W^{(d)}_{k-1}\right]}{\E\left[(1-Y_{k-1})(1-Z_{k-1})W^{(d)}_{k-1}\right]}\right]\\
    &\quad \times \left[1-\frac{\E\left[Y_{k-1}(1-Z_{k-1})(1-Y_{k-2})W^{(d)}_{k-2}\right]}{\E\left[(1-Z_{k-1})(1-Y_{k-2})W^{(d)}_{k-2}\right]}\right] 
    \left[1-\frac{\E\left[Z_{k-1}(1-Y_{k-2})(1-Z_{k-2})W^{(d)}_{k-2}\right]}{\E\left[(1-Y_{k-2})(1-Z_{k-2})W^{(d)}_{k-2}\right]}\right]\\
    &\quad \times\cdots\times \left[1-\frac{\E\left[Y_{t}(1-Z_{t})(1-Y_{t-1})W^{(d)}_{t-1}\right]}{\E\left[(1-Z_{t})(1-Y_{t-1})W^{(d)}_{t-1}\right]}\right] 
    \left[1-\frac{\E\left[Z_{t}(1-Y_{t-1})(1-Z_{t-1})W^{(d)}_{t-1}\right]}{\E\left[(1-Y_{t-1})(1-Z_{t-1})W^{(d)}_{t-1}\right]}\right].
\end{align*}
Next, we simplify the terms within the brackets by finding a common denominator for each subtraction:
\allowdisplaybreaks
\begin{align*}
    &= \frac{\E\left[Y_{k}(1-Z_{k})(1-Y_{k-1})W^{(d)}_{k-1}\right]}{\E\left[(1-Z_{k})(1-Y_{k-1})W^{(d)}_{k-1}\right]}\\
&\quad \times \left[\frac{\E\left[(1-Y_{k-1})(1-Z_{k-1})W^{(d)}_{k-1}\right] - \E\left[Z_{k}(1-Y_{k-1})(1-Z_{k-1})W^{(d)}_{k-1}\right]}{\E\left[(1-Y_{k-1})(1-Z_{k-1})W^{(d)}_{k-1}\right]}\right]\\
&\quad \times \left[\frac{\E\left[(1-Z_{k-1})(1-Y_{k-2})W^{(d)}_{k-2}\right] - \E\left[Y_{k-1}(1-Z_{k-1})(1-Y_{k-2})W^{(d)}_{k-2}\right]}{\E\left[(1-Z_{k-1})(1-Y_{k-2})W^{(d)}_{k-2}\right]}\right]\\
&\quad \times \left[\frac{\E\left[(1-Y_{k-2})(1-Z_{k-2})W^{(d)}_{k-2}\right] - \E\left[Z_{k-1}(1-Y_{k-2})(1-Z_{k-2})W^{(d)}_{k-2}\right]}{\E\left[(1-Y_{k-2})(1-Z_{k-2})W^{(d)}_{k-2}\right]}\right]\\
&\quad \times\cdots\times \left[\frac{\E\left[(1-Z_{t})(1-Y_{t-1})W^{(d)}_{t-1}\right] - \E\left[Y_{t}(1-Z_{t})(1-Y_{t-1})W^{(d)}_{t-1}\right]}{\E\left[(1-Z_{t})(1-Y_{t-1})W^{(d)}_{t-1}\right]}\right]\\
&\quad \times \left[\frac{\E\left[(1-Y_{t-1})(1-Z_{t-1})W^{(d)}_{t-1}\right] - \E\left[Z_{t}(1-Y_{t-1})(1-Z_{t-1})W^{(d)}_{t-1}\right]}{\E\left[(1-Y_{t-1})(1-Z_{t-1})W^{(d)}_{t-1}\right]}\right].
\end{align*}

{The expression simplifies through a telescoping rearrangement of the denominators. Then list the bracketed factors in the order displayed so that the leading ratio is factor $0$ and the brackets alternate between competing-event terms (indices $k, k-1, \dots, t$) and event-of-interest terms (indices $k-1, k-2, \dots, t$). Because the overall expression is a single product, we may reassign denominators freely. We assign the denominator of the factor immediately preceding it to each bracketed factor, and have the final unassigned denominator, $\E[(1-Y_{t-1})(1-Z_{t-1})W^{(d)}_{t-1}]$, as a standalone factor at the end. This yields:}
\begin{align*}
&= \E\left[Y_{k}(1-Z_{k})(1-Y_{k-1})W^{(d)}_{k-1}\right]\\
&\quad \times \left[\frac{\E\left[(1-Y_{k-1})(1-Z_{k-1})W^{(d)}_{k-1}\right] - \E\left[Z_{k}(1-Y_{k-1})(1-Z_{k-1})W^{(d)}_{k-1}\right]}{\E\left[(1-Z_{k})(1-Y_{k-1})W^{(d)}_{k-1}\right]}\right]\\
&\quad \times \left[\frac{\E\left[(1-Z_{k-1})(1-Y_{k-2})W^{(d)}_{k-2}\right] - \E\left[Y_{k-1}(1-Z_{k-1})(1-Y_{k-2})W^{(d)}_{k-2}\right]}{\E\left[(1-Y_{k-1})(1-Z_{k-1})W^{(d)}_{k-1}\right]}\right]\\
&\quad \times \left[\frac{\E\left[(1-Y_{k-2})(1-Z_{k-2})W^{(d)}_{k-2}\right] - \E\left[Z_{k-1}(1-Y_{k-2})(1-Z_{k-2})W^{(d)}_{k-2}\right]}{\E\left[(1-Z_{k-1})(1-Y_{k-2})W^{(d)}_{k-2}\right]}\right]\\
&\quad \times\cdots\times \left[\frac{\E\left[(1-Z_{t})(1-Y_{t-1})W^{(d)}_{t-1}\right] - \E\left[Y_{t}(1-Z_{t})(1-Y_{t-1})W^{(d)}_{t-1}\right]}{{\E\left[(1-Y_{t})(1-Z_{t})W^{(d)}_{t}\right]}}\right]\\
&\quad \times \left[\frac{\E\left[(1-Y_{t-1})(1-Z_{t-1})W^{(d)}_{t-1}\right] - \E\left[Z_{t}(1-Y_{t-1})(1-Z_{t-1})W^{(d)}_{t-1}\right]}{\E\left[(1-Z_{t})(1-Y_{t-1})W^{(d)}_{t-1}\right]}\right]\\
&\quad \times \left[\frac{1}{\E\left[(1-Y_{t-1})(1-Z_{t-1})W^{(d)}_{t-1}\right]}\right].
\end{align*}
To obtain the final result, we set the initial condition at $t=1$. {Since $Y_0 = Z_0 = 0$, the base term is $\E\left[(1-Y_{0})(1-Z_{0})W^{(d)}_{0}\right] = \E\left[W^{(d)}_0\right] = \E\left[C_0 U_0 / (\pi^{C}_0 \pi^U_0)\right]$. Conditioning first on $(X_0, A_0)$ and applying the first display of (P3) with $j = 0$, then conditioning on $X_0$ and applying the second display of (P3), gives $\E\left[W^{(d)}_0\right] = 1$; note that this calculation is why the compatibility product in \eqref{eq:ipcw} must begin at $j = 0$. The trailing standalone factor in the previous display therefore equals one. Finally, the $j$-indexed factors of the two products in the lemma statement that involve indices $0$, $-1$, and the conventions $Y_{-1} = Z_{-1} = 0$, $W_{-1} = 1$ each equal one as well: for instance, the $j = 1$ term of the second product is $\left\{\E[(1-Z_0)(1-Y_{-1})W_{-1}] - \E[Y_0(1-Z_0)(1-Y_{-1})W_{-1}]\right\}/\E[(1-Y_0)(1-Z_0)W_0] = (1-0)/1 = 1$. Under these conditions, the products consolidate and the expression collapses to the following form:}
\allowdisplaybreaks
\begin{align*}
{h}_{k}^{(Y)}(d)&\left\{1-{h}_{k}^{(Z)}(d)\right\}\prod_{j=1}^{k-1}\left\{1-{h}_{j}^{(Y)}(d)\right\}\left\{1-{h}_{j}^{(Z)}(d)\right\}\\
&= \E\left[Y_{k}(1-Z_{k})(1-Y_{k-1})W^{(d)}_{k-1}\right]\\
&\quad \times \prod_{j=1}^{k} \left[\frac{\E\left[(1-Y_{j-1})(1-Z_{j-1})W^{(d)}_{j-1}\right] - \E\left[Z_{j}(1-Y_{j-1})(1-Z_{j-1})W^{(d)}_{j-1}\right]}{\E\left[(1-Z_{j})(1-Y_{j-1})W^{(d)}_{j-1}\right]}\right]\\
&\quad \times \prod_{j=1}^{k} \left[\frac{\E\left[(1-Z_{j-1})(1-Y_{j-2})W^{(d)}_{j-2}\right] - \E\left[Y_{j-1}(1-Z_{j-1})(1-Y_{j-2})W^{(d)}_{j-2}\right]}{\E\left[(1-Y_{j-1})(1-Z_{j-1})W^{(d)}_{j-1}\right]}\right].
\end{align*}
This matches the form of the lemma.
\end{proof}

The following lemma establishes the relationship between weighted expectations at successive time points, allowing for the reduction of the weights in the expression derived in Lemma \ref{lemma:young_lemma4}.

\begin{lemma}\label{lemma:young_lemma5}
{For any $j \in \{1, \dots, K\}$, the following equalities hold for expectations involving the weights $W^{(d)}_{j-1}$. These are properties of the observed-data distribution and the definition of the true propensities \eqref{eq:pi_C}--\eqref{eq:pi_U} alone, and use no causal assumptions:}
\begin{align*}
    &\E\left[(1-Z_{j})(1-Y_{j-1})W^{(d)}_{j-1}\right]\\ 
    &\quad = \E\left[(1-Y_{j-1})(1-Z_{j-1})W^{(d)}_{j-1}\right] - \E\left[Z_{j}(1-Y_{j-1})(1-Z_{j-1})W^{(d)}_{j-1}\right],
    \shortintertext{and}
    &\E\left[(1-Y_{j-1})(1-Z_{j-1})W^{(d)}_{j-1}\right]\\ 
    &\quad = \E\left[(1-Z_{j-1})(1-Y_{j-2})W^{(d)}_{j-2}\right] - \E\left[Y_{j-1}(1-Z_{j-1})(1-Y_{j-2})W^{(d)}_{j-2}\right].
\end{align*}
\end{lemma}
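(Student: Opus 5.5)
Both identities reduce to an elementary pointwise identity among the indicators, together with one conditional-expectation step for the second identity. The pointwise facts come from monotonicity of the counting processes: since $Z_{j-1}=1 \Rightarrow Z_j = 1$ and $Y_{j-2}=1 \Rightarrow Y_{j-1}=1$, we have
\begin{equation*}
(1-Z_j) = (1-Z_j)(1-Z_{j-1}), \qquad (1-Y_{j-1}) = (1-Y_{j-1})(1-Y_{j-2}) \quad \text{a.s.}
\end{equation*}
These hold for $j=1$ as well, by the conventions $Y_{-1}=Z_{-1}=Y_0=Z_0=0$.

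\emph{First identity.} Using the first fact, write $(1-Z_j) = (1-Z_{j-1}) - Z_j(1-Z_{j-1})$ pointwise. Multiply by $(1-Y_{j-1})W_{j-1}$ and take expectations; linearity gives the first display immediately. No propensity or causal structure is used. The only requirement is integrability, which holds because positivity makes $W_{j-1}$ finite and the indicators are bounded.

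\emph{Second identity.} Using the second fact, the right-hand side combines pointwise into $\E[(1-Y_{j-1})(1-Z_{j-1})W_{j-2}] = \E[\Bcal_{j-1}W_{j-2}]$. The claim is therefore equivalent to
\begin{equation*}
\E\left[\Bcal_{j-1}W_{j-1}\right] = \E\left[\Bcal_{j-1}W_{j-2}\right],
\end{equation*}
that is, the one-step weight increment averages to one on the survivors.

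To prove this, write $W_{j-1} = W_{j-2}\, \mathbbm{1}(A_{j-1}=d_{j-1})\, U_{j-1} / (\pi^C_{j-1}\pi^U_{j-1})$. This uses $C_{j-1} = C_{j-2}\mathbbm{1}(A_{j-1}=d_{j-1})$ and the fact that the product of the $U$'s telescopes in the same way. The remaining factor $\Bcal_{j-1}W_{j-2}\mathbbm{1}(A_{j-1}=d_{j-1})/(\pi^C_{j-1}\pi^U_{j-1})$ is a function of the history $(\overline X_{j-1},\overline Y_{j-1},\overline Z_{j-1},\overline A_{j-1},\overline U_{j-2})$, and it vanishes off $F_{j-1}$.

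The argument then proceeds in two conditioning steps.
\begin{itemize}
\item Condition on this history and apply the first display of (P3). This replaces $U_{j-1}$ by $\pi^U_{j-1}$, which cancels the denominator.
\item Condition on the history excluding $A_{j-1}$. The remaining quantity $\Bcal_{j-1}W_{j-2}$ is a function of that history and vanishes off $F'_{j-1}$. Applying the second display of (P3) replaces $\mathbbm{1}(A_{j-1}=d_{j-1})$ by $\pi^C_{j-1}$, which again cancels.
\end{itemize}
What remains is $\E[\Bcal_{j-1}W_{j-2}]$, as required. The case $j=1$ is exactly the computation $\E[W_0]=1$ already carried out in the proof of Lemma~\ref{lemma:young_lemma4}, with $W_{-1}=1$.

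The main obstacle is the bookkeeping in these two conditioning steps. One must check that every indicator in the integrand restricts to the events $F_{j-1}$ and $F'_{j-1}$ of (P3). One must also check that the convention $\pi=1$ after an event is harmless; it is, because $\Bcal_{j-1}$ annihilates exactly those histories. Positivity (Assumption~\ref{positivity}) is used only to keep the ratios well defined, consistent with the lemma's claim that no causal assumptions are needed.
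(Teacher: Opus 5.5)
Your proposal is correct and follows the paper's proof essentially step for step. The first identity comes from inserting the redundant factor $(1-Z_{j-1})$ and distributing. The second comes from the weight reduction $\E[\Bcal_{j-1}W_{j-1}] = \E[\Bcal_{j-1}W_{j-2}]$, obtained by conditioning first on the history preceding $U_{j-1}$ and then on the history preceding $A_{j-1}$ and using the two displays of (P3), followed by inserting $(1-Y_{j-2})$; you simply perform that last insertion on the right-hand side first.

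The only loose spot is your integrability remark: almost-sure finiteness of $W_{j-1}$ does not by itself give integrability. The right reason is that all integrands are nonnegative, so the same tower-property steps hold in the extended sense and, iterated down to $W_{-1}=1$, show that every expectation involved is at most one.
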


\begin{proof}[Proof of Lemma \ref{lemma:young_lemma5}]
Following the approach of \cite{young_causal_2020}, we first observe that if an individual has not experienced the competing event by time $j$ ($Z_j=0$) and has not experienced the event of interest by time $j-1$ ($Y_{j-1}=0$), they necessarily must not have experienced the competing event at the previous time ($Z_{j-1}=0$). Thus, $(Z_{j} = 0, Y_{j-1} = 0)$ implies $(1-Z_{j-1})=1$. Using this property to insert $(1-Z_{j-1})$ and then distributing the terms:
\begin{align*}
    &\E\left[(1-Z_{j})(1-Y_{j-1})W^{(d)}_{j-1}\right]\\ 
    &\quad = \E\left[(1-Z_{j})(1-Y_{j-1})(1-Z_{j-1})W^{(d)}_{j-1}\right]\\
    &\quad = \E\left[(1-Y_{j-1})(1-Z_{j-1})W^{(d)}_{j-1} - Z_{j}(1-Y_{j-1})(1-Z_{j-1})W^{(d)}_{j-1}\right]\\
    &\quad = \E\left[(1-Y_{j-1})(1-Z_{j-1})W^{(d)}_{j-1}\right] - \E\left[Z_{j}(1-Y_{j-1})(1-Z_{j-1})W^{(d)}_{j-1}\right].
\end{align*}
{For the second equality, we proceed in two steps: we first show that the weight $W^{(d)}_{j-1}$ can be reduced to $W^{(d)}_{j-2}$ under the expectation, and we then insert the redundant survival indicator and distribute, exactly as above but at the reduced weight. (Performing the steps in this order matters: the insertion argument is purely algebraic and is valid at either weight, whereas the weight-reduction argument below applies to a factor of the form $f(\overline{X}_{j-1}, \overline{Y}_{j-1}, \overline{Z}_{j-1})\,W^{(d)}_{j-1}$, and $(1-Y_{j-1})(1-Z_{j-1})$ is such a factor.)}

{For the weight-reduction step, write}
\begin{equation*}
    W^{(d)}_{j-1} = W^{(d)}_{j-2} \times \frac{\mathbbm{1}(C_{j-1}=1)}{\pi^C_{j-1}} \times \frac{\mathbbm{1}(U_{j-1}=1)}{\pi^U_{j-1}},
\end{equation*}
which follows from the definition \eqref{eq:ipcw}. In the chain of equalities below, the successive steps are justified as follows. The first equality substitutes this decomposition; there, the two conditional probabilities in the denominators are the true propensities $\pi^C_{j-1}$ and $\pi^U_{j-1}$ rewritten via (P2): on the event $\{C_{j-1} = 1, \overline{U}_{j-2} = 1\}$, which is enforced by the indicators in the numerator (and off which the entire integrand vanishes), conditioning on $(\overline{X}_{j-1},\allowbreak \overline{Z}_{j-1},\allowbreak \overline{Y}_{j-1},\allowbreak \overline{A}_{j-1},\allowbreak \overline{U}_{j-2})$ carries the same information as the conditioning sets in \eqref{eq:pi_C}--\eqref{eq:pi_U}, because on that event the treatment history is the deterministic regime path. The second equality is the law of iterated expectations; the third pulls out of the inner conditional expectation every factor that is a function of the conditioning variables (all factors except $U_{j-1}$; note $W^{(d)}_{j-2}$ is such a function by (P4)); the fourth applies (P3) to replace $\E\left[U_{j-1} \mid \cdot\right]$ by the displayed probability on the event enforced by $\mathbbm{1}(C_{j-1}=1)$ and the indicators inside $W^{(d)}_{j-2}$; the fifth cancels the ratio; the sixth rewrites $\mathbbm{1}(C_{j-1} = 1) = \mathbbm{1}(A_{j-1} = d_{j-1})$ on the event $\{C_{j-2} = 1\}$ enforced by $W^{(d)}_{j-2}$; and the remaining equalities repeat the same conditioning argument for the treatment indicator, using the second display of (P3):
\allowdisplaybreaks
\begin{align*}
&\E\left[(1-Y_{j-1})(1-Z_{j-1})W^{(d)}_{j-1}\right]\\
    &= \E\bigg[(1-Y_{j-1})(1-Z_{j-1})W^{(d)}_{j-2} \frac{\scriptstyle{\mathbbm{1}(C_{j-1} = 1)}}{\scriptstyle{\Pr(C_{j-1} = 1 \; | \; \overline{X}_{j-1}, \overline{Z}_{j-1}, \overline{Y}_{j-1}, \overline{C}_{j-2} = 1, \overline{U}_{j-2} = 1)}}\\
&\qquad\quad \times \frac{\scriptstyle{\mathbbm{1}(U_{j-1} = 1)}}{\scriptstyle{\Pr(U_{j-1} = 1 \; | \; \overline{X}_{j-1}, \overline{Z}_{j-1}, \overline{Y}_{j-1}, \overline{A}_{j-1}, \overline{U}_{j-2})}}\bigg]\\
    &= \E\Bigg[\E\bigg[ (1-Y_{j-1})(1-Z_{j-1})W^{(d)}_{j-2} \frac{\scriptstyle{\mathbbm{1}(C_{j-1} = 1)}}{\scriptstyle{\Pr(C_{j-1} = 1 \; | \; \overline{X}_{j-1}, \overline{Z}_{j-1}, \overline{Y}_{j-1}, \overline{C}_{j-2} = 1, \overline{U}_{j-2} = 1)}}\\
&\qquad\qquad \times \frac{\scriptstyle{\mathbbm{1}(U_{j-1} = 1)}}{\scriptstyle{\Pr(U_{j-1} = 1 \; | \; \overline{X}_{j-1}, \overline{Z}_{j-1}, \overline{Y}_{j-1}, \overline{A}_{j-1}, \overline{U}_{j-2})}} \bigg| \; \overline{X}_{j-1}, \overline{Z}_{j-1}, \overline{Y}_{j-1}, \overline{C}_{j-1}, \overline{U}_{j-2} \bigg]\Bigg]\\
    &= \E\Bigg[ (1-Y_{j-1})(1-Z_{j-1})W^{(d)}_{j-2}\; \frac{\scriptstyle{\mathbbm{1}(C_{j-1} = 1)}}{\scriptstyle{\Pr(C_{j-1} = 1 \; | \; \overline{X}_{j-1}, \overline{Z}_{j-1}, \overline{Y}_{j-1}, \overline{C}_{j-2} = 1, \overline{U}_{j-2} = 1)}}\\
&\qquad\quad \times \frac{\scriptstyle{1}}{\scriptstyle{\Pr(U_{j-1} = 1 \; | \; \overline{X}_{j-1}, \overline{Z}_{j-1}, \overline{Y}_{j-1}, \overline{A}_{j-1}, \overline{U}_{j-2})}} \E\left[ U_{j-1} \big| \; \overline{X}_{j-1}, \overline{Z}_{j-1}, \overline{Y}_{j-1}, \overline{C}_{j-1}, \overline{U}_{j-2} \right]\Bigg]\\
    &= \E\Bigg[ (1-Y_{j-1})(1-Z_{j-1})W^{(d)}_{j-2}\; \frac{\scriptstyle{\mathbbm{1}(C_{j-1} = 1)}}{\scriptstyle{\Pr(C_{j-1} = 1 \; | \; \overline{X}_{j-1}, \overline{Z}_{j-1}, \overline{Y}_{j-1}, \overline{C}_{j-2} = 1, \overline{U}_{j-2} = 1)}}\\
&\qquad\quad \times \frac{\scriptstyle{\Pr(U_{j-1} = 1 \; | \; \overline{X}_{j-1}, \overline{Z}_{j-1}, \overline{Y}_{j-1}, \overline{A}_{j-1}, \overline{U}_{j-2})}}{\scriptstyle{\Pr(U_{j-1} = 1 \; | \; \overline{X}_{j-1}, \overline{Z}_{j-1}, \overline{Y}_{j-1}, \overline{A}_{j-1}, \overline{U}_{j-2})}} \Bigg]\\
    &= \E\left[ (1-Y_{j-1})(1-Z_{j-1})W^{(d)}_{j-2}\; \frac{\scriptstyle{\mathbbm{1}(C_{j-1} = 1)}}{\scriptstyle{\Pr(C_{j-1} = 1 \; | \; \overline{X}_{j-1}, \overline{Z}_{j-1}, \overline{Y}_{j-1}, \overline{C}_{j-2} = 1, \overline{U}_{j-2} = 1)}}\right]\\
    &= \E\left[ (1-Y_{j-1})(1-Z_{j-1})W^{(d)}_{j-2}\; \frac{\scriptstyle{\mathbbm{1}(A_{j-1} = d_{j-1})}}{\scriptstyle{\Pr(A_{j-1} = d_{j-1} \; | \; \overline{X}_{j-1}, \overline{Z}_{j-1}, \overline{Y}_{j-1}, \overline{C}_{j-2} = 1, \overline{U}_{j-2} = 1)}}\right]\\
    &= \E\Bigg[\E\bigg[ (1-Y_{j-1})(1-Z_{j-1})W^{(d)}_{j-2}\\
&\qquad\qquad \times \frac{\scriptstyle{\mathbbm{1}(A_{j-1} = d_{j-1})}}{\scriptstyle{\Pr(A_{j-1} = d_{j-1} \; | \; \overline{X}_{j-1}, \overline{Z}_{j-1}, \overline{Y}_{j-1}, \overline{C}_{j-2} = 1, \overline{U}_{j-2} = 1)}} \bigg| \; \overline{X}_{j-1}, \overline{Z}_{j-1}, \overline{Y}_{j-1}, \overline{C}_{j-2} = 1, \overline{U}_{j-2} = 1 \bigg]\Bigg]\\
    &= \E\Bigg[ (1-Y_{j-1})(1-Z_{j-1})W^{(d)}_{j-2}\; \frac{\scriptstyle{1}}{\scriptstyle{\Pr(A_{j-1} = d_{j-1} \; | \; \overline{X}_{j-1}, \overline{Z}_{j-1}, \overline{Y}_{j-1}, \overline{C}_{j-2} = 1, \overline{U}_{j-2} = 1)}}\\
&\qquad\quad \times \E\left[ \mathbbm{1}(A_{j-1} = d_{j-1}) \big| \; \overline{X}_{j-1}, \overline{Z}_{j-1}, \overline{Y}_{j-1}, \overline{C}_{j-2} = 1, \overline{U}_{j-2} = 1 \right]\Bigg]\\
    &= \E\Bigg[ (1-Y_{j-1})(1-Z_{j-1})W^{(d)}_{j-2}\; \frac{\scriptstyle{\Pr(A_{j-1} = d_{j-1} \; | \; \overline{X}_{j-1}, \overline{Z}_{j-1}, \overline{Y}_{j-1}, \overline{C}_{j-2} = 1, \overline{U}_{j-2} = 1)}}{\scriptstyle{\Pr(A_{j-1} = d_{j-1} \; | \; \overline{X}_{j-1}, \overline{Z}_{j-1}, \overline{Y}_{j-1}, \overline{C}_{j-2} = 1, \overline{U}_{j-2} = 1)}} \Bigg]\\
    &= \E\left[(1-Y_{j-1})(1-Z_{j-1})W^{(d)}_{j-2}\right].
\end{align*}
{With the weight reduced, the insertion step proceeds exactly as for the first equality: $(Y_{j-1} = 0, Z_{j-1} = 0)$ implies $Y_{j-2} = 0$ by monotonicity of the outcome process, so the indicator $(1-Y_{j-2})$ may be inserted without changing the integrand, and distributing the product gives}
\begin{align*}
    &\E\left[(1-Y_{j-1})(1-Z_{j-1})W^{(d)}_{j-2}\right]\\
    &\quad = \E\left[(1-Y_{j-1})(1-Z_{j-1})(1-Y_{j-2})W^{(d)}_{j-2}\right]\\
    &\quad = \E\left[(1-Z_{j-1})(1-Y_{j-2})W^{(d)}_{j-2} - Y_{j-1}(1-Z_{j-1})(1-Y_{j-2})W^{(d)}_{j-2}\right]\\
    &\quad = \E\left[(1-Z_{j-1})(1-Y_{j-2})W^{(d)}_{j-2}\right] - \E\left[Y_{j-1}(1-Z_{j-1})(1-Y_{j-2})W^{(d)}_{j-2}\right],
\end{align*}
which, combined with the weight-reduction identity, establishes the second equality.
\end{proof}

\begin{proof}[Proof of Theorem \ref{theorem:g-formula_identification}]
{Fix $k \in \{1, \dots, K\}$. Under Assumptions \ref{sequential-ignorability}, \ref{positivity}, and \ref{consistency}, which justify the weighting identities in Theorem \ref{theorem:weighting_identification}, Lemma \ref{lemma:young_lemma4} expands the incidence at time $k$ as
\begin{align*}
{h}_{k}^{(Y)}(d)&\left\{1-{h}_{k}^{(Z)}(d)\right\}\prod_{j=1}^{k-1}\left\{1-{h}_{j}^{(Y)}(d)\right\}\left\{1-{h}_{j}^{(Z)}(d)\right\}\\
&= \E\left[Y_{k}(1-Z_{k})(1-Y_{k-1})W^{(d)}_{k-1}\right]\\
&\quad \times \prod_{j=1}^{k} \left[\frac{\E\left[(1-Y_{j-1})(1-Z_{j-1})W^{(d)}_{j-1}\right] - \E\left[Z_{j}(1-Y_{j-1})(1-Z_{j-1})W^{(d)}_{j-1}\right]}{\E\left[(1-Z_{j})(1-Y_{j-1})W^{(d)}_{j-1}\right]}\right]\\
&\quad \times \prod_{j=1}^{k} \left[\frac{\E\left[(1-Z_{j-1})(1-Y_{j-2})W^{(d)}_{j-2}\right] - \E\left[Y_{j-1}(1-Z_{j-1})(1-Y_{j-2})W^{(d)}_{j-2}\right]}{\E\left[(1-Y_{j-1})(1-Z_{j-1})W^{(d)}_{j-1}\right]}\right].
\end{align*}
Every bracketed factor equals one, by Lemma \ref{lemma:young_lemma5} applied factor by factor. For the $j$-th factor of the first product, the first equality of Lemma \ref{lemma:young_lemma5} states that the numerator, $\E\left[(1-Y_{j-1})(1-Z_{j-1})W^{(d)}_{j-1}\right] - \E\left[Z_{j}(1-Y_{j-1})(1-Z_{j-1})W^{(d)}_{j-1}\right]$, equals the denominator $\E\left[(1-Z_{j})(1-Y_{j-1})W^{(d)}_{j-1}\right]$. For the $j$-th factor of the second product, the second equality of Lemma \ref{lemma:young_lemma5} states that the numerator, $\E\left[(1-Z_{j-1})(1-Y_{j-2})W^{(d)}_{j-2}\right] - \E\left[Y_{j-1}(1-Z_{j-1})(1-Y_{j-2})W^{(d)}_{j-2}\right]$, equals the denominator $\E\left[(1-Y_{j-1})(1-Z_{j-1})W^{(d)}_{j-1}\right]$. Both denominators are strictly positive by Assumption \ref{positivity} whenever the at-risk population is nonempty, so the cancellations are well defined. Therefore,
\begin{equation*}
{h}_{k}^{(Y)}(d)\left\{1-{h}_{k}^{(Z)}(d)\right\}\prod_{j=1}^{k-1}\left\{1-{h}_{j}^{(Y)}(d)\right\}\left\{1-{h}_{j}^{(Z)}(d)\right\} = \E\left[Y_{k}(1-Z_{k})(1-Y_{k-1})W^{(d)}_{k-1}\right].
\end{equation*}
Then, for the fixed index $k$,}
\begin{align*}
    \E\left[Y_{k}(1-Z_{k})(1-Y_{k-1})W^{(d)}_{k-1}\right] & = \E\left[Y_{k}(1-Z_{k}) \prod_{j=1}^{k-1}\bigg\{(1-Y_{j})(1-Z_{j}) \bigg\}W^{(d)}_{k-1}\right],
\end{align*}
because $Y$ and $Z$ are binary and, for all $j$, $Y_j = 1$ and $Z_j = 1$ imply that $\underline{Y}_{j+1} = 1$ and $\underline{Z}_{j+1} = 1$, respectively. Then, for observed data $\overline{V}_k = \left\{\overline{Y}_k, \overline{Z}_k, \overline{U}_{k-1}, \overline{C}_{k-1}, \overline{X}_{k-1}\right\}$, to express this expectation in terms of the observed data distribution, we expand it into its integral form:
\allowdisplaybreaks
\begin{align*}
    \E&\left[Y_{k}(1-Z_{k})\prod_{j=1}^{k-1}\bigg\{(1-Y_{j})(1-Z_{j}) \bigg\}W^{(d)}_{k-1}\right]\\ 
    &= \int_{\overline{x}_{k-1}}\sum_{\overline{c}_{k-1}}\sum_{\overline{u}_{k-1}}\sum_{\overline{z}_k}\sum_{\overline{y}_k} y_{k}(1-z_{k}) \prod_{j=1}^{k-1}\bigg\{(1-y_{j})(1-z_{j}) \bigg\}\\
    &\qquad \times \prod_{j=0}^{k-1} \frac{\scriptstyle{u_{j}}}{\scriptstyle{\Pr\left(U_{j} = 1 \; | \; \overline{Z}_{j}, \overline{Y}_{j}, \overline{U}_{j-1} = 1, \overline{C}_{j} = 1, \overline{X}_{j}\right)}} \;\frac{\scriptstyle{c_{j}}}{\scriptstyle{\Pr\left(C_{j} = 1 \; | \; \overline{Z}_{j}, \overline{Y}_{j}, \overline{U}_{j-1} = 1, \overline{C}_{j-1} = 1, \overline{X}_{j}\right)}}\\
    &\qquad \times \prod_{j=1}^{k} \bigg\{ p\left( y_j \mid \overline{z}_j, \overline{y}_{j-1}, \overline{x}_{j-1}, \overline{c}_{j-1}, \overline{u}_{j-1} \right)  p\left( z_j \mid \overline{z}_{j-1}, \overline{y}_{j-1}, \overline{x}_{j-1}, \overline{c}_{j-1}, \overline{u}_{j-1} \right)\\
    &\qquad\qquad\qquad \times p\left( u_{j-1} \mid \overline{z}_{j-1}, \overline{y}_{j-1}, \overline{x}_{j-1}, \overline{c}_{j-1}, \overline{u}_{j-2} \right)  p\left( c_{j-1} \mid \overline{z}_{j-1}, \overline{y}_{j-1}, \overline{x}_{j-1}, \overline{c}_{j-2}, \overline{u}_{j-2} \right)\\
    &\qquad\qquad\qquad \times \text{d}P \left( x_{j-1} \mid \overline{z}_{j-1}, \overline{y}_{j-1}, \overline{x}_{j-2}, \overline{c}_{j-2}, \overline{u}_{j-2} \right) \bigg\},
    \shortintertext{which, summing over binary $\overline{U}_{k-1}$ and $\overline{C}_{k-1}$,}
    &= \int_{\overline{x}_{k-1}}\sum_{\overline{z}_k}\sum_{\overline{y}_k} y_{k}(1-z_{k}) \prod_{j=1}^{k-1}\bigg\{(1-y_{j})(1-z_{j}) \bigg\}\\
    &\qquad \times \prod_{j=0}^{k-1}\frac{1}{\scriptstyle{\Pr\left(U_{j} = 1 \; | \; \overline{Z}_{j}, \overline{Y}_{j}, \overline{U}_{j-1} = 1, \overline{C}_{j} = 1, \overline{X}_{j}\right)}} \;\frac{1}{\scriptstyle{\Pr\left(C_{j} = 1 \; | \; \overline{Z}_{j}, \overline{Y}_{j}, \overline{U}_{j-1} = 1, \overline{C}_{j-1} = 1, \overline{X}_{j}\right)}}\\
    &\qquad \times \prod_{j=1}^{k} \bigg\{ p\left( y_j \mid \overline{z}_j, \overline{y}_{j-1}, \overline{x}_{j-1}, \overline{C}_{j-1} = 1, \overline{U}_{j-1} = 1 \right) \\
    &\qquad\qquad\qquad \times p\left( z_j \mid \overline{z}_{j-1}, \overline{y}_{j-1}, \overline{x}_{j-1}, \overline{C}_{j-1} = 1, \overline{U}_{j-1} = 1 \right)\\
    &\qquad\qquad\qquad \times \Pr\left(U_{j-1} = 1 \; | \; \overline{Z}_{j-1}, \overline{Y}_{j-1}, \overline{U}_{j-2} = 1, \overline{C}_{j-1} = 1, \overline{X}_{j-1}\right) \\
     &\qquad\qquad\qquad \times \Pr\left(C_{j-1} = 1 \; | \; \overline{Z}_{j-1}, \overline{Y}_{j-1}, \overline{U}_{j-2} = 1, \overline{C}_{j-2} = 1, \overline{X}_{j-1}\right)\\
    &\qquad\qquad\qquad \times \text{d}P \left( x_{j-1} \mid \overline{z}_{j-1}, \overline{y}_{j-1}, \overline{x}_{j-2}, \overline{C}_{j-2} = 1, \overline{U}_{j-2} = 1 \right) \bigg\}\\
    &= \int_{\overline{x}_{k-1}}\sum_{\overline{z}_k}\sum_{\overline{y}_k} y_{k}(1-z_{k}) \prod_{j=1}^{k-1}\bigg\{(1-y_{j})(1-z_{j}) \bigg\}\\
    &\qquad \times \prod_{j=1}^{k} \bigg\{ p\left( y_j \mid \overline{z}_j, \overline{y}_{j-1}, \overline{x}_{j-1}, \overline{C}_{j-1} = 1, \overline{U}_{j-1} = 1 \right) \\
    &\qquad\qquad\qquad \times p\left( z_j \mid \overline{z}_{j-1}, \overline{y}_{j-1}, \overline{x}_{j-1}, \overline{C}_{j-1} = 1, \overline{U}_{j-1} = 1 \right)\\
    &\qquad\qquad\qquad \times \text{d}P \left( x_{j-1} \mid \overline{z}_{j-1}, \overline{y}_{j-1}, \overline{x}_{j-2}, \overline{C}_{j-2} = 1, \overline{U}_{j-2} = 1 \right) \bigg\}.
    \shortintertext{Summing over binary $\overline{Y}_{k}$ and $\overline{Z}_{k}$,}
    &= \int_{\overline{x}_{k-1}} \Pr\left(Y_{k} = 1 \:|\: \overline{Z}_k = 0, \overline{Y}_{k-1} = 0, \overline{U}_{k-1} = 1, \overline{C}_{k-1} = 1, \overline{X}_{k-1} = \overline{x}_{k-1}\right)\\
    &\qquad \times \prod_{j=1}^{k} \bigg
    \{\Pr\left(Z_j = 0 \; | \; \overline{Y}_{j-1} = 0, \overline{Z}_{j-1} = 0, \overline{U}_{j-1} = 1, \overline{C}_{j-1} = 1, \overline{X}_{j-1} = \overline{x}_{j-1} \right)\\
    &\qquad\qquad\qquad \times \Pr\left({Y}_{j-1} = 0 \; | \; \overline{Z}_{j-1} = 0, \overline{Y}_{j-2} = 0, \overline{U}_{j-2} = 1, \overline{C}_{j-2} = 1, \overline{X}_{j-2} = \overline{x}_{j-2} \right) \\
    &\qquad\qquad\qquad \times \text{d}P \left( x_{j-1} \mid \overline{Z}_{j-1} = 0, \overline{Y}_{j-1} = 0, \overline{X}_{j-2} = \overline{x}_{j-2}, \overline{C}_{j-2} = 1, \overline{U}_{j-2} = 1 \right) \bigg\},
    \shortintertext{which, by definition, is:}
    &= \int_{\overline{x}_{k-1}} \mu_{Y_k}(\overline{x}_{k-1})\; \nu_{1 - Z_k}(\overline{x}_{k-1}) \prod_{j=1}^{k-1}\bigg\{\mu_{1 - Y_j}\left(\overline{x}_{j-1}\right)\; \nu_{1 - Z_j}\left(\overline{x}_{j-1}\right) \bigg\} \\\tag*{\textcircled{1}}
    &\qquad \times \prod_{j=1}^{k} \text{d}P \left( x_{j-1} \mid \overline{Z}_{j-1} = 0, \overline{Y}_{j-1} = 0, \overline{X}_{j-2} = \overline{x}_{j-2}, \overline{C}_{j-2} = 1, \overline{U}_{j-2} = 1 \right).
\end{align*}
Isolating the terms in \textcircled{1} that are dependent on $X_{k-1}$,
\allowdisplaybreaks
\begin{align*}
    \int_{{x}_{k-1}}& \mu_{Y_k}(\overline{x}_{k-1})\; \nu_{1 - Z_k}(\overline{x}_{k-1})\; \text{d}P \left( x_{k-1} \mid \overline{Z}_{k-1} = 0, \overline{Y}_{k-1} = 0, \overline{X}_{k-2} = \overline{x}_{k-2}, \overline{C}_{k-2} = 1, \overline{U}_{k-2} = 1 \right),\\
    \shortintertext{by definition results in:}
    &= \E \left[ \mu_{Y_k}(\overline{X}_{k-1})\; \nu_{1 - Z_k}(\overline{X}_{k-1})\; \big| \; \overline{Z}_{k-1} = 0, \overline{Y}_{k-1} = 0, \overline{C}_{k-2} = 1, \overline{U}_{k-2} = 1, \overline{X}_{k-2} \right]\\ \tag*{\textcircled{2}}
    &= \E \left[ \mu_{Y_k}(\overline{X}_{k-1})\; \nu_{1 - Z_k}(\overline{X}_{k-1})\; \big| \; \overline{\Ocal}_{k-1} \right].
\end{align*}
Substituting {\textcircled{2}} into {\textcircled{1}}, 
\begin{align*}
    \text{\textcircled{1}} &= \int_{\overline{x}_{k-2}} \E \left[ \mu_{Y_k}(\overline{X}_{k-1})\; \nu_{1 - Z_k}(\overline{X}_{k-1})\; \big| \; \overline{\Ocal}_{k-1} \right] \prod_{j=1}^{k-1}\bigg\{\mu_{1 - Y_j}\left(\overline{x}_{j-1}\right)\; \nu_{1 - Z_j}\left(\overline{x}_{j-1}\right) \bigg\}\\
    &\qquad \times \prod_{j=1}^{k-1} \bigg\{\text{d}P \left( x_{j-1} \mid \overline{Z}_{j-1} = 0, \overline{Y}_{j-1} = 0, \overline{X}_{j-2} = \overline{x}_{j-2}, \overline{C}_{j-2} = 1, \overline{U}_{j-2} = 1 \right) \bigg\}.
\end{align*}
Then, isolating the terms dependent on ${X_{k-2}}$ in \textcircled{1},
\begin{align*}
    \int_{{x}_{k-2}}& \E \left[ \mu_{Y_k}(\overline{X}_{k-1})\; \nu_{1 - Z_k}(\overline{X}_{k-1})\; \big| \; \overline{\Ocal}_{k-1} \right] \mu_{1 - Y_{k-1}}(\overline{x}_{k-2})\; \nu_{1 - Z_{k-1}}(\overline{x}_{k-2})\\
    &\quad \times \text{d}P \left( x_{k-2} \mid \overline{Z}_{k-2} = 0, \overline{Y}_{k-2} = 0, \overline{X}_{k-3} = \overline{x}_{k-3}, \overline{C}_{k-3} = 1, \overline{U}_{k-3} = 1 \right)\\
    \shortintertext{by definition results in:}
    &= \E \left[ \mu_{1 - Y_{k-1}}\left(\overline{X}_{k-2}\right)\; \nu_{1 - Z_{k-1}} \left(\overline{X}_{k-2}\right) \; \E \left[ \mu_{Y_k}(\overline{X}_{k-1})\; \nu_{1 - Z_k}(\overline{X}_{k-1}) \; \big| \; \overline{\Ocal}_{k-1} \right] \; \bigg| \; \overline{\Ocal}_{k-2} \right]\\ \tag*{\textcircled{3}}
    &= \E \left[ \E \left[ \mu_{Y_k}(\overline{X}_{k-1})\; \nu_{1 - Z_k}(\overline{X}_{k-1})\; \mu_{1 - Y_{k-1}}(\overline{X}_{k-2})\; \nu_{1 - Z_{k-1}} (\overline{X}_{k-2})\; \big| \; \overline{\Ocal}_{k-1} \right]\; \bigg| \; \overline{\Ocal}_{k-2} \right].
\end{align*}
Substituting {\textcircled{3}},
\begin{align*}
    \text{\textcircled{1}} &= \int_{\overline{x}_{k-3}} \E \left[ \E \left[ \mu_{Y_k}(\overline{X}_{k-1})\; \nu_{1 - Z_k}(\overline{X}_{k-1})\; \mu_{1 - Y_{k-1}}(\overline{X}_{k-2})\; \nu_{1 - Z_{k-1}} (\overline{X}_{k-2})\; \big| \; \overline{\Ocal}_{k-1} \right]\; \bigg| \; \overline{\Ocal}_{k-2} \right]\\
    &\qquad \times \prod_{j=1}^{k-2} \bigg\{\mu_{1 - Y_j}(\overline{x}_{j-1})\; \nu_{1 - Z_j}(\overline{x}_{j-1}) \bigg\}\\
     &\qquad \times \prod_{j=1}^{k-2} \bigg\{ \text{d}P \left( x_{j-1} \mid \overline{Z}_{j-1} = 0, \overline{Y}_{j-1} = 0, \overline{X}_{j-2} = \overline{x}_{j-2}, \overline{C}_{j-2} = 1, \overline{U}_{j-2} = 1 \right) \bigg\}.
\end{align*}
Repeating this recursively for ${X_{k-3},...,X_{0}}$, \textcircled{1} is equal to:
\begin{align*}
   &\int_{{x}_{0}} \E \Bigg[ \E \Bigg[ \cdots \E \Bigg[ \mu_{Y_k}(\overline{X}_{k-1}) \nu_{1 - Z_k}(\overline{X}_{k-1})\\
   &\qquad\qquad\qquad\quad \times \prod_{j=1}^{k-1} \mu_{1 - Y_{j}}(\overline{X}_{j-1}) \nu_{1 - Z_{j}}(\overline{X}_{j-1}) \big|  \overline{\Ocal}_{k-1} \Bigg] \cdots \big| \overline{\Ocal}_{2}\Bigg] \Big| \overline{\Ocal}_{1} \Bigg] \text{d}P\left({x}_{0}\right)\\
    &= \E \Bigg[ \E \Bigg[ \E \Bigg[ \cdots \E \Bigg[ \mu_{Y_k}(\overline{X}_{k-1}) \nu_{1 - Z_k}(\overline{X}_{k-1}) \\
    &\qquad\qquad\qquad\qquad \times \prod_{j=1}^{k-1} \mu_{1 - Y_{j}}(\overline{X}_{j-1}) \nu_{1 - Z_{j}}(\overline{X}_{j-1}) \big|  \overline{\Ocal}_{k-1} \Bigg] \cdots \big| \overline{\Ocal}_{2}\Bigg] \Big| \overline{\Ocal}_{1} \Bigg]\Bigg]\\
    &= \E \left[ G_1 ^{(k)}\right].
\end{align*}
This constitutes the g-formula representation of the cumulative incidence.
\end{proof}

\section{Efficient Influence Function}\label{sec:supp-EIF}

Let $O_k$ represent all observed data at time $k$, i.e., {$O_k \equiv \{\overline{Y}_k, \overline{Z}_k, \overline{U}_{k-1}, \overline{A}_{k-1}, \overline{X}_{k-1}\}$}. {With mild abuse of notation, in this and the following section $O_k$ denotes this cumulative collection, whereas in the main text $O_j = \{Z_j, Y_j, X_j, A_j, U_j\}$ denotes the single-time-point block; which of the two is meant is always clear from the context.} 

Let $p_\varepsilon(o) = p(o; \varepsilon)$ represent a parametric submodel parametrized by $\varepsilon \in \R$. Furthermore, let $\ell(a\:|\:b;\varepsilon) = \text{log}\: p(a\:|\:b; \varepsilon)$ for any partition $(A, B) \subseteq O$, with score functions represented as $S_\varepsilon(o) = \ell_\varepsilon'(o;0)$. {Because the joint density of $O_k$ factorizes along the temporal ordering into the conditional factors of $Y_j$ and $Z_j$, $j = 1, \dots, k$, and of $X_{j-1}$, $A_{j-1}$, and $U_{j-1}$, $j = 1, \dots, k$, taking logarithms and differentiating at $\varepsilon = 0$ decomposes the score of $O_k$ as}
\begin{align*}
    \ell'_\varepsilon(O_k; 0) = \sum_{j=1}^{k} \Big\{ &\ell'_{\epsilon}\left(Y_{j} \mid \overline{Z}_{j}, \overline{Y}_{j-1}, \overline{U}_{j-1}, \overline{A}_{j-1}, \overline{X}_{j-1}; 0\right) \\
    &+\ell'_{\epsilon}\left(Z_{j} \mid \overline{Y}_{j-1}, \overline{Z}_{j-1}, \overline{U}_{j-1}, \overline{A}_{j-1}, \overline{X}_{j-1}; 0\right) \\
    &+\ell'_{\epsilon}\left(U_{j-1} \mid \overline{Y}_{j-1}, \overline{Z}_{j-1}, \overline{U}_{j-2}, \overline{A}_{j-1}, \overline{X}_{j-1}; 0\right) \\
    &+\ell'_{\epsilon}\left(A_{j-1} \mid \overline{Y}_{j-1}, \overline{Z}_{j-1}, \overline{U}_{j-2}, \overline{A}_{j-2}, \overline{X}_{j-1}; 0\right) \\
    &+\ell'_{\epsilon}\left(X_{j-1} \mid \overline{Y}_{j-1}, \overline{Z}_{j-1}, \overline{U}_{j-2}, \overline{A}_{j-2}, \overline{X}_{j-2}; 0\right)\Big\},
\end{align*}
with the conventions of Section \ref{sec:supp-prelims} for indices below zero; this decomposition is applied to $\ell'_\varepsilon(O_k;0)$ in the first display of the proof of Theorem \ref{theorem:EIF}. Scores of factors realized after $(Z_k, Y_k)$ pair to zero with any function of $O_k$ by the score-centering property (P1), so no generality is lost by restricting to the score of $O_k$.

In order to prove Theorem \ref{theorem:EIF}, we show that $\left.\frac{\partial}{\partial \varepsilon} \psi_k(P_\varepsilon)\right|_{\varepsilon=0} = \E\left[\varphi_k \ell_{\varepsilon}'(O_k;0)\right]$. We begin by introducing several lemmas. For simplicity, we assume some arbitrary regime and drop the superscript $(d)$ notation for the weights and compatibility indicators (i.e., $W_{t}$ and $C_{t}$ for each time $t$).

\begin{lemma}\label{lemma:wt_reduction}
    For $j<0$, let $W_j = 1$. Let $\ell'_\varepsilon(Y_0 \mid Z_0;0) = \ell'_\varepsilon(Z_0;0) = 0$. {For $t = 0, \dots, k-1$, sum the outcome, competing-event, and covariate scores through time $t$ as}
    \begin{align*}
        \Scal_t \equiv \ell'_{\varepsilon}\left(X_0; 0\right) + \sum_{j=1}^{t} \Big\{ &\ell'_{\epsilon}\left(Y_{j} \mid \overline{Z}_{j}, \overline{Y}_{j-1}, \overline{U}_{j-1}, \overline{A}_{j-1}, \overline{X}_{j-1}; 0\right) \\
    &+\ell'_{\epsilon}\left(Z_{j} \mid \overline{Y}_{j-1}, \overline{Z}_{j-1}, \overline{U}_{j-1}, \overline{A}_{j-1}, \overline{X}_{j-1}; 0\right) \\
    &+\ell'_{\epsilon}\left(X_{j} \mid \overline{Y}_{j}, \overline{Z}_{j}, \overline{U}_{j-1}, \overline{A}_{j-1}, \overline{X}_{j-1}; 0\right)\Big\}.
    \end{align*}
    Then, by using the definition of compatibility and Assumption \ref{sequential-ignorability}, for $t=0,\dots,k-1$,
    \begin{equation*}
        \E \left[W_t \left\{\prod_{j=0}^{t}(1-Y_j)(1-Z_j)\right\}G_{t+1}^{(k)}\, \Scal_t\right]
        = \E \left[W_{t-1} \left\{\prod_{j=0}^{t}(1-Y_j)(1-Z_j)\right\}G_{t+1}^{(k)}\, \Scal_t\right].
    \end{equation*}
    In particular, for $t = 0$ we have $\Scal_0 = \ell'_\varepsilon(X_0;0)$, and the identity reads $\E\left[W_0\, G_1^{(k)}\, \ell'_\varepsilon(X_0;0)\right] = \E\left[G_1^{(k)}\, \ell'_\varepsilon(X_0;0)\right]$.
\end{lemma}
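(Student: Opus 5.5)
The plan is to peel off the last factor of the weight, exactly as in the weight-reduction step of Lemma \ref{lemma:young_lemma5}. By \eqref{eq:ipcw},
\begin{equation*}
W_t = W_{t-1}\,\frac{C_t}{\pi^C_t}\,\frac{U_t}{\pi^U_t}.
\end{equation*}
Write $Q_t \equiv W_{t-1}\left\{\prod_{j=0}^{t}(1-Y_j)(1-Z_j)\right\}G_{t+1}^{(k)}\,\Scal_t$. The key observation is that every factor of $Q_t$ is a function of the history $\left(\overline{X}_t, \overline{Y}_t, \overline{Z}_t, \overline{A}_{t-1}, \overline{U}_{t-1}\right)$, that is, of everything realized strictly before $A_t$. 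This holds for $W_{t-1}$ by (P4)(i) shifted one index. It holds for the survival product trivially, and for $G_{t+1}^{(k)}$ (a function of $\overline{X}_t$) by (P4)(ii). Finally, $\Scal_t$ is built only from scores of $Y_j, Z_j, X_j$ with $j\le t$, whose conditioning sets involve only $\overline{A}_{j-1}$ and $\overline{U}_{j-1}$, so (P4)(iii) applies. In particular, $\Scal_t$ deliberately excludes the scores of $A_t$ and $U_t$, which is what makes the statement true.

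\textbf{Step 1: integrate out $U_t$.} Condition on $\left(\overline{X}_t,\overline{Y}_t,\overline{Z}_t,\overline{A}_t,\overline{U}_{t-1}\right)$. Apart from $U_t$, every factor of $Q_t\,C_t/(\pi^C_t\pi^U_t)$ is measurable with respect to this set, including $\pi^C_t$ and $\pi^U_t$ by (P4). The integrand vanishes off the event $F_t=\{\overline{C}_t=1,\overline{U}_{t-1}=1,\overline{Y}_t=0,\overline{Z}_t=0\}$, because $W_{t-1}$ carries $\prod_{j<t}U_jC_j$, the $t$-th survival factor is present, and $C_t$ appears explicitly. On $F_t$, the first display of (P3) gives $\E[U_t\mid\cdot\,]=\pi^U_t$, which cancels the denominator. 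This leaves $\E\left[Q_t\,C_t/\pi^C_t\right]$.

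\textbf{Step 2: integrate out $A_t$.} Now condition on $\left(\overline{X}_t,\overline{Y}_t,\overline{Z}_t,\overline{A}_{t-1},\overline{U}_{t-1}\right)$. Here $Q_t$ and $\pi^C_t$ are measurable, and the integrand is supported on $F'_t$. On $\{\overline{C}_{t-1}=1\}$ we have $C_t=\mathbbm{1}(A_t=d_t)$, so the second display of (P3) gives $\E[C_t\mid\cdot\,]=\pi^C_t$, and the ratio cancels to one. This yields $\E[Q_t]$, which is the claim. The case $t=0$ is the same argument using the conventions $W_{-1}=1$, $C_{-1}=U_{-1}=1$ and $Y_0=Z_0=0$, giving $\E[W_0G_1^{(k)}\ell'_\varepsilon(X_0;0)]=\E[G_1^{(k)}\ell'_\varepsilon(X_0;0)]$.

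I expect the main obstacle to be the measurability bookkeeping rather than the algebra. First, the propensities in \eqref{eq:pi_C}--\eqref{eq:pi_U} are defined by conditioning on $\overline{\Hcal}_t$ and events, not on the full history. Replacing one by the other requires (P2)/(P3), which are valid only on the at-risk compatible event, so the support of the integrand must be tracked carefully at each step. Second, one must confirm that the score terms in $\Scal_t$ do not depend on $A_t$ or $U_t$; this needs the ordering $O_j=\{Z_j,Y_j,X_j,A_j,U_j\}$, under which $X_t$ precedes $A_t$.

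The statement invokes Assumption \ref{sequential-ignorability}, but the argument above appears to use only the definition of the propensities and positivity, the latter to make the divisions well defined. If ignorability is needed anywhere, it would be only to interpret $G^{(k)}_{t+1}$, so I would note that the identity is purely observational.
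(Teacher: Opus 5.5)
Your proposal is correct and follows the paper's proof essentially step for step. You write $W_t = W_{t-1}\,U_tC_t/(\pi^U_t\pi^C_t)$, condition on the history through $A_t$ to replace $U_t$ by $\pi^U_t$, then use $C_t=\mathbbm{1}(A_t=d_t)$ on $\{\overline{C}_{t-1}=1\}$ and condition on the history through $X_t$ to replace that indicator by $\pi^C_t$, carrying $\Scal_t$ along because it involves neither $A_t$ nor $U_t$. Your side remark is also accurate: the paper's argument uses only iterated expectations and the definitions of the propensities (with positivity for the divisions), not sequential ignorability.
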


\begin{proof}[Proof of Lemma \ref{lemma:wt_reduction}]
{Throughout, note that $\Scal_t$ is a function of $\left(\overline{X}_t, \overline{Y}_t, \overline{Z}_t, \overline{A}_{t-1}, \overline{U}_{t-1}\right)$: each of its scores is a function of the variable indicated and of that score's conditioning variables, all realized no later than $X_t$. In particular, $\Scal_t$ involves neither $A_t$ nor $U_t$, which is what permits it to be carried through each conditioning step below. The case $t = 0$ is included, with $\Scal_0 = \ell'_\varepsilon(X_0;0)$, $W_{-1} = 1$, and $\prod_{j=0}^{0}(1-Y_j)(1-Z_j) = 1$.}
We begin by substituting the definition of the weight term $W_t$. The first three equalities follow by definition and the properties of binary variables:
\allowdisplaybreaks
\begin{align*}
    &\E \Biggl[W_t \left\{\prod_{j=0}^{t}(1-Y_j)(1-Z_j)\right\}G_{t+1}^{(k)} {\Scal_t}\Biggr]\\
        &= \E \Biggl[W_{t-1} \left\{\prod_{j=0}^{t}(1-Y_j)(1-Z_j)\right\}G_{t+1}^{(k)} \times U_t C_t\\
    &\qquad\quad\times \left\{\Pr(U_t=1\mid \overline{Y}_t=0,\overline{Z}_t=0,\overline{U}_{t-1}=1,\overline{C}_t=1,\overline{X}_t)\right\}^{-1}
    \\
    &\qquad\quad\times \left\{\Pr(C_t = 1 \mid \overline{Y}_t=0,\overline{Z}_t=0,\overline{U}_{t-1}=1,\overline{C}_{t-1}=1,\overline{X}_t)\right\}^{-1}
    {\Scal_t}\Biggr]\\
        &= \E \Biggl[W_{t-1} \left\{\prod_{j=0}^{t}(1-Y_j)(1-Z_j)\right\}G_{t+1}^{(k)} \times U_t C_t\\
    &\qquad\quad\times \left\{\Pr(U_t=1\mid \overline{Y}_t,\overline{Z}_t,\overline{U}_{t-1},\overline{C}_t,\overline{X}_t)\right\}^{-1}\\
    &\qquad\quad\times \left\{\Pr(C_t = 1 \mid \overline{Y}_t,\overline{Z}_t,\overline{U}_{t-1},\overline{C}_{t-1},\overline{X}_t)\right\}^{-1}{\Scal_t}\Biggr]\\
        &= \E \Biggl[W_{t-1} \left\{\prod_{j=0}^{t}(1-Y_j)(1-Z_j)\right\}G_{t+1}^{(k)} \times U_t C_t\\
    &\qquad\quad\times \left\{\Pr(U_t=1\mid \overline{Y}_t,\overline{Z}_t,\overline{U}_{t-1},\overline{A}_t,\overline{X}_t)\right\}^{-1}\\
    &\qquad\quad\times \left\{\Pr(C_t = 1 \mid \overline{Y}_t,\overline{Z}_t,\overline{U}_{t-1},\overline{C}_{t-1},\overline{X}_t)\right\}^{-1} {\Scal_t}\Biggr].\\
\shortintertext{Conditioning on $\overline{Y}_t,\overline{Z}_t,\overline{U}_{t-1},\overline{A}_t, \text{and } \overline{X}_t$:}
        &= \E \Biggl\{ \E \Biggl[
        W_{t-1} \left\{\prod_{j=0}^{t}(1-Y_j)(1-Z_j)\right\}G_{t+1}^{(k)} \times U_t C_t\\
    &\qquad\qquad\times \left\{\Pr(U_t=1\mid \overline{Y}_t,\overline{Z}_t,\overline{U}_{t-1},\overline{A}_t,\overline{X}_t)\right\}^{-1}\\
    &\qquad\qquad\times \left\{\Pr(C_t = 1 \mid \overline{Y}_t,\overline{Z}_t,\overline{U}_{t-1},\overline{C}_{t-1},\overline{X}_t)\right\}^{-1}\\
    &\qquad\qquad\times {\Scal_t} \bigg|\; \overline{Y}_t,\overline{Z}_t,\overline{U}_{t-1},\overline{A}_t,\overline{X}_t 
    \Biggr]\Biggr\}\\
        &= \E \Biggl\{ \E \Biggl[ U_t \bigg|\; \overline{Y}_t,\overline{Z}_t,\overline{U}_{t-1},\overline{A}_t,\overline{X}_t 
    \Biggr]\\
    &\qquad\qquad\times W_{t-1} \left\{\prod_{j=0}^{t}(1-Y_j)(1-Z_j)\right\}G_{t+1}^{(k)} \times C_t\\
    &\qquad\qquad\times \left\{\Pr(U_t=1\mid \overline{Y}_t,\overline{Z}_t,\overline{U}_{t-1},\overline{A}_t,\overline{X}_t)\right\}^{-1}\\
    &\qquad\qquad\times \left\{\Pr(C_t = 1 \mid \overline{Y}_t,\overline{Z}_t,\overline{U}_{t-1},\overline{C}_{t-1},\overline{X}_t)\right\}^{-1} {\Scal_t}
    \Biggr\}.\\
\shortintertext{Because $U_t$ is binary:}
            &= \E \Biggl\{ \Pr(U_t=1\mid \overline{Y}_t,\overline{Z}_t,\overline{U}_{t-1},\overline{A}_t,\overline{X}_t) \\
    &\qquad\qquad\times W_{t-1} \left\{\prod_{j=0}^{t}(1-Y_j)(1-Z_j)\right\}G_{t+1}^{(k)} \times C_t\\
    &\qquad\qquad\times \left\{\Pr(U_t=1\mid \overline{Y}_t,\overline{Z}_t,\overline{U}_{t-1},\overline{A}_t,\overline{X}_t)\right\}^{-1}\\
    &\qquad\qquad\times \left\{\Pr(C_t = 1 \mid \overline{Y}_t,\overline{Z}_t,\overline{U}_{t-1},\overline{C}_{t-1},\overline{X}_t)\right\}^{-1} {\Scal_t}
    \Biggr\}\\
            &= \E \Biggl\{ W_{t-1} \left\{\prod_{j=0}^{t}(1-Y_j)(1-Z_j)\right\}G_{t+1}^{(k)} \times C_t\\
    &\qquad\qquad\times \left\{\Pr(C_t = 1 \mid \overline{Y}_t,\overline{Z}_t,\overline{U}_{t-1},\overline{C}_{t-1},\overline{X}_t)\right\}^{-1} {\Scal_t}
    \Biggr\}.\\
\shortintertext{By definition of compatibility:}
            &= \E \Biggl\{ W_{t-1} \left\{\prod_{j=0}^{t}(1-Y_j)(1-Z_j)\right\}G_{t+1}^{(k)} \times \mathbbm{1}(A_t = d_t)\\
    &\qquad\qquad\times \left\{\Pr(A_t = d_t \mid \overline{Y}_t,\overline{Z}_t,\overline{U}_{t-1},\overline{A}_{t-1},\overline{X}_t)\right\}^{-1}{\Scal_t}
    \Biggr\}.\\
\shortintertext{Conditioning on $\overline{Y}_t,\overline{Z}_t,\overline{U}_{t-1},\overline{A}_{t-1}, \text{and } \overline{X}_t$:}
            &= \E \Biggl\{ \E \Biggl[
            W_{t-1} \left\{\prod_{j=0}^{t}(1-Y_j)(1-Z_j)\right\}G_{t+1}^{(k)} \times \mathbbm{1}(A_t = d_t)\\
    &\qquad\qquad\times \left\{\Pr(A_t = d_t \mid \overline{Y}_t,\overline{Z}_t,\overline{U}_{t-1},\overline{A}_{t-1},\overline{X}_t)\right\}^{-1}\\
    &\qquad\qquad\times {\Scal_t}
    \bigg|\; \overline{Y}_t,\overline{Z}_t,\overline{U}_{t-1},\overline{A}_{t-1},\overline{X}_t \Biggr]
    \Biggr\}\\
            &= \E \Biggl\{ 
            \E \Biggl[\mathbbm{1}(A_t = d_t) \bigg|\; \overline{Y}_t,\overline{Z}_t,\overline{U}_{t-1},\overline{A}_{t-1},\overline{X}_t \Biggr]\\
        &\qquad\qquad\times W_{t-1} \left\{\prod_{j=0}^{t}(1-Y_j)(1-Z_j)\right\}G_{t+1}^{(k)}\\
    &\qquad\qquad\times \left\{\Pr(A_t = d_t \mid \overline{Y}_t,\overline{Z}_t,\overline{U}_{t-1},\overline{A}_{t-1},\overline{X}_t)\right\}^{-1}  {\Scal_t}
    \Biggr\}.\\
\shortintertext{Because $\mathbbm{1}(A_t = d_t)$ is binary:}
            &= \E \Biggl\{ 
            \Pr(A_t = d_t \mid \overline{Y}_t,\overline{Z}_t,\overline{U}_{t-1},\overline{A}_{t-1},\overline{X}_t)\\
    &\qquad\qquad\times W_{t-1} \left\{\prod_{j=0}^{t}(1-Y_j)(1-Z_j)\right\}G_{t+1}^{(k)}\\
    &\qquad\qquad\times \left\{\Pr(A_t = d_t \mid \overline{Y}_t,\overline{Z}_t,\overline{U}_{t-1},\overline{A}_{t-1},\overline{X}_t)\right\}^{-1}  {\Scal_t}
    \Biggr\}\\
            &= \E \Biggl[ W_{t-1} \left\{\prod_{j=0}^{t}(1-Y_j)(1-Z_j)\right\}G_{t+1}^{(k)} {\Scal_t}
    \Biggr].
\end{align*}
This matches the form of the lemma exactly.
\end{proof}

\begin{lemma}\label{lemma:wt_collapse}
Let $g$ be any function of $O_k = \left(\overline{Y}_k, \overline{Z}_k, \overline{X}_{k-1}, \overline{A}_{k-1}, \overline{U}_{k-1}\right)$ for which the expectations below are finite. Under Assumption \ref{positivity}, and with $\overline{d}_{j}(\overline{x}_{j})$ denoting the deterministic treatment path assigned by the regime along a history with all outcome indicators zero (Section \ref{sec:supp-prelims}),
\allowdisplaybreaks
\begin{align*}
    &\E\left[W_{k-1} \left\{\prod_{j=0}^{k-1}(1-Y_j)(1-Z_j)\right\} g\left(\overline{Y}_k, \overline{Z}_k, \overline{X}_{k-1}, \overline{A}_{k-1}, \overline{U}_{k-1}\right)\right]\\
    &\qquad = \int_{\overline{x}_{k-1}} \sum_{{y}_{k}} \sum_{{z}_{k}}\; g\left((0, \dots, 0, y_k), (0, \dots, 0, z_k), \overline{x}_{k-1}, \overline{d}_{k-1}(\overline{x}_{k-1}), \overline{1}\right)\\
    &\qquad\qquad \times p\left( y_k \mid \overline{z}_k = (\overline{0}, z_k), \overline{y}_{k-1} = 0, \overline{u}_{k-1} = 1, \overline{a}_{k-1} = \overline{d}_{k-1}, \overline{x}_{k-1}\right)\\
    &\qquad\qquad \times p\left( z_k \mid \overline{y}_{k-1} = 0, \overline{z}_{k-1} = 0, \overline{u}_{k-1} = 1, \overline{a}_{k-1} = \overline{d}_{k-1}, \overline{x}_{k-1}\right)\\
    &\qquad\qquad \times \prod_{j=1}^{k-1} \Big\{ p\left( y_j = 0 \mid \overline{z}_j = 0, \overline{y}_{j-1} = 0, \overline{u}_{j-1} = 1, \overline{a}_{j-1} = \overline{d}_{j-1}, \overline{x}_{j-1}\right)\\
    &\qquad\qquad\qquad\qquad \times p\left( z_j = 0 \mid \overline{y}_{j-1} = 0, \overline{z}_{j-1} = 0, \overline{u}_{j-1} = 1, \overline{a}_{j-1} = \overline{d}_{j-1}, \overline{x}_{j-1}\right) \Big\}\\
    &\qquad\qquad \times \prod_{j=1}^{k} \text{d}P\left( x_{j-1} \mid \overline{y}_{j-1} = 0, \overline{z}_{j-1} = 0, \overline{u}_{j-2} = 1, \overline{a}_{j-2} = \overline{d}_{j-2}, \overline{x}_{j-2}\right).
\end{align*}
That is, taking the expectation of $g$ against the weighted, at-risk observed-data measure is the same as integrating $g$, evaluated along the uncensored regime-compatible sequence, against the product of the corresponding conditional laws with the treatment sequence fixed to $\overline{d}_{k-1}$ and the censoring sequence fixed to $\overline{1}$. 
\end{lemma}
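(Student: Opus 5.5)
The plan is to write the left-hand side as an explicit integral against the sequentially factorized observed-data density, and then let the weights cancel the treatment and censoring factors one time point at a time. This is the same mechanism used in the proof of Theorem \ref{theorem:g-formula_identification}, except that we keep a general integrand $g$ rather than specializing to $Y_k(1-Z_k)$.

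First, we factor the joint law of $O_k$ along the temporal ordering $O_j=\{Z_j,Y_j,X_j,A_j,U_j\}$. For $j=1,\dots,k$ the factors are $\mathrm{d}P(x_{j-1}\mid\cdot)$, $p(a_{j-1}\mid\cdot)$, $p(u_{j-1}\mid\cdot)$, $p(z_j\mid\cdot)$ and $p(y_j\mid\cdot)$. We then write the expectation as a sum over the binary histories $(\overline{y}_k,\overline{z}_k,\overline{a}_{k-1},\overline{u}_{k-1})$ and an integral over $\overline{x}_{k-1}$.

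Second, we use the indicators in the integrand to restrict this sum.
\begin{itemize}
\item The factor $\prod_{j=0}^{k-1}(1-y_j)(1-z_j)$ forces $\overline{y}_{k-1}=\overline{z}_{k-1}=0$.
\item The numerator $\prod_{j\le k-1}U_jC_j$ of $W_{k-1}$ forces $\overline{u}_{k-1}=\overline{1}$ and $\overline{c}_{k-1}=\overline{1}$.
\item On the zero-outcome history, $\overline{c}_{k-1}=\overline{1}$ is equivalent to $\overline{a}_{k-1}=\overline{d}_{k-1}(\overline{x}_{k-1})$. This path is determined recursively by the covariates alone, as noted in Section \ref{sec:supp-prelims}.
\end{itemize}
Hence the sums over $\overline{a}_{k-1}$ and $\overline{u}_{k-1}$ each collapse to a single term, and the only free discrete arguments left are $y_k$ and $z_k$. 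The function $g$ is then evaluated at exactly the argument $\bigl((0,\dots,0,y_k),(0,\dots,0,z_k),\overline{x}_{k-1},\overline{d}_{k-1}(\overline{x}_{k-1}),\overline{1}\bigr)$ appearing in the statement.

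Third, we show that the surviving treatment and censoring factors cancel against the denominators of $W_{k-1}$. The surviving factors are $p(a_{j-1}=d_{j-1}\mid\cdot)$ and $p(u_{j-1}=1\mid\cdot)$ for $j=1,\dots,k$. The denominators are $\prod_{j=0}^{k-1}\pi^C_j\pi^U_j$. By (P3), which rests on (P2), on the events $F'_j$ and $F_j$ enforced by the indicators these conditional probabilities coincide with $\pi^C_j$ and $\pi^U_j$. Assumption \ref{positivity} guarantees that the denominators are nonzero on the relevant set, so the cancellation is legitimate. What remains are:
\begin{itemize}
\item the covariate laws $\mathrm{d}P(x_{j-1}\mid\cdot)$;
\item the factors $p(z_j=0\mid\cdot)$ and $p(y_j=0\mid\cdot)$ for $j\le k-1$, since $y_j=z_j=0$ was forced;
\item the unrestricted factors $p(z_k\mid\cdot)$ and $p(y_k\mid\cdot)$.
\end{itemize}
Each of these has its conditioning arguments fixed to the compatible, uncensored, zero-outcome values, which is precisely the right-hand side of the statement.

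The main obstacle is bookkeeping rather than a conceptual step. One must check that each conditioning set in the factorization, once restricted to the event, matches the conditioning set in the statement: for instance $\overline{a}_{j-2}=\overline{d}_{j-2}$ and $\overline{u}_{j-2}=1$ in the covariate law. One must also check that the propensity denominators, which are defined via $\overline{C}_j$ in \eqref{eq:pi_C}--\eqref{eq:pi_U}, agree with the factorization's conditionals given $\overline{A}$. I would handle this by invoking (P2) for each $j$ on the event $\{\overline{Y}_{j}=0,\overline{Z}_{j}=0,\overline{U}_{j-1}=1,\overline{C}_{j-1}=1\}$, on which $\overline{A}_{j-1}$ is a function of $\overline{X}_{j-1}$. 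A cleaner alternative is to proceed by induction on $k$: peel off $U_{k-1}$ and then $A_{k-1}$ by iterated conditioning, exactly as in Lemma \ref{lemma:wt_reduction} with $g$ in place of $G^{(k)}_{t+1}\Scal_t$, and then integrate out the time-$(k-1)$ factors. Either route yields the stated identity.
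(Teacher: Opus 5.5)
Your proposal is correct and follows essentially the same route as the paper: expand the left-hand side against the temporally factorized law of $O_k$, use the $U_j$, $C_j$ and survival indicators to collapse the sums over $\overline{u}_{k-1}$, $\overline{a}_{k-1}$ and $(\overline{y}_{k-1},\overline{z}_{k-1})$, and cancel the surviving treatment and censoring factors against $\pi^C_j\pi^U_j$ via (P2)/(P3). Your induction alternative is not literally ``exactly as in Lemma~\ref{lemma:wt_reduction}'', because $g$ depends on $A_{k-1}$, $U_{k-1}$ and the later $Y_k, Z_k$: you would first have to integrate out $(Y_k,Z_k)$ and use the indicators to freeze those arguments at $(d_{k-1},1)$ before peeling, but your main argument does not need this.
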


\begin{proof}[Proof of Lemma \ref{lemma:wt_collapse}]
Write the left-hand side as an integral against the joint law of $O_k$, factorized along the temporal ordering $O_j = \{Z_j, Y_j, X_j, A_j, U_j\}$:
\begin{align*}
    &\E\left[W_{k-1} \Acal_k\, g\right]
    = \int_{\overline{x}_{k-1}} \sum_{\overline{a}_{k-1}} \sum_{\overline{u}_{k-1}} \sum_{\overline{z}_{k}} \sum_{\overline{y}_{k}}
    \left\{\prod_{j=1}^{k-1}(1-y_j)(1-z_j)\right\} g\left(\overline{y}_k, \overline{z}_k, \overline{x}_{k-1}, \overline{a}_{k-1}, \overline{u}_{k-1}\right)\\
    &\qquad\qquad\qquad\quad \times \prod_{j=0}^{k-1} \frac{u_j\, \mathbbm{1}(\overline{a}_j = \overline{d}_j)}{\pi^U_j\, \pi^{C}_j}\\
    &\qquad\qquad\qquad\quad \times \prod_{j=1}^{k} \Big\{ p\left( y_j \mid \overline{z}_j, \overline{y}_{j-1}, \overline{u}_{j-1}, \overline{a}_{j-1}, \overline{x}_{j-1}\right) p\left( z_j \mid \overline{y}_{j-1}, \overline{z}_{j-1}, \overline{u}_{j-1}, \overline{a}_{j-1}, \overline{x}_{j-1}\right)\Big\}\\
    &\qquad\qquad\qquad\quad \times \prod_{j=0}^{k-1} \Big\{ p\left( a_j \mid \overline{y}_j, \overline{z}_j, \overline{u}_{j-1}, \overline{a}_{j-1}, \overline{x}_j\right) p\left( u_j \mid \overline{y}_j, \overline{z}_j, \overline{u}_{j-1}, \overline{a}_{j}, \overline{x}_j\right) \Big\}\\
    &\qquad\qquad\qquad\quad \times \prod_{j=1}^{k} \text{d}P\left( x_{j-1} \mid \overline{y}_{j-1}, \overline{z}_{j-1}, \overline{u}_{j-2}, \overline{a}_{j-2}, \overline{x}_{j-2}\right),
\end{align*}
where we have substituted the definition \eqref{eq:ipcw} of $W_{k-1}$ (with the indicator $C_j$ written as $\mathbbm{1}(\overline{a}_j = \overline{d}_j)$) and expanded $\Acal_k$; here $\overline{d}_j = \overline{d}_j(\overline{\Hcal}_j, \overline{a}_{j-1})$ is determined by the integration variables. The remaining steps aim to get rid of the sums over $\overline{u}_{k-1}$, $\overline{a}_{k-1}$, and $(\overline{y}_{k-1}, \overline{z}_{k-1})$.

First, the factor $\prod_{j=0}^{k-1} u_j$ is zero unless $\overline{u}_{k-1} = \overline{1}$, so the sum over $\overline{u}_{k-1}$ retains the single term $\overline{u}_{k-1} = \overline{1}$, in which each censoring density factor evaluates to $p\left(u_j = 1 \mid \overline{y}_j, \overline{z}_j, \overline{u}_{j-1} = 1, \overline{a}_j, \overline{x}_j\right)$. On the possible values remaining, we have $\overline{a}_j = \overline{d}_j$ and, for $j \leq k-1$, $\overline{y}_j = 0, \overline{z}_j = 0$. For these terms $p\left(u_j = 1 \mid \cdot\right) = \pi^U_j$ by definition \eqref{eq:pi_U} and (P2), since conditioning on $\overline{a}_j = \overline{d}_j$ is the same as conditioning on $\overline{C}_j = 1$ given the history. Each such term in the factor cancels with $1/\pi^U_j$ in the weight.

Next, the factor $\prod_{j=0}^{k-1} \mathbbm{1}(\overline{a}_j = \overline{d}_j)$ is zero unless $\overline{a}_{k-1}$ equals the  regime sequence. The sum over $\overline{a}_{k-1}$ thus yields a single term in which each treatment density evaluates to $p\left(a_j = d_j \mid \overline{y}_j = 0, \overline{z}_j = 0, \overline{u}_{j-1} = 1, \overline{a}_{j-1} = \overline{d}_{j-1}, \overline{x}_j\right) = \pi^{C}_j$, again by \eqref{eq:pi_C} and (P2). Each such factor cancels with $1/\pi^{C}_j$ in the weight.

Finally, the factor $\prod_{j=1}^{k-1}(1-y_j)(1-z_j)$ restricts the sums over $\left(\overline{y}_{k-1}, \overline{z}_{k-1}\right)$ to the all-zero configuration, leaving free only $(y_k, z_k)$. The outcome density factors for $j \leq k-1$ evaluate at $y_j = 0$, $z_j = 0$, and every conditional law in the display then carries the conditioning values $\overline{y}_{j-1} = 0$, $\overline{z}_{j-1} = 0$, $\overline{u}_{j-1} = 1$, $\overline{a}_{j-1} = \overline{d}_{j-1}$.

After the three steps, all weight factors have canceled, the argument of $g$ is fixed to $\left((\overline{0}, y_k), (\overline{0}, z_k), \overline{x}_{k-1}, \overline{d}_{k-1}(\overline{x}_{k-1}), \overline{1}\right)$, and the remaining integrand is exactly the right-hand side of the lemma.
\end{proof}

\begin{proof}[Proof of Theorem \ref{theorem:EIF}]
    We verify this result by checking that the proposed EIF $\varphi_k$ is a pathwise derivative in that $\left.\frac{\partial}{\partial \varepsilon} \psi_k(P_\varepsilon)\right|_{\varepsilon=0} = \E\left[\varphi_k \ell_{\varepsilon}'(O_k;0)\right]$.
Allowing for continuous covariates, 
\allowdisplaybreaks
\begin{align*}
    \psi_k &= \int_{\overline{x}_{k-1}} \Pr\left(Y_{k} = 1 \:|\: \overline{Z}_k = 0, \overline{Y}_{k-1} = 0, \overline{U}_{k-1} = 1, \overline{C}_{k-1} = 1, \overline{X}_{k-1} = \overline{x}_{k-1}\right)\\
    &\qquad \times \prod_{j=1}^{k} \bigg
    \{\Pr\left(Z_j = 0 \; | \; \overline{Y}_{j-1} = 0, \overline{Z}_{j-1} = 0, \overline{U}_{j-1} = 1, \overline{C}_{j-1} = 1, \overline{X}_{j-1} = \overline{x}_{j-1} \right)\\
    &\qquad\qquad\qquad \times \Pr\left({Y}_{j-1} = 0 \; | \; \overline{Z}_{j-1} = 0, \overline{Y}_{j-2} = 0, \overline{U}_{j-2} = 1, \overline{C}_{j-2} = 1, \overline{X}_{j-2} = \overline{x}_{j-2} \right) \\
    &\qquad\qquad\qquad \times \text{d}P \left( x_{j-1} \mid \overline{Z}_{j-1} = 0, \overline{Y}_{j-1} = 0, \overline{X}_{j-2} = \overline{x}_{j-2}, \overline{C}_{j-2} = 1, \overline{U}_{j-2} = 1 \right) \bigg\}\\
    &= \int_{\overline{x}_{k-1}} \Pr\left(Y_{k} = 1 \:|\: \overline{Z}_k = 0, \overline{Y}_{k-1} = 0, \overline{U}_{k-1} = 1, \overline{A}_{j-2} = \overline{d}_{j-2}, \overline{X}_{k-1} = \overline{x}_{k-1}\right)\\
    &\qquad \times \prod_{j=1}^{k} \bigg
    \{\Pr\left(Z_j = 0 \; | \; \overline{Y}_{j-1} = 0, \overline{Z}_{j-1} = 0, \overline{U}_{j-1} = 1, \overline{A}_{j-2} = \overline{d}_{j-2}, \overline{X}_{j-1} = \overline{x}_{j-1} \right)\\
    &\qquad\qquad\qquad \times \Pr\left({Y}_{j-1} = 0 \; | \; \overline{Z}_{j-1} = 0, \overline{Y}_{j-2} = 0, \overline{U}_{j-2} = 1, \overline{A}_{j-2} = \overline{d}_{j-2}, \overline{X}_{j-2} = \overline{x}_{j-2} \right) \\
    &\qquad\qquad\qquad \times \text{d}P \left( x_{j-1} \mid \overline{Z}_{j-1} = 0, \overline{Y}_{j-1} = 0, \overline{X}_{j-2} = \overline{x}_{j-2}, \overline{A}_{j-2} = \overline{d}_{j-2}, \overline{U}_{j-2} = 1 \right) \bigg\},
\end{align*}
where $\overline{d}_{k-1} = \overline{d}_{k-1}\left( \overline{X}_{k-1}, \overline{Y}_{k-1}, \overline{Z}_{k-1}, \overline{A}_{k-2} \right)$ is a function of previously observed history. We can write the derivative on the LHS as:
\begin{align*}
    \left.\frac{\partial}{\partial \varepsilon} \psi_k(P_\varepsilon)\right|_{\varepsilon=0} &= \psi_k'(0)\\
    &= \int_{\overline{x}_{k-1}} \sum_{\overline{y}_{k}} \sum_{\overline{x}_{k}}\: y_k \left(1 - z_{k}\right) \prod_{j=1}^{k-1} \big\{ \left(1 - y_{j}\right) \left(1 - z_{j}\right) \big\}\\ 
    &\qquad \times \sum_{j=1}^{k} \Biggl\{ \ell'_{\varepsilon} \left( y_j \: \big| \: \overline{z}_j = 0, \overline{y}_{j-1} = 0, \overline{u}_{j-1} = 1, \overline{a}_{j-1} = \overline{d}_{j-1}, \overline{x}_{j-1}; 0\right)\\
    &\qquad\qquad\qquad + \ell'_{\varepsilon} \left( z_j \: \big| \: \overline{y}_{j-1} = 0, \overline{z}_{j-1} = 0, \overline{u}_{j-1} = 1, \overline{a}_{j-1} = \overline{d}_{j-1}, \overline{x}_{j-1}; 0\right)\\
    &\qquad\qquad\qquad + \ell'_{\varepsilon} \left( x_{j-1} \: \big| \: \overline{y}_{j-1} = 0, \overline{z}_{j-1} = 0, \overline{u}_{j-2} = 1, \overline{a}_{j-2} = \overline{d}_{j-2}, \overline{x}_{j-2}; 0\right)\Biggr\}\\
    &\qquad \times \prod_{j=1}^{k} \Biggl\{p \left( y_j \: \big| \: \overline{z}_j = 0, \overline{y}_{j-1} = 0, \overline{u}_{j-1} = 1, \overline{a}_{j-1} = \overline{d}_{j-1}, \overline{x}_{j-1}; 0\right)\\
    &\qquad\qquad\qquad \times p \left( z_j \: \big| \: \overline{y}_{j-1} = 0, \overline{z}_{j-1} = 0, \overline{u}_{j-1} = 1, \overline{a}_{j-1} = \overline{d}_{j-1}, \overline{x}_{j-1}; 0\right)\\
    &\qquad\qquad\qquad \times p \left( x_{j-1} \: \big| \: \overline{y}_{j-1} = 0, \overline{z}_{j-1} = 0, \overline{u}_{j-2} = 1, \overline{a}_{j-2} = \overline{d}_{j-2}, \overline{x}_{j-2}; 0\right)\Biggr\}.
\end{align*}

The expectation on the RHS can be written as:
\begin{align*}
\mathbb{E}\left[\varphi_k \ell'_{\varepsilon}(O_k;0)\right] &= \mathbb{E}\Bigg[ Y_{k}(1-Z_{k}) W_{k-1} \left\{ \prod_{j=0}^{k-1}(1-Y_{j})(1-Z_{j}) \right\} \\
&\qquad\quad \times \sum_{j=1}^{k} \Big\{ \ell'_{\epsilon}\left(Y_{j} \mid \overline{Z}_{j}, \overline{Y}_{j-1}, \overline{U}_{j-1}, \overline{A}_{j-1}, \overline{X}_{j-1}; 0\right) \\
&\qquad\qquad +\ell'_{\epsilon}\left(Z_{j} \mid \overline{Y}_{j-1}, \overline{Z}_{j-1}, \overline{U}_{j-1}, \overline{A}_{j-1}, \overline{X}_{j-1}; 0\right) \\
&\qquad\qquad +\ell'_{\epsilon}\left(U_{j-1} \mid \overline{Y}_{j-1}, \overline{Z}_{j-1}, \overline{U}_{j-2}, \overline{A}_{j-1}, \overline{X}_{j-1}; 0\right) \\
&\qquad\qquad +\ell'_{\epsilon}\left(A_{j-1} \mid \overline{Y}_{j-1}, \overline{Z}_{j-1}, \overline{U}_{j-2}, \overline{A}_{j-2}, \overline{X}_{j-1}; 0\right) \\
&\qquad\qquad +\ell'_{\epsilon}\left(X_{j-1} \mid \overline{Y}_{j-1}, \overline{Z}_{j-1}, \overline{U}_{j-2}, \overline{A}_{j-2}, \overline{X}_{j-2}; 0\right)\Big\}\Bigg] \\
\vspace{1em}
&\quad - \mathbb{E}\Bigg[ W_{k-1} \left\{ \prod_{j=0}^{k-1}(1-Y_{j})(1-Z_{j}) \right\} G_{k}^{(k)} \\
&\qquad\qquad \times \sum_{j=1}^{k} \Big\{ \ell'_{\epsilon}\left(Y_{j} \mid \overline{Z}_{j}, \overline{Y}_{j-1}, \overline{U}_{j-1}, \overline{A}_{j-1}, \overline{X}_{j-1}; 0\right) \\
&\qquad\qquad\quad +\ell'_{\epsilon}\left(Z_{j} \mid \overline{Y}_{j-1}, \overline{Z}_{j-1}, \overline{U}_{j-1}, \overline{A}_{j-1}, \overline{X}_{j-1}; 0\right) \\
&\qquad\qquad\quad +\ell'_{\epsilon}\left(U_{j-1} \mid \overline{Y}_{j-1}, \overline{Z}_{j-1}, \overline{U}_{j-2}, \overline{A}_{j-1}, \overline{X}_{j-1}; 0\right) \\
&\qquad\qquad\quad +\ell'_{\epsilon}\left(A_{j-1} \mid \overline{Y}_{j-1}, \overline{Z}_{j-1}, \overline{U}_{j-2}, \overline{A}_{j-2}, \overline{X}_{j-1}; 0\right) \\
&\qquad\qquad\quad +\ell'_{\epsilon}\left(X_{j-1} \mid \overline{Y}_{j-1}, \overline{Z}_{j-1}, \overline{U}_{j-2}, \overline{A}_{j-2}, \overline{X}_{j-2}; 0\right)\Big\}\Bigg] \\
\vspace{1em}
&\quad +\sum_{t=1}^{k-1} \mathbb{E}\Bigg[ W_{t-1}(1-Y_{t})(1-Z_{t}) \left\{ \prod_{j=0}^{t-1}(1-Y_{j})(1-Z_{j}) \right\} G_{t+1}^{(k)} \\
&\qquad\qquad\quad \times \sum_{j=1}^{k} \Big\{\ell'_{\epsilon}\left(Y_{j} \mid \overline{Z}_{j}, \overline{Y}_{j-1}, \overline{U}_{j-1}, \overline{A}_{j-1}, \overline{X}_{j-1}; 0\right) \\
&\qquad\qquad\qquad +\ell'_{\epsilon}\left(Z_{j} \mid \overline{Y}_{j-1}, \overline{Z}_{j-1}, \overline{U}_{j-1}, \overline{A}_{j-1}, \overline{X}_{j-1}; 0\right) \\
&\qquad\qquad\qquad +\ell'_{\epsilon}\left(U_{j-1} \mid \overline{Y}_{j-1}, \overline{Z}_{j-1}, \overline{U}_{j-2}, \overline{A}_{j-1}, \overline{X}_{j-1}; 0\right) \\
&\qquad\qquad\qquad +\ell'_{\epsilon}\left(A_{j-1} \mid \overline{Y}_{j-1}, \overline{Z}_{j-1}, \overline{U}_{j-2}, \overline{A}_{j-2}, \overline{X}_{j-1}; 0\right) \\
&\qquad\qquad\qquad +\ell'_{\epsilon}\left(X_{j-1} \mid \overline{Y}_{j-1}, \overline{Z}_{j-1}, \overline{U}_{j-2}, \overline{A}_{j-2}, \overline{X}_{j-2}; 0\right)\Big\}\Bigg] \\
\vspace{1em}
&\quad -\sum_{t=1}^{k-1} \mathbb{E}\Bigg[ W_{t-1} \left\{ \prod_{j=0}^{t-1}(1-Y_{j})(1-Z_{j}) \right\} G_{t}^{(k)} \\
&\qquad\qquad\quad \times \sum_{j=1}^{k} \Big\{\ell'_{\epsilon}\left(Y_{j} \mid \overline{Z}_{j}, \overline{Y}_{j-1}, \overline{U}_{j-1}, \overline{A}_{j-1}, \overline{X}_{j-1}; 0\right) \\
&\qquad\qquad\qquad +\ell'_{\epsilon}\left(Z_{j} \mid \overline{Y}_{j-1}, \overline{Z}_{j-1}, \overline{U}_{j-1}, \overline{A}_{j-1}, \overline{X}_{j-1}; 0\right) \\
&\qquad\qquad\qquad +\ell'_{\epsilon}\left(U_{j-1} \mid \overline{Y}_{j-1}, \overline{Z}_{j-1}, \overline{U}_{j-2}, \overline{A}_{j-1}, \overline{X}_{j-1}; 0\right) \\
&\qquad\qquad\qquad +\ell'_{\epsilon}\left(A_{j-1} \mid \overline{Y}_{j-1}, \overline{Z}_{j-1}, \overline{U}_{j-2}, \overline{A}_{j-2}, \overline{X}_{j-1}; 0\right) \\
&\qquad\qquad\qquad +\ell'_{\epsilon}\left(X_{j-1} \mid \overline{Y}_{j-1}, \overline{Z}_{j-1}, \overline{U}_{j-2}, \overline{A}_{j-2}, \overline{X}_{j-2}; 0\right)\Big\}\Bigg] \\
\vspace{1em}
&\quad +\mathbb{E}\Bigg[ G_{1}^{(k)} \times \sum_{j=1}^{k} \Big\{\ell'_{\epsilon}\left(Y_{j} \mid \overline{Z}_{j}, \overline{Y}_{j-1}, \overline{U}_{j-1}, \overline{A}_{j-1}, \overline{X}_{j-1}; 0\right) \\
&\qquad\qquad\qquad +\ell'_{\epsilon}\left(Z_{j} \mid \overline{Y}_{j-1}, \overline{Z}_{j-1}, \overline{U}_{j-1}, \overline{A}_{j-1}, \overline{X}_{j-1}; 0\right) \\
&\qquad\qquad\qquad +\ell'_{\epsilon}\left(U_{j-1} \mid \overline{Y}_{j-1}, \overline{Z}_{j-1}, \overline{U}_{j-2}, \overline{A}_{j-1}, \overline{X}_{j-1}; 0\right) \\
&\qquad\qquad\qquad +\ell'_{\epsilon}\left(A_{j-1} \mid \overline{Y}_{j-1}, \overline{Z}_{j-1}, \overline{U}_{j-2}, \overline{A}_{j-2}, \overline{X}_{j-1}; 0\right) \\
&\qquad\qquad\qquad +\ell'_{\epsilon}\left(X_{j-1} \mid \overline{Y}_{j-1}, \overline{Z}_{j-1}, \overline{U}_{j-2}, \overline{A}_{j-2}, \overline{X}_{j-2}; 0\right)\Big\}\Bigg] \\
\vspace{1em}
&\quad-\mathbb{E}\Bigg[\psi_{k} \times \sum_{j=1}^{k}\Big\{\ell'_{\epsilon}\left(Y_{j} \mid \overline{Z}_{j}, \overline{Y}_{j-1}, \overline{U}_{j-1}, \overline{A}_{j-1}, \overline{X}_{j-1}; 0\right) \\
&\qquad\qquad\qquad +\ell'_{\epsilon}\left(Z_{j} \mid \overline{Y}_{j-1}, \overline{Z}_{j-1}, \overline{U}_{j-1}, \overline{A}_{j-1}, \overline{X}_{j-1}; 0\right) \\
&\qquad\qquad\qquad +\ell'_{\epsilon}\left(U_{j-1} \mid \overline{Y}_{j-1}, \overline{Z}_{j-1}, \overline{U}_{j-2}, \overline{A}_{j-1}, \overline{X}_{j-1}; 0\right) \\
&\qquad\qquad\qquad +\ell'_{\epsilon}\left(A_{j-1} \mid \overline{Y}_{j-1}, \overline{Z}_{j-1}, \overline{U}_{j-2}, \overline{A}_{j-2}, \overline{X}_{j-1}; 0\right) \\
&\qquad\qquad\qquad +\ell'_{\epsilon}\left(X_{j-1} \mid \overline{Y}_{j-1}, \overline{Z}_{j-1}, \overline{U}_{j-2}, \overline{A}_{j-2}, \overline{X}_{j-2}; 0\right)\Big\}\Bigg].
\end{align*}

By the properties of score functions, $\E\left[ \ell'_{\epsilon}\left(A \mid B ; 0\right)\mid B\right] = 0$. Applying this, 
\begin{align*}
\mathbb{E}\left[\varphi_k \ell'_{\varepsilon}(O_k;0)\right] &= \mathbb{E}\Bigg[ Y_{k}(1-Z_{k}) W_{k-1} \left\{ \prod_{j=0}^{k-1}(1-Y_{j})(1-Z_{j}) \right\} \\
&\qquad\quad \times \sum_{j=1}^{k} \Big\{ \ell'_{\epsilon}\left(Y_{j} \mid \overline{Z}_{j}, \overline{Y}_{j-1}, \overline{U}_{j-1}, \overline{A}_{j-1}, \overline{X}_{j-1}; 0\right) \\
&\qquad\qquad +\ell'_{\epsilon}\left(Z_{j} \mid \overline{Y}_{j-1}, \overline{Z}_{j-1}, \overline{U}_{j-1}, \overline{A}_{j-1}, \overline{X}_{j-1}; 0\right) \\
&\qquad\qquad +\ell'_{\epsilon}\left(U_{j-1} \mid \overline{Y}_{j-1}, \overline{Z}_{j-1}, \overline{U}_{j-2}, \overline{A}_{j-1}, \overline{X}_{j-1}; 0\right) \\
&\qquad\qquad +\ell'_{\epsilon}\left(A_{j-1} \mid \overline{Y}_{j-1}, \overline{Z}_{j-1}, \overline{U}_{j-2}, \overline{A}_{j-2}, \overline{X}_{j-1}; 0\right) \\
&\qquad\qquad +\ell'_{\epsilon}\left(X_{j-1} \mid \overline{Y}_{j-1}, \overline{Z}_{j-1}, \overline{U}_{j-2}, \overline{A}_{j-2}, \overline{X}_{j-2}; 0\right)\Big\}\Bigg] \\
\vspace{1em}
&\quad - \mathbb{E}\Bigg[ W_{k-1} \left\{ \prod_{j=0}^{k-1}(1-Y_{j})(1-Z_{j}) \right\} G_{k}^{(k)} \\
&\qquad\qquad \times \sum_{j=1}^{k} \Big\{ \ell'_{\epsilon}\left(Y_{j-1} \mid \overline{Z}_{j-1}, \overline{Y}_{j-2}, \overline{U}_{j-2}, \overline{A}_{j-2}, \overline{X}_{j-2}; 0\right) \\
&\qquad\qquad\quad +\ell'_{\epsilon}\left(Z_{j-1} \mid \overline{Y}_{j-2}, \overline{Z}_{j-2}, \overline{U}_{j-2}, \overline{A}_{j-2}, \overline{X}_{j-2}; 0\right) \\
&\qquad\qquad\quad +\ell'_{\epsilon}\left(U_{j-1} \mid \overline{Y}_{j-1}, \overline{Z}_{j-1}, \overline{U}_{j-2}, \overline{A}_{j-1}, \overline{X}_{j-1}; 0\right) \\
&\qquad\qquad\quad +\ell'_{\epsilon}\left(A_{j-1} \mid \overline{Y}_{j-1}, \overline{Z}_{j-1}, \overline{U}_{j-2}, \overline{A}_{j-2}, \overline{X}_{j-1}; 0\right) \\
&\qquad\qquad\quad +\ell'_{\epsilon}\left(X_{j-1} \mid \overline{Y}_{j-1}, \overline{Z}_{j-1}, \overline{U}_{j-2}, \overline{A}_{j-2}, \overline{X}_{j-2}; 0\right)\Big\}\Bigg] \\
\vspace{1em}
&\quad +\sum_{t=1}^{k-1} \mathbb{E}\Bigg[ W_{t-1}(1-Y_{t})(1-Z_{t}) \left\{ \prod_{j=0}^{t-1}(1-Y_{j})(1-Z_{j}) \right\} G_{t+1}^{(k)} \\
&\qquad\qquad\quad \times \bigg( \ell'_{\epsilon}\left(X_{t} \mid \overline{Y}_{t}, \overline{Z}_{t}, \overline{U}_{t-1}, \overline{A}_{t-1}, \overline{X}_{t-1}\right)\\
&\qquad\qquad\qquad\quad + \sum_{j=1}^{t} \Big\{\ell'_{\epsilon}\left(Y_{j} \mid \overline{Z}_{j}, \overline{Y}_{j-1}, \overline{U}_{j-1}, \overline{A}_{j-1}, \overline{X}_{j-1}; 0\right) \\
&\qquad\qquad\qquad\qquad +\ell'_{\epsilon}\left(Z_{j} \mid \overline{Y}_{j-1}, \overline{Z}_{j-1}, \overline{U}_{j-1}, \overline{A}_{j-1}, \overline{X}_{j-1}; 0\right) \\
&\qquad\qquad\qquad\qquad +\ell'_{\epsilon}\left(U_{j-1} \mid \overline{Y}_{j-1}, \overline{Z}_{j-1}, \overline{U}_{j-2}, \overline{A}_{j-1}, \overline{X}_{j-1}; 0\right) \\
&\qquad\qquad\qquad\qquad +\ell'_{\epsilon}\left(A_{j-1} \mid \overline{Y}_{j-1}, \overline{Z}_{j-1}, \overline{U}_{j-2}, \overline{A}_{j-2}, \overline{X}_{j-1}; 0\right) \\
&\qquad\qquad\qquad\qquad +\ell'_{\epsilon}\left(X_{j-1} \mid \overline{Y}_{j-1}, \overline{Z}_{j-1}, \overline{U}_{j-2}, \overline{A}_{j-2}, \overline{X}_{j-2}; 0\right)\Big\}\bigg)\Bigg] \\
\vspace{1em}
&\quad -\sum_{t=1}^{k-1} \mathbb{E}\Bigg[ W_{t-1} \left\{ \prod_{j=0}^{t-1}(1-Y_{j})(1-Z_{j}) \right\} G_{t}^{(k)} \\
&\qquad\qquad\quad \times \sum_{j=1}^{k} \Big\{\ell'_{\epsilon}\left(Y_{j-1} \mid \overline{Z}_{j-1}, \overline{Y}_{j-2}, \overline{U}_{j-2}, \overline{A}_{j-2}, \overline{X}_{j-2}; 0\right) \\
&\qquad\qquad\qquad +\ell'_{\epsilon}\left(Z_{j-1} \mid \overline{Y}_{j-2}, \overline{Z}_{j-2}, \overline{U}_{j-2}, \overline{A}_{j-2}, \overline{X}_{j-2}; 0\right) \\
&\qquad\qquad\qquad +\ell'_{\epsilon}\left(U_{j-1} \mid \overline{Y}_{j-1}, \overline{Z}_{j-1}, \overline{U}_{j-2}, \overline{A}_{j-1}, \overline{X}_{j-1}; 0\right) \\
&\qquad\qquad\qquad +\ell'_{\epsilon}\left(A_{j-1} \mid \overline{Y}_{j-1}, \overline{Z}_{j-1}, \overline{U}_{j-2}, \overline{A}_{j-2}, \overline{X}_{j-1}; 0\right) \\
&\qquad\qquad\qquad +\ell'_{\epsilon}\left(X_{j-1} \mid \overline{Y}_{j-1}, \overline{Z}_{j-1}, \overline{U}_{j-2}, \overline{A}_{j-2}, \overline{X}_{j-2}; 0\right)\Big\}\Bigg] \\
\vspace{1em}
&\quad +\mathbb{E}\Bigg[ G_{1}^{(k)} \times \ell'_{\epsilon}\left(X_0; 0\right)\Bigg].
\end{align*}

{Finally, the weighted terms telescope: applying the weight-reduction identity of Lemma \ref{lemma:wt_reduction} to each summand, the second through fifth groups of terms cancel among themselves by the same recursive argument used in the g-formula derivation of Theorem \ref{theorem:g-formula_identification}, leaving only the weighted leading term}
\begin{align*}
    \E\left[\varphi_k \ell_{\varepsilon}'(O_k;0)\right] &= \mathbb{E}\Bigg[ Y_{k}(1-Z_{k})\, {W_{k-1}} \left\{ \prod_{j=0}^{k-1}(1-Y_{j})(1-Z_{j}) \right\} \\
&\qquad\quad \times \sum_{j=1}^{k} \Big\{ \ell'_{\epsilon}\left(Y_{j} \mid \overline{Z}_{j}, \overline{Y}_{j-1}, \overline{U}_{j-1}, \overline{A}_{j-1}, \overline{X}_{j-1}; 0\right) \\
&\qquad\qquad +\ell'_{\epsilon}\left(Z_{j} \mid \overline{Y}_{j-1}, \overline{Z}_{j-1}, \overline{U}_{j-1}, \overline{A}_{j-1}, \overline{X}_{j-1}; 0\right) \\
&\qquad\qquad +\ell'_{\epsilon}\left(X_{j-1} \mid \overline{Y}_{j-1}, \overline{Z}_{j-1}, \overline{U}_{j-2}, \overline{A}_{j-2}, \overline{X}_{j-2}; 0\right)\Big\}\Bigg].
\end{align*}
{Applying the weight-collapse identity of Lemma \ref{lemma:wt_collapse} to this expectation, with $g$ equal to $Y_k(1-Z_k)$ times the displayed sum of scores, converts the weighted, at-risk expectation into the regime-fixed path integral}
\begin{align*}
&= \int_{\overline{x}_{k-1}} \sum_{\overline{y}_{k}} \sum_{\overline{x}_{k}}\: y_k \left(1 - z_{k}\right) \prod_{j=1}^{k-1} \Biggl\{ \left(1 - y_{j}\right) \left(1 - z_{j}\right) \Biggr\}\\
    &\qquad \times \sum_{j=1}^{k} \Biggl\{ \ell'_{\varepsilon} \left( y_j \: \big| \: \overline{z}_j = 0, \overline{y}_{j-1} = 0, \overline{u}_{j-1} = 1, \overline{a}_{j-1} = \overline{d}_{j-1}, \overline{x}_{j-1}; 0\right)\\
    &\qquad\qquad\qquad + \ell'_{\varepsilon} \left( z_j \: \big| \: \overline{y}_{j-1} = 0, \overline{z}_{j-1} = 0, \overline{u}_{j-1} = 1, \overline{a}_{j-1} = \overline{d}_{j-1}, \overline{x}_{j-1}; 0\right)\\
    &\qquad\qquad\qquad + \ell'_{\varepsilon} \left( x_{j-1} \: \big| \: \overline{y}_{j-1} = 0, \overline{z}_{j-1} = 0, \overline{u}_{j-2} = 1, \overline{a}_{j-2} = \overline{d}_{j-2}, \overline{x}_{j-2}; 0\right)\Biggr\}\\
    &\qquad \times \prod_{j=1}^{k} \Biggl\{p \left( y_j \: \big| \: \overline{z}_j = 0, \overline{y}_{j-1} = 0, \overline{u}_{j-1} = 1, \overline{a}_{j-1} = \overline{d}_{j-1}, \overline{x}_{j-1}; 0\right)\\
    &\qquad\qquad\qquad \times p \left( z_j \: \big| \: \overline{y}_{j-1} = 0, \overline{z}_{j-1} = 0, \overline{u}_{j-1} = 1, \overline{a}_{j-1} = \overline{d}_{j-1}, \overline{x}_{j-1}; 0\right)\\
    &\qquad\qquad\qquad \times p \left( x_{j-1} \: \big| \: \overline{y}_{j-1} = 0, \overline{z}_{j-1} = 0, \overline{u}_{j-2} = 1, \overline{a}_{j-2} = \overline{d}_{j-2}, \overline{x}_{j-2}; 0\right)\Biggr\}.
\end{align*}
This equals the LHS exactly; $\left.\frac{\partial}{\partial \varepsilon} \psi_k(P_\varepsilon)\right|_{\varepsilon=0} = \E\left[\varphi_k \ell_{\varepsilon}'(O_k;0)\right]$, so $\varphi_k$ is the EIF for $\psi_k$.
\end{proof}

\section{Asymptotic Results}\label{sec:asymptotics_proof}

To prove Theorem \ref{theorem:EIF_asymptotics}, we first prove several lemmas concerning the order of the empirical process and the second-order remainder terms. Let $\hat{\P}$ be the estimated distribution of the observed data, and $\P$ be the true distribution.

\begin{lemma}\label{lemma:T1_term}
    For observed data $O_k = \{\overline{Y}_k, \overline{Z}_k, \overline{U}_{k-1}, \overline{A}_{k-1}, \overline{X}_{k-1}\}$, the empirical process term $T_1 = \left(\P_n - \P\right)\left[\varphi(O_k;\hat\P) - \varphi(O_k;\P)\right]$ is of the order $o_p(1/\sqrt{n})$ if the following conditions hold: for $t=1,\dots,k$, $\left\|\hat{W}_{t-1} - {W}_{t-1}\right\| = o_p(1)$ and $\left\|\hat{G}_{t}^{(k)} - {G}_{t}^{(k)}\right\| = o_p(1)$; and there exists an $M < \infty$ such that, for $\Bcal_t = (1-Y_t)(1-Z_t)$ and $\Bcal^*_k = Y_k(1-Z_k)$, $\left|\Bcal_k^{*} - \hat{G}_k^{(k)}\right| < M$, {$\left|W_{t-1}\right| < M$ for $t = 1,\dots,k$}, and $\left|\Bcal_t \hat{G}_{t+1}^{(k)} - \hat{G}_t^{(k)}\right| < M$ for $t=1,\dots,k-1$.
\end{lemma}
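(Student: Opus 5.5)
We follow the standard sample-splitting argument (e.g., Lemma 2 of Kennedy's review). Because the nuisance estimates $\hat{W}_{t-1}$ and $\hat{G}_t^{(k)}$ are built from an independent sample, we condition on that sample. Conditionally, $f \equiv \varphi(\cdot;\hat\P) - \varphi(\cdot;\P)$ is a fixed function, and $\P_n f$ is an average of i.i.d. draws with conditional mean $\P f$. Therefore $\E[T_1 \mid \hat\P] = 0$ and $\mathrm{Var}(T_1 \mid \hat\P) = n^{-1}\mathrm{Var}(f \mid \hat\P) \le n^{-1}\|f\|^2$. By Chebyshev's inequality applied conditionally, and then bounded convergence to remove the conditioning,
\begin{equation*}
\Pr\left(\sqrt{n}\,|T_1| > \delta \,\middle|\, \hat\P\right) \le \frac{\|f\|^2}{\delta^2}.
\end{equation*}
Hence it suffices to show $\|f\| = o_p(1)$.

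To bound $\|f\|$, we decompose $f$ term by term using the form of $\varphi_k$ in \eqref{eq:EIF}. The constant $\psi_k$ cancels, since it is the same true value in both $\varphi(\cdot;\hat\P)$ and $\varphi(\cdot;\P)$; more generally, any constant difference has zero empirical-process contribution. Writing $\Acal_t = \prod_{j=0}^{t-1}\Bcal_j$, which is bounded by one, we use the add-and-subtract identity
\begin{equation*}
\hat{W}_{t-1}\Acal_t \hat{D}_t - W_{t-1}\Acal_t D_t = (\hat{W}_{t-1}-W_{t-1})\Acal_t\hat{D}_t + W_{t-1}\Acal_t(\hat{D}_t - D_t).
\end{equation*}
Here $\hat D_t$ denotes either $\Bcal_k^* - \hat G_k^{(k)}$ (when $t=k$) or $\Bcal_t\hat G_{t+1}^{(k)} - \hat G_t^{(k)}$ (when $t<k$), and $D_t$ is the corresponding quantity at the truth. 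By the triangle inequality,
\begin{equation*}
\|f\| \le \sum_{t=1}^k\left( M\|\hat W_{t-1}-W_{t-1}\| + M\|\hat D_t - D_t\|\right) + \|\hat G_1^{(k)} - G_1^{(k)}\|.
\end{equation*}
The first summand uses the assumed bound $|\hat D_t|<M$. The second uses $|W_{t-1}|<M$ and $|\Acal_t|\le 1$.

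It remains to control $\|\hat D_t - D_t\|$. For $t=k$ it equals $\|\hat G_k^{(k)}-G_k^{(k)}\|$. For $t<k$ it is at most $\|\hat G_{t+1}^{(k)}-G_{t+1}^{(k)}\| + \|\hat G_t^{(k)}-G_t^{(k)}\|$, since $|\Bcal_t|\le1$. Every term on the right-hand side is therefore a finite sum, with $k$ fixed, of quantities assumed to be $o_p(1)$, so $\|f\| = o_p(1)$ and $T_1 = o_p(n^{-1/2})$.

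The main delicacy is the bookkeeping of which bound multiplies which error. The hypotheses bound $\hat D_t$ (estimated) and $W_{t-1}$ (true), so the add-and-subtract must be ordered as above, pairing the weight error with $\hat D_t$ and the $D$ error with the true weight. The reverse ordering would need a bound on $\hat W$, which is not assumed here; that bound would instead follow from condition 3 of Theorem \ref{theorem:EIF_asymptotics}. A secondary point is justifying the conditional-to-unconditional step, which follows because conditional probabilities are bounded by one and $\|f\|^2/\delta^2 \wedge 1 \to 0$ in probability.
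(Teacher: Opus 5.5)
Your proof is correct and follows the paper's argument. Both use the same add-and-subtract decomposition, pairing $\hat{W}_{t-1}-W_{t-1}$ with the estimated $\hat{D}_t$ and the outcome-model errors with the true $W_{t-1}$, so that exactly the stated boundedness conditions give $\|\varphi(\cdot;\hat\P)-\varphi(\cdot;\P)\| = o_p(1)$; your conditional Chebyshev step, which rests on the independent nuisance sample of Theorem \ref{theorem:EIF_asymptotics}, just spells out what the paper compresses into the remark that $T_1$ is a sample average of a quantity tending to zero.
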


\begin{proof}[Proof of Lemma \ref{lemma:T1_term}]
First, we require $\varphi(O;\hat\P)$ to be converging to $\varphi(O;\P)$ in $L_2(\P)$ norm. In other words, we show, that under certain simple conditions, $\left\|\varphi(;\hat\P) - \varphi(;\P)\right\|^2 = o_p(1)$.
\allowdisplaybreaks
\begin{align*}
    T_1 &= \left(\P_n - \P\right) \left[\varphi(O;\hat\P) - \varphi(O;\P)\right]\\
    &= \left(\P_n - \P\right) 
    \Biggl[ \hat{W}_{k-1} \left\{ \prod_{j=0}^{k-1} (1-Y_j)(1-Z_j) \right\} \Big\{ Y_k (1-Z_k) - \hat{G}_k^{(k)} \Big\}\\
    &\qquad\qquad\qquad + \sum_{t = 1}^{k-1} \hat{W}_{t-1} \left\{ \prod_{j=0}^{t-1} (1-Y_j)(1-Z_j) \right\} \Big\{ (1 - Y_t) (1-Z_t) \hat{G}_{t+1}^{(k)} - \hat{G}_t^{(k)} \Big\}\\
    &\qquad\qquad\qquad + \hat{G}_1^{(k)} \\
    &\qquad\qquad\qquad - W_{k-1} \left\{ \prod_{j=0}^{k-1} (1-Y_j)(1-Z_j) \right\} \Big\{ Y_k (1-Z_k) - G_k^{(k)} \Big\}\nonumber\\
    &\qquad\qquad\qquad - \sum_{t = 1}^{k-1} W_{t-1} \left\{ \prod_{j=0}^{t-1} (1-Y_j)(1-Z_j) \right\} \Big\{ (1 - Y_t) (1-Z_t)G_{t+1}^{(k)} - G_t^{(k)} \Big\}\nonumber\\
    &\qquad\qquad\qquad - G_1^{(k)}\Biggr].
\end{align*}
Let $\Acal_t = \left\{\prod_{j=0}^{t-1}(1-Y_j)(1-Z_j)\right\}$, $\Bcal_t = (1-Y_t)(1-Z_t)$, and $\Bcal^*_k = Y_k(1-Z_k)$. Then,
\begin{align*}
    T_1 &= \left(\P_n - \P\right) 
    \Biggl[ \hat{W}_{k-1} \Acal_k \left\{ \Bcal^*_k - \hat{G}_k^{(k)} \right\} + \sum_{t = 1}^{k-1} \hat{W}_{t-1} \Acal_t \left\{\Bcal_t \hat{G}_{t+1}^{(k)} - \hat{G}_t^{(k)} \right\} + \hat{G}_1^{(k)} \\
    &\qquad\qquad\qquad - W_{k-1} \Acal_k \left\{ \Bcal^*_k - {G}_k^{(k)} \right\} {-} \sum_{t = 1}^{k-1} {W}_{t-1} \Acal_t \left\{\Bcal_t {G}_{t+1}^{(k)} - {G}_t^{(k)} \right\} - {G}_1^{(k)} \Biggr] \\
    &= \left(\P_n - \P\right) 
    \Biggl[ \Acal_k \left[\hat{W}_{k-1} \left\{ \Bcal^*_k - \hat{G}_k^{(k)}\right\} - W_{k-1}\left\{\Bcal^*_k - {G}_k^{(k)}\right\}\right] \tag*{\textcircled{4}}\\
    &\qquad\qquad\qquad+ \sum_{t = 1}^{k-1} \Acal_t \left[\hat{W}_{t-1}  \left\{\Bcal_t \hat{G}_{t+1}^{(k)} - \hat{G}_t^{(k)}\right\} - {W}_{t-1} \left\{\Bcal_t {G}_{t+1}^{(k)} - {G}_t^{(k)}\right\} \right]\tag*{\textcircled{5}}\\
    &\qquad\qquad\qquad+ \hat{G}_1^{(k)} - {G}_1^{(k)} \Biggr].
\end{align*}
We can rewrite {\textcircled{4}} as follows:
\begin{align*}
    & \Acal_k \left[\hat{W}_{k-1} \left\{\Bcal_k^{*} - \hat{G}_k^{(k)}\right\} - W_{k-1}\left\{\Bcal_k^{*} - G_k^{(k)}\right\} \right]\\
    &\qquad = \Acal_k \left[\hat{W}_{k-1} \left\{\Bcal_k^{*} - \hat{G}_k^{(k)}\right\} - W_{k-1}\left\{\Bcal_k^{*} - \hat{G}_k^{(k)}\right\}\right. \\
    &\qquad\qquad\qquad + \left.W_{k-1}\left\{\Bcal_k^{*} - \hat{G}_k^{(k)}\right\} - W_{k-1}\left\{\Bcal_k^{*} - G_k^{(k)}\right\} \right]\\
    &\qquad = \Acal_k \left[\left(\Bcal_k^{*} - \hat{G}_k^{(k)}\right)\left(\hat{W}_{k-1} - {W}_{k-1}\right) - {W}_{k-1}\left(\hat{G}_k^{(k)} - {G}_k^{(k)}\right) \right] \\
    &\qquad = \Acal_k\left(\Bcal_k^{*} - \hat{G}_k^{(k)}\right)\left(\hat{W}_{k-1} - {W}_{k-1}\right) - \Acal_k{W}_{k-1}\left(\hat{G}_k^{(k)} - {G}_k^{(k)}\right)
\end{align*}
Similarly, {\textcircled{5}} can be expressed as follows:
\begin{align*}
    &\sum_{t = 1}^{k-1} \Acal_t \left[\hat{W}_{t-1}  \left\{\Bcal_t \hat{G}_{t+1}^{(k)} - \hat{G}_t^{(k)}\right\} - {W}_{t-1}\left\{\Bcal_t {G}_{t+1}^{(k)} - {G}_t^{(k)}\right\} \right]\\
    &\qquad =  \sum_{t = 1}^{k-1} \Acal_t \left[\hat{W}_{t-1}  \left\{\Bcal_t \hat{G}_{t+1}^{(k)} - \hat{G}_t^{(k)}\right\} - {W}_{t-1}\left\{\Bcal_t \hat{G}_{t+1}^{(k)} - \hat{G}_t^{(k)}\right\}\right.\\
    &\qquad\qquad\qquad\quad + \left. {W}_{t-1} \left\{\Bcal_t \hat{G}_{t+1}^{(k)} - \hat{G}_t^{(k)}\right\}
    - {W}_{t-1}\left\{\Bcal_t {G}_{t+1}^{(k)} - {G}_t^{(k)}\right\} \right]\\
    &\qquad = \sum_{t = 1}^{k-1} \Acal_t \left[ \left(\Bcal_t \hat{G}_{t+1}^{(k)} - \hat{G}_t^{(k)}\right) \left(\hat{W}_{t-1} - {W}_{t-1}\right) \right.\\
    &\qquad\qquad\qquad\quad + \left.{W}_{t-1}\Bcal_t \left(\hat{G}_{t+1}^{(k)} - {G}_{t+1}^{(k)}\right) - {W}_{t-1}\left(\hat{G}_{t}^{(k)} - {G}_{t}^{(k)}\right) \right]
\end{align*}
Then, the empirical process term $T_1$ can be rewritten as:
\begin{align*}
    T_1 &= \left(\P_n - \P\right) 
    \Biggl[ \Acal_k\left(\Bcal_k^{*} - \hat{G}_k^{(k)}\right)\left(\hat{W}_{k-1} - {W}_{k-1}\right) 
    - \Acal_k{W}_{k-1}\left(\hat{G}_k^{(k)} - {G}_k^{(k)}\right) \\
    &\qquad\qquad\qquad + \sum_{t = 1}^{k-1} \Acal_t \left[ \left(\Bcal_t \hat{G}_{t+1}^{(k)} - \hat{G}_t^{(k)}\right) \left(\hat{W}_{t-1} - {W}_{t-1}\right) \right.\\
    &\qquad\qquad\qquad\qquad\qquad\quad + \left.{W}_{t-1}\Bcal_t \left(\hat{G}_{t+1}^{(k)} - {G}_{t+1}^{(k)}\right) - {W}_{t-1}\left(\hat{G}_{t}^{(k)} - {G}_{t}^{(k)}\right) \right]\\
    &\qquad\qquad\qquad + \left(\hat{G}_1^{(k)} - {G}_1^{(k)}\right) \Biggr]
\end{align*}
Note $\Acal_t, \Bcal_t,$ and $\Bcal^*_k$ are all binary. Thus, the following conditions must hold for $\varphi(O;\hat\P)$ to be converging to $\varphi(O;\P)$ in $L_2(\P)$ norm:
\begin{enumerate}
    \item There exists an $M < \infty$ such that, with probability 1: 
    \begin{enumerate}
        \item $\left|\Bcal_k^{*} - \hat{G}_k^{(k)}\right| \leq M$,
        \item $\left|W_{t-1}\right| \leq M$ for $t=1,\dots,{k}$, and
        \item $\left|\Bcal_t \hat{G}_{t+1}^{(k)} - \hat{G}_t^{(k)}\right| \leq M$ for $t=1,\dots,k-1$;
    \end{enumerate}
    \item $\left\|\hat{W}_{t-1} - {W}_{t-1}\right\| = o_p(1)$ and $\left\|\hat{G}_{t}^{(k)} - {G}_{t}^{(k)}\right\| = o_p(1)$ for $t=1,\dots,k$.
\end{enumerate}
Under these conditions, it follows that $T_1$ is a sample average of a quantity tending to zero, meaning that $\sqrt{n}T_1 = o_p(1)$. 
\end{proof}

The next lemma provides the two conditional-expectation reductions used to analyze the remainder term. Define the estimated cumulative weights and propensity ratios
\begin{equation*}
    \hat{w}_t = \prod_{j=0}^{t} \frac{U_j C_j}{\hat{\pi}^U_j \hat{\pi}^{C}_j}, \qquad \hat{w}_{-1} = 1, \qquad
    \rho_j = \frac{\pi^U_j \pi^{C}_j}{\hat{\pi}^U_j \hat{\pi}^{C}_j},
\end{equation*}
so that $\hat{W}_{t} = \hat{w}_{t}$ in the notation of the main text.

\begin{lemma}\label{lemma:T2_intermediate_result}
Suppose $\hat\pi^U_j, \hat\pi^{C}_j > 0$ almost surely. Then:
\begin{enumerate}
    \item[(i)] for any $t = 0, \dots, k-1$ and any integrable function $f$ of $\overline{X}_t$,
    \begin{equation*}
        \E\left[\hat{w}_t\, \Acal_{t+1}\, f\left(\overline{X}_t\right)\right] = \E\left[\hat{w}_{t-1}\, \rho_t\, \Acal_{t+1}\, f\left(\overline{X}_t\right)\right];
    \end{equation*}
    \item[(ii)] for $t = 1, \dots, k-1$, $\E\left[\hat{w}_{t-1}\, \Acal_t \left\{\Bcal_t\, G^{(k)}_{t+1} - G^{(k)}_t\right\}\right] = 0$, and $\E\left[\hat{w}_{k-1}\, \Acal_k \left\{\Bcal^*_k - G^{(k)}_k\right\}\right] = 0$.
\end{enumerate}
\end{lemma}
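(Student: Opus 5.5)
For part (i), the plan is to peel off the last factor of the estimated weight, $\hat w_t = \hat w_{t-1}\, U_t C_t/(\hat\pi^U_t \hat\pi^C_t)$, and then integrate out $U_t$ and $A_t$ by iterated conditioning. This mirrors the argument in Lemma \ref{lemma:wt_reduction}, but keeps the estimated propensities.

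First condition on the full history $(\overline{X}_t, \overline{Y}_t, \overline{Z}_t, \overline{A}_t, \overline{U}_{t-1})$. By (P4), $\hat w_{t-1}$, $\Acal_{t+1}$, $f(\overline{X}_t)$, $C_t$ and the estimated propensities $\hat\pi^U_t,\hat\pi^C_t$ are all functions of this history. The latter hold because the nuisance functions are fitted on an independent sample and are therefore fixed functions of the history. So only $U_t$ remains inside the inner expectation. On the event $F_t$, which is enforced by $C_t \Acal_{t+1}$ and the indicators in $\hat w_{t-1}$, the first display of (P3) gives $\E[U_t\mid\cdot] = \pi^U_t$. Off this event the integrand is zero.

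Next, condition on $(\overline{X}_t, \overline{Y}_t, \overline{Z}_t, \overline{A}_{t-1}, \overline{U}_{t-1})$. On $\{\overline{C}_{t-1}=1\}$ we have $C_t = \mathbbm{1}(A_t = d_t)$. The factors $\pi^U_t$ and $\hat\pi^U_t$ are functions of $\overline{\Hcal}_t$ on $F_t$, so by (P2) they can be pulled out of the inner expectation. The second display of (P3) then replaces $\E[\mathbbm{1}(A_t=d_t)\mid\cdot]$ by $\pi^C_t$. Collecting the factors gives $\hat w_{t-1}\,\rho_t\,\Acal_{t+1} f(\overline{X}_t)$. For $t=0$ the same argument works with $\hat w_{-1}=1$.

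The main obstacle is this bookkeeping. We must check that $\pi^U_t$, which conditions on $A_t$, is a function of variables realized before $A_t$ once we restrict to the compatible event, so that it survives the second conditioning step. That is exactly what (P2) provides, since on $F'_t \cap \{A_t=d_t\}$ the treatment $A_t$ equals $d_t(\overline{\Hcal}_t,\overline{A}_{t-1})$.

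For part (ii), condition on the history $(\overline{X}_{t-1}, \overline{Y}_{t-1}, \overline{Z}_{t-1}, \overline{A}_{t-1}, \overline{U}_{t-1})$ that precedes $(Z_t, Y_t)$. By (P4), $\hat w_{t-1}\Acal_t$ is a function of this history and vanishes off $E_t$. The same holds for $\hat w_{t-1}$, since it is built from the same indicators with fixed estimated propensities. Hence
\[
\E\left[\hat w_{t-1}\Acal_t\{\Bcal_t G^{(k)}_{t+1} - G^{(k)}_t\}\right] = \E\left[\hat w_{t-1}\Acal_t\mathbbm{1}(E_t)\left\{\E[\Bcal_t G^{(k)}_{t+1}\mid \cdot\,]\mathbbm{1}(E_t) - G^{(k)}_t\mathbbm{1}(E_t)\right\}\right].
\]
Here $G^{(k)}_t$ is a function of $\overline{X}_{t-1}$, so it can stay outside the inner expectation. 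The first display of (P5) shows the braces vanish. The terminal identity follows in the same way from the second display of (P5) with $\Bcal^*_k$ and $G^{(k)}_k$.

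No causal assumptions are needed for either part. Only positivity of $\hat\pi$ is required, so that the ratios are well defined, along with integrability.
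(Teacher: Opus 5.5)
Your proposal is correct and follows essentially the same route as the paper. For (i) you peel off the last factor of $\hat w_t$ and integrate out $U_t$ and then $\mathbbm{1}(A_t=d_t)$ via the two displays of (P3) to produce $\rho_t$; for (ii) you condition on the history preceding $(Z_t,Y_t)$ and invoke (P5). Your explicit check that $\pi^U_t$ and $\hat\pi^U_t$ are functions of $\overline{\Hcal}_t$, so that they survive the second conditioning step, is a detail the paper leaves implicit, and it is consistent with the definition in \eqref{eq:pi_U}.
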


\begin{proof}[Proof of Lemma \ref{lemma:T2_intermediate_result}]
\textit{Part (i).} The argument is that of Lemma \ref{lemma:wt_reduction}, with one difference: the denominators $\hat{\pi}^U_t, \hat{\pi}^{C}_t$ are fixed functions that do not cancel against the conditional means, which instead produce the true propensities and hence the ratio $\rho_t$. In detail, write $\hat{w}_t = \hat{w}_{t-1} \times U_t C_t / (\hat{\pi}^U_t \hat{\pi}^{C}_t)$. Conditioning on $\widetilde{O}_t = \left(\overline{X}_t, \overline{Y}_t, \overline{Z}_t, \overline{A}_t, \overline{U}_{t-1}\right)$, the collection of everything realized strictly before $U_t$: every factor other than $U_t$ is a function of $\widetilde{O}_t$ ($\hat{w}_{t-1}$ and the fixed functions $\hat{\pi}^U_t, \hat{\pi}^{C}_t$ included), the integrand vanishes off the event $\left\{\overline{C}_t = 1, \overline{U}_{t-1} = 1, \overline{Y}_t = 0, \overline{Z}_t = 0\right\}$, and on that event $\E\left[U_t \mid \widetilde{O}_t\right] = \pi^U_t$ by the first display of (P3). This replaces $U_t / \hat{\pi}^U_t$ by $\pi^U_t / \hat{\pi}^U_t$ inside the expectation. Next, on $\{C_{t-1} = 1\}$ we have $C_t = \mathbbm{1}(A_t = d_t)$; conditioning on $\widetilde{O}'_t = \left(\overline{X}_t, \overline{Y}_t, \overline{Z}_t, \overline{A}_{t-1}, \overline{U}_{t-1}\right)$ and applying the second display of (P3) replaces $\mathbbm{1}(A_t = d_t) / \hat{\pi}^{C}_t$ by $\pi^{C}_t / \hat{\pi}^{C}_t$. Collecting the two ratios gives $\rho_t$ and part (i). (For $t = 0$, the same two steps apply with $\hat{w}_{-1} = 1$ and $\Acal_1 = 1$.)

\textit{Part (ii).} By the discussion preceding Lemma \ref{lemma:T2_intermediate_result}, $\hat{w}_{t-1}$ and $\Acal_t$ are functions of $\widetilde{O}_t$, and $\hat{w}_{t-1}\Acal_t = \hat{w}_{t-1}\Acal_t \mathbbm{1}(E_t)$ because the numerator of $\hat{w}_{t-1}$ contains $\prod_{j=0}^{t-1} U_j C_j$. Hence, by the law of iterated expectations and the first display of (P5),
\begin{equation*}
    \E\left[\hat{w}_{t-1} \Acal_t\, \Bcal_t G^{(k)}_{t+1}\right]
    = \E\left[\hat{w}_{t-1} \Acal_t \mathbbm{1}(E_t)\, \E\left\{\Bcal_t G^{(k)}_{t+1} \mid \widetilde{O}_t \right\}\right]
    = \E\left[\hat{w}_{t-1} \Acal_t\, G^{(k)}_t\right],
\end{equation*}
which is the first claim; the second follows identically from the second display of (P5) at $t = k$.
\end{proof}

\begin{lemma}\label{lemma:T2_term}
Let $R_2\left(\hat{\P}, \P\right) = \psi_k\left(\hat{\P}\right) - \psi_k\left(\P\right) + \int\varphi_k\left(v;\hat{\P}\right)\, \mathrm{d}\P(v)$ denote the second-order remainder in the von Mises expansion of $\psi_k$ \citep{kennedy_semiparametric_2023}. Then, 
\begin{equation}\label{eq:R2-exact}
    R_2\left(\hat{\P}, \P\right) = \sum_{t=1}^{k} b_t, \; \text{where} \;\;
    b_t = \E\left[\hat{w}_{t-2}\, \left(\rho_{t-1} - 1\right) \Acal_t \left(G^{(k)}_t - \hat{G}^{(k)}_t\right)\right],
\end{equation}
in which $\hat{w}_{t} = \prod_{j=0}^{t} U_j C_j / (\hat{\pi}^U_j \hat{\pi}^{C}_j)$ and $\rho_{j} = \pi^U_j \pi^{C}_j / (\hat{\pi}^U_j \hat{\pi}^{C}_j)$ are the estimated cumulative weights and propensity ratios defined preceding Lemma \ref{lemma:T2_intermediate_result}, with $\hat{w}_{-1} = 1$, and $\Acal_t = \prod_{j=0}^{t-1}(1-Y_j)(1-Z_j)$ is the survival indicator through time $t-1$. Consequently, if there exists $\epsilon > 0$ such that $\prod_{j=0}^{t-1}\hat{\pi}^U_j \hat{\pi}^{C}_j \geq \epsilon$ w.p.\ 1 for $t = 1, \dots, k$, and the product rate condition \eqref{eq:sequential_rate_condition} holds, then
\begin{equation*}
    R_2(\hat{\P}, \P) \leq \left|R_2\left(\hat{\P}, \P\right)\right|
    \leq \frac{1}{\epsilon}\sum_{t=1}^{k} \left\| \frac{1}{\hat{\pi}_{t-1}^U \hat{\pi}_{t-1}^{C}} - \frac{1}{\pi_{t-1}^U \pi_{t-1}^{C}} \right\| \left\| G_t^{(k)} - \hat{G}_t^{(k)} \right\|
    = o_p\left(1/\sqrt{n}\right).
\end{equation*}
\end{lemma}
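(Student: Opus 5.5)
We compute $R_2$ directly from the definition and reduce it to a telescoping sum of one-step errors. Write $\psi_k(\hat{\P}) = \hat{\E}[\hat{G}_1^{(k)}]$; under the plug-in convention we take the marginal law of $X_0$ to be the true one, since any discrepancy there only contributes an empirical-process term. Since $\E[\varphi_k(\cdot;\P)]=0$, the definition gives
\[
R_2 = \E\Big[\hat{w}_{k-1}\Acal_k\big(\Bcal^*_k - \hat G_k^{(k)}\big) + \sum_{t=1}^{k-1}\hat{w}_{t-1}\Acal_t\big(\Bcal_t\hat G_{t+1}^{(k)} - \hat G_t^{(k)}\big) + \hat G_1^{(k)}\Big] - \E\big[G_1^{(k)}\big].
\]
By Lemma \ref{lemma:T2_intermediate_result}(ii), we may subtract the mean-zero quantities $\hat w_{t-1}\Acal_t\{\Bcal_t G^{(k)}_{t+1} - G^{(k)}_t\}$ and $\hat w_{k-1}\Acal_k\{\Bcal^*_k - G^{(k)}_k\}$. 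Regrouping, every $\Bcal^*_k$ cancels, and
\[
R_2 = \sum_{t=1}^{k}\E\big[\hat w_{t-1}\Acal_t (G_t^{(k)} - \hat G_t^{(k)})\big] - \sum_{t=1}^{k-1}\E\big[\hat w_{t-1}\Acal_t\Bcal_t (G_{t+1}^{(k)} - \hat G_{t+1}^{(k)})\big] + \E\big[\hat G_1^{(k)} - G_1^{(k)}\big].
\]

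Next, reindex the middle sum by $s=t+1$, noting that $\Acal_t\Bcal_t=\Acal_{t+1}$. It becomes $\sum_{s=2}^{k}\E[\hat w_{s-2}\Acal_s(G_s^{(k)}-\hat G_s^{(k)})]$. The $t=1$ piece of the first sum and the last expectation combine analogously, using $\hat w_{-1}=1$ and $\Acal_1=1$. Hence
\[
R_2=\sum_{t=1}^k \E\big[(\hat w_{t-1}-\hat w_{t-2})\Acal_t(G_t^{(k)}-\hat G_t^{(k)})\big].
\]
Since $\Acal_t = \Acal_{(t-1)+1}$ and $G_t^{(k)}-\hat G_t^{(k)}$ is a function of $\overline X_{t-1}$ (by (P4)(ii), with $\hat G$ a fixed function by sample splitting), Lemma \ref{lemma:T2_intermediate_result}(i) with index $t-1$ replaces $\hat w_{t-1}$ by $\hat w_{t-2}\rho_{t-1}$. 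This yields exactly $b_t$ in \eqref{eq:R2-exact}.

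For the bound, write $\rho_{t-1}-1 = \pi^U_{t-1}\pi^C_{t-1}\big(1/(\hat\pi^U_{t-1}\hat\pi^C_{t-1}) - 1/(\pi^U_{t-1}\pi^C_{t-1})\big)$ and use $\pi\le 1$. We then need to control $\hat w_{t-2}\Acal_t$. Its numerator is an indicator, and $\prod_{j\le t-2}\hat\pi^U_j\hat\pi^C_j\ge\epsilon$ by condition 3, so $|\hat w_{t-2}\Acal_t|\le 1/\epsilon$. Applying Cauchy--Schwarz to each $b_t$ gives the displayed inequality, and the product rate condition \eqref{eq:sequential_rate_condition}, summed over the finitely many $t$, gives $o_p(n^{-1/2})$.

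The main obstacle is the bookkeeping in the telescoping step. The $t=1$ and $G_1$ boundary terms must line up through the conventions $\hat w_{-1}=1$ and $\Acal_1=1$. One must also confirm that the plug-in $\psi_k(\hat\P)$ is indeed $\E[\hat G_1^{(k)}]$ under the convention used for the baseline law. If that convention fails, an extra $(\hat{\P}-\P)$ term on $X_0$ appears and has to be absorbed into $T_1$.
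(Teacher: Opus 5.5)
Your proposal is correct and follows the paper's proof essentially step for step. The paper uses the same ingredients: it subtracts the mean-zero terms from Lemma~\ref{lemma:T2_intermediate_result}(ii), uses $\Acal_t\Bcal_t=\Acal_{t+1}$ and $\hat w_{-1}=\Acal_1=1$ to align the boundary terms, converts $\hat w_{t-1}$ into $\hat w_{t-2}\rho_{t-1}$ via part (i), and bounds each $b_t$ by Cauchy--Schwarz. The only difference is organizational: you collect the $\E[(\hat w_{t-1}-\hat w_{t-2})\Acal_t(G^{(k)}_t-\hat G^{(k)}_t)]$ terms in one regrouping, whereas the paper telescopes its $a_t$ terms through three separate identities.

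Your closing worry about the baseline law is unnecessary. The term $\psi_k(\hat{\P})$ cancels identically against the $-\psi_k(\hat{\P})$ inside $\varphi_k(\cdot;\hat{\P})$, whatever convention defines it. Likewise, $\psi_k(\P)=\E[G_1^{(k)}]$ comes from Theorem~\ref{theorem:g-formula_identification}, not from $\E[\varphi_k(\cdot;\P)]=0$. So no extra $X_0$ term ever arises.
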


\begin{proof}[Proof of Lemma \ref{lemma:T2_term}]
Substituting the form of $\varphi_k(\cdot\,;\hat\P)$ (Theorem \ref{theorem:EIF} with every nuisance replaced by its estimate) into the definition of $R_2$, the terms $\psi_k(\hat\P)$ and $-\psi_k(\hat\P)$ cancel, leaving
\begin{equation*}
    R_2\left(\hat{\P}, \P\right)
    = \E\left[\hat{G}^{(k)}_1\right] - \psi_k(\P) + \sum_{t=1}^{k} m_t, \qquad
    \begin{cases}
        m_t = \E\left[\hat{w}_{t-1} \Acal_t \left(\Bcal_t \hat{G}^{(k)}_{t+1} - \hat{G}^{(k)}_t\right)\right], & t \leq k-1,\\[0.4em]
        m_k = \E\left[\hat{w}_{k-1} \Acal_k \left(\Bcal^*_k - \hat{G}^{(k)}_k\right)\right],
    \end{cases}
\end{equation*}
where all expectations are over the analysis-sample data under $\P$ with the nuisance estimates fixed. Define, for $t = 1, \dots, k$,
\begin{equation*}
    a_t = \E\left[\hat{w}_{t-1}\, \Acal_t \left(G^{(k)}_t - \hat{G}^{(k)}_t\right)\right],
\end{equation*}
and recall $b_t$ from \eqref{eq:R2-exact}. We establish three identities.

\textit{Identity 1: $m_k = a_k$.} Decompose $\Bcal^*_k - \hat{G}^{(k)}_k = \left(\Bcal^*_k - G^{(k)}_k\right) + \left(G^{(k)}_k - \hat{G}^{(k)}_k\right)$. The first piece contributes zero under the weighted expectation by Lemma \ref{lemma:T2_intermediate_result}(ii); the second is $a_k$ by definition.

\textit{Identity 2: $m_t = a_t - a_{t+1} + b_{t+1}$ for $t = 1, \dots, k-1$.} Decompose
\begin{equation*}
    \Bcal_t \hat{G}^{(k)}_{t+1} - \hat{G}^{(k)}_t
    = \Bcal_t\left(\hat{G}^{(k)}_{t+1} - G^{(k)}_{t+1}\right) + \left(\Bcal_t G^{(k)}_{t+1} - G^{(k)}_t\right) + \left(G^{(k)}_t - \hat{G}^{(k)}_t\right).
\end{equation*}
Taking $\E\left[\hat{w}_{t-1}\Acal_t \times \cdot\,\right]$ of each of the three terms above: the middle term contributes zero by Lemma \ref{lemma:T2_intermediate_result}(ii), and the last term is $a_t$. For the first term, use $\Acal_t \Bcal_t = \Acal_{t+1}$ and apply Lemma \ref{lemma:T2_intermediate_result}(i) with $f = G^{(k)}_{t+1} - \hat{G}^{(k)}_{t+1}$, a function of $\overline{X}_t$:
\begin{equation*}
    a_{t+1} = \E\left[\hat{w}_t\, \Acal_{t+1}\left(G^{(k)}_{t+1} - \hat{G}^{(k)}_{t+1}\right)\right]
    = \E\left[\hat{w}_{t-1}\, \rho_t\, \Acal_{t+1}\left(G^{(k)}_{t+1} - \hat{G}^{(k)}_{t+1}\right)\right],
\end{equation*}
so that
\begin{align*}
    \E\left[\hat{w}_{t-1} \Acal_{t+1}\left(\hat{G}^{(k)}_{t+1} - G^{(k)}_{t+1}\right)\right]
    &= -\,\E\left[\hat{w}_{t-1} \Acal_{t+1}\left(G^{(k)}_{t+1} - \hat{G}^{(k)}_{t+1}\right)\right]\\
    &= -\,a_{t+1} + \E\left[\hat{w}_{t-1}\left(\rho_t - 1\right) \Acal_{t+1}\left(G^{(k)}_{t+1} - \hat{G}^{(k)}_{t+1}\right)\right]\\
    &= -\,a_{t+1} + b_{t+1}.
\end{align*}

\textit{Identity 3: $\E\left[\hat{G}^{(k)}_1\right] - \psi_k(\P) = -\,a_1 + b_1$.} By Theorem \ref{theorem:g-formula_identification}, $\psi_k(\P) = \E\left[G^{(k)}_1\right]$, so the left-hand side is $-\E\left[G^{(k)}_1 - \hat{G}^{(k)}_1\right]$. Applying Lemma \ref{lemma:T2_intermediate_result}(i) at $t = 0$ (with $\hat{w}_{-1} = 1$, $\Acal_1 = 1$, and $f = G^{(k)}_1 - \hat{G}^{(k)}_1$, a function of $X_0$) gives $a_1 = \E\left[\rho_0\left(G^{(k)}_1 - \hat{G}^{(k)}_1\right)\right]$, thus
\begin{equation*}
    -\,\E\left[G^{(k)}_1 - \hat{G}^{(k)}_1\right] = -\,a_1 + \E\left[\left(\rho_0 - 1\right)\left(G^{(k)}_1 - \hat{G}^{(k)}_1\right)\right] = -\,a_1 + b_1.
\end{equation*}

Summing the three identities, the $a_t$ terms telescope:
\begin{equation*}
    R_2\left(\hat{\P}, \P\right)
    = \left(-a_1 + b_1\right) + \sum_{t=1}^{k-1}\left(a_t - a_{t+1} + b_{t+1}\right) + a_k
    = \sum_{t=1}^{k} b_t,
\end{equation*}
which proves \eqref{eq:R2-exact}. This representation makes the sequential double robustness explicit: each $b_t$ is an expectation of a product of a time-$(t-1)$ propensity error, through $\rho_{t-1} - 1$, and a time-$t$ outcome-model error, through $G^{(k)}_t - \hat{G}^{(k)}_t$. In particular, $R_2 = 0$ exactly if, at every $t$, either the propensity estimates or the outcome-model estimate is correct.

For the bound, note $0 \leq \hat{w}_{t-2}\, \Acal_t \leq 1/\epsilon$ (the numerator indicators are at most one, and $\prod_{j=0}^{t-2}\hat{\pi}^U_j\hat{\pi}^{C}_j \geq \epsilon$ follows from the stated condition since each $\hat\pi \leq 1$), and
\begin{equation*}
    \left|\rho_{t-1} - 1\right|
    = \pi^U_{t-1}\pi^{C}_{t-1} \left|\frac{1}{\hat{\pi}_{t-1}^U \hat{\pi}_{t-1}^{C}} - \frac{1}{\pi_{t-1}^U \pi_{t-1}^{C}}\right|
    \leq \left|\frac{1}{\hat{\pi}_{t-1}^U \hat{\pi}_{t-1}^{C}} - \frac{1}{\pi_{t-1}^U \pi_{t-1}^{C}}\right|,
\end{equation*}
because $\pi^U_{t-1}\pi^{C}_{t-1} \leq 1$. Hence, by the Cauchy--Schwarz inequality applied to each $b_t$,
\begin{equation*}
    \left|R_2\left(\hat{\P}, \P\right)\right|
    \leq \frac{1}{\epsilon}\sum_{t=1}^{k} \left\| \frac{1}{\hat{\pi}_{t-1}^U \hat{\pi}_{t-1}^{C}} - \frac{1}{\pi_{t-1}^U \pi_{t-1}^{C}} \right\| \left\| G_t^{(k)} - \hat{G}_t^{(k)} \right\|
    = o_p\left(1/\sqrt{n}\right)
\end{equation*}
under the product rate condition \eqref{eq:sequential_rate_condition}.
\end{proof}

\begin{proof}[Proof of Theorem \ref{theorem:EIF_asymptotics}]
    For observed data $O_k = \left\{\overline{Y}_k, \overline{Z}_k, \overline{U}_{k-1}, \overline{A}_{k-1}, \overline{X}_{k-1}\right\}$, estimated distribution $\hat{\P}$, and true distribution $\P$, the one-step estimator in (\ref{eq:EIF-based_estimator}) can decomposed as follows, as shown by \citet{kennedy_semiparametric_2023}:
    \allowdisplaybreaks
\begin{align*}
    \hat\psi_k - \psi_k &= \psi\left(\hat\P\right) + \P_n\left\{\varphi_k(O_k;\hat\P)\right\} - \psi\left(\P\right)\\
    &= \left(\P_n - \P\right)\left\{\varphi_k(O_k;\hat\P)\right\} + R_2\left(\hat{\P}, \P\right)\\
    &= \left(\P_n - \P\right)\left\{\varphi_k(O_k;\P)\right\} + \left(\P_n - \P\right)\left\{\varphi_k(O_k;\hat\P)-\varphi_k(O_k;\P)\right\}+ R_2\left(\hat{\P}, \P\right)\\
    &= \frac{1}{n} \sum_{i=1}^n \varphi_{i,k} + T_1 + T_2.
\end{align*}

The first term represents a simple sample average of a fixed function. By the Central Limit Theorem (CLT), it therefore approximates a normally distributed random variable with variance $\text{Var}\left(\varphi_k\right)$, up to an error term of order $\text{o}_p({1/\sqrt{n}})$. What remains to show is that the empirical process term $T_1$ and the remainder term $T_2$ are of the order $\text{o}_p({1/\sqrt{n}})$, which would imply that $\sqrt{n}\left(\hat\psi_k - \psi_k\right) = \frac{1}{\sqrt{n}}\sum_{i=1}^n \varphi_{i,k} + o_p(1) \leadsto \text{N}\left(0, \text{Var}\left(\varphi_k\right)\right)$. 

Under the assumptions of this Theorem, by Lemma \ref{lemma:T1_term}, given that nuisance parameters are estimated from a distinct and independent sample, the empirical process term $T_1 = \left(\P_n - \P\right)\left[\varphi(O_k;\hat\P) - \varphi(O_k;\P)\right]$ is of the order $o_p(1/\sqrt{n})$. Additionally, by Lemma \ref{lemma:T2_term}, the second-order remainder term $R_2(\hat{\P}, \P)$ of the sequential doubly robust estimator for $\psi_k$ is of order $\text{o}_p(1/\sqrt{n})$. As such,
\begin{align*}
    \hat\psi_k - \psi_k &= \frac{1}{n} \sum_{i=1}^n \varphi_{i,k} + T_1 + T_2\\
    &= \frac{1}{n} \sum_{i=1}^n \varphi_{i,k} + \text{o}_p(1/\sqrt{n}).
\end{align*}

Then,
\begin{equation*}
    \sqrt{n}\left(\hat\psi_k - \psi_k\right) = \frac{1}{\sqrt{n}}\sum_{i=1}^n \varphi_{i,k} + o_p(1),
\end{equation*}
and $\hat\psi_k$ is asymptotically linear.
\end{proof}

\section{Detailed Simulation Settings}\label{sec:supp_sim_details}

This section provides the exact parameterizations of the data-generating mechanism outlined in Section \ref{sec:sims-dgp}.

\subsection{Covariates}
At baseline ($k=0$), each time-varying covariate was independently drawn such that:
\begin{equation*}
    X_{i,0}^{(m)} \sim U(-0.5, 0.5), \quad m=1, \dots, 4.
\end{equation*}
For each subsequent time point $k=1, \dots, 8$, $X_{i,0}^{(1)},\dots X_{i,0}^{(4)}$ were updated by adding a small, random perturbation, $\epsilon_{i,k}^{(m)}$, to the previous time point's values according to the formula:
\begin{equation*}
    X_{i,k}^{(m)} = X_{i,k-1}^{(m)} + \epsilon_{i,k}^{(m)}, \quad \text{where } \epsilon_{i,k}^{(m)} \sim U(-0.05, 0.05).
\end{equation*}
The four continuous, time-static covariates, $X_{i}^{*(m)}$, $m=1, \dots, 4$, were drawn from a truncated normal distribution constrained to the interval $[-3, 3]$ and were held constant for the duration of the study.

\subsection{Treatment Assignment Mechanism}
At baseline, treatment was randomly assigned such that individuals had a 2.5\% probability of being treated, i.e., $A_{i,0} \sim \text{Bernoulli}(0.025)$. For all subsequent time points ($k=1, \dots, 8$), the probability of receiving treatment was determined by the following model:
\begin{align*}
    \Pr\big(A_{i,k}=1 \mid &A_{i,k-1}, X_{i,k}, \overline{Y}_{i,k}=0, \overline{Z}_{i,k}=0, \overline{U}_{i,k-1}=0 \big) \nonumber \\ 
    &= \text{expit}\left(\beta_{0} + 0.15 \sum_{m=1}^{4} X_{i}^{*(m)} -0.35 \sum_{m=1}^{4} X_{i,k}^{(m)} + 4 A_{i,k-1} -1.5 (1-A_{i,k-1})\right),
\end{align*}
where $\text{expit}(x) = (1+e^{-x})^{-1}$ and the intercept values for the model are given by $\beta_{0} = -3.75 \text{ for } k=1, \dots, 4$ and $\beta_{0} = 0.75 \text{ for } k=5, \dots, 8$.

\subsection{Outcome and Competing Risk Mechanisms}
For individuals who had not yet experienced an event or competing event, the probabilities of an outcome were determined by the following models:
\begin{align*}
    \Pr\big(Y_{i,k}=1 \mid &A_{i,k-1}, X_{i,k-1}, \overline{Y}_{i,k-1}=0, \overline{Z}_{i,k}=0, \overline{U}_{i,k-1}=0 \big) \\
    & = \text{expit}\left(-3.75 + 0.75 \sum_{m=1}^{4} X_{i}^{*(m)} - 0.35 \sum_{m=1}^{4} X_{i,k-1}^{(m)} - 2 A_{i,k-1}\right),\text{ and} \\
    \Pr\big(Z_{i,k}=1 \mid &A_{i,k-1}, X_{i,k-1}, \overline{Y}_{i,k-1}=0, \overline{Z}_{i,k-1}=0, \overline{U}_{i,k-1}=0 \big) \\ 
    & = \text{expit}\left(-4.5 + 0.25 \sum_{m=1}^{4} X_{i}^{*(m)} - 0.25 \sum_{m=1}^{4} X_{i,k-1}^{(m)} - 1.75 A_{i,k-1}\right).
\end{align*}

\subsection{Censoring Mechanism}
 Among those who were not previously censored, the probability of remaining uncensored was determined by the following model: 
\begin{align*}
    \Pr\big(U_{i,k}=1 \mid &X_{i,k}, \overline{Y}_{i,k}=0, \overline{Z}_{i,k}=0, \overline{U}_{i,k-1}=0 \big) \\ 
    & = 1 -\text{expit}\left(-4.75 + 0.1 \sum_{m=1}^{4} X_{i}^{*(m)} -0.2 \sum_{m=1}^{4} X_{i,k}^{(m)}\right).
\end{align*}

\section{Additional Results for the Analysis of Time to Bisphosphonate (BP) Discontinuation}\label{sec:supp_rwe}

Using the longitudinal Veterans Health Administration (VHA) data with a cohort of 109,286 older male veterans, we evaluated two time-varying treatment regimes over a 4-year follow-up period. The first regime was of complete BP discontinuation after three years of treatment, while the second regime was that of sustained oral BP therapy throughout the entire follow-up period. At each time interval, study participants were deemed compatible with either of the two regimes if they had not only followed the exact treatment sequence specified by that regime up to that point, but also were uncensored, alive, and had not yet experienced a clinical fracture. Figure \ref{fig:compat_sample_sizes} depicts the number of individuals compatible with each regime over time; we see a clear reduction in the size of the risk set throughout follow-up, despite tens of thousands of compatible individuals at the start of the study.

\begin{figure}[!htb]
\centering
    \includegraphics[width=\textwidth]{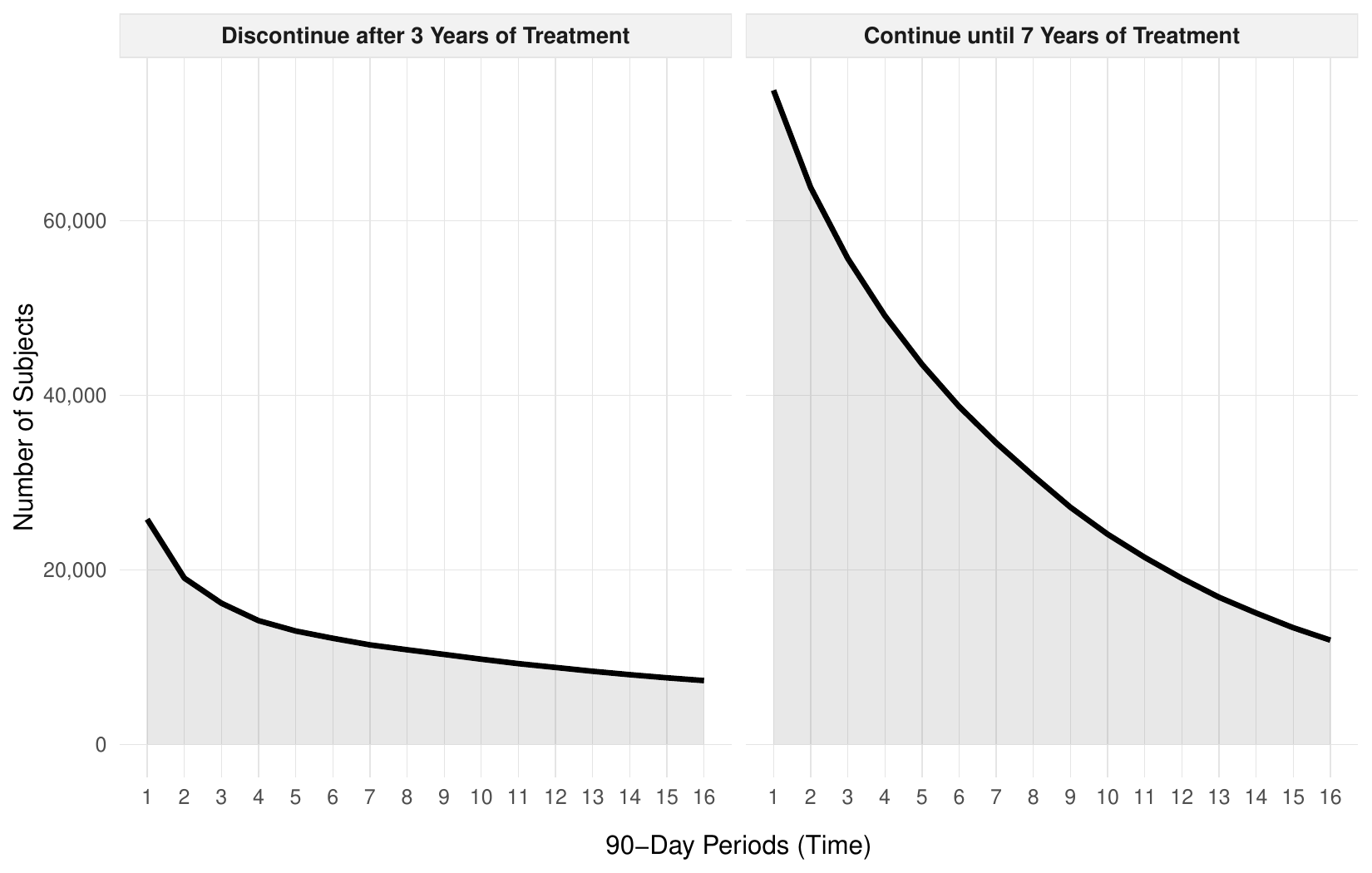}
    \caption{\textbf{Number of study participants compatible with each treatment regime across 16 90-day intervals.} Values reflect subjects remaining alive, fracture-free, uncensored, and consistent with the 3-year discontinuation regime (left) and the 7-year continuation regime (right).}
    \label{fig:compat_sample_sizes}
\end{figure}

Figure \ref{fig:ipw_vs_unweighted_aj} contrasts the cumulative incidence estimates from an unweighted Aalen-Johansen estimator against the IPW Aalen-Johansen estimator. Notably, the absolute differences between the two curves remain minimal throughout the 4-year follow-up period. At the end of the study (interval 16), the unweighted estimator yields a cumulative incidence of 7.55\% compared to 6.99\% for IPW estimator in the discontinuation regime, and 8.32\% compared to 8.75\% in the continuation regime. This indicates that inverse probability weighting alone has a negligible impact on the risk estimates.

\begin{figure}[!htb]
    \centering
    \includegraphics[width=\textwidth]{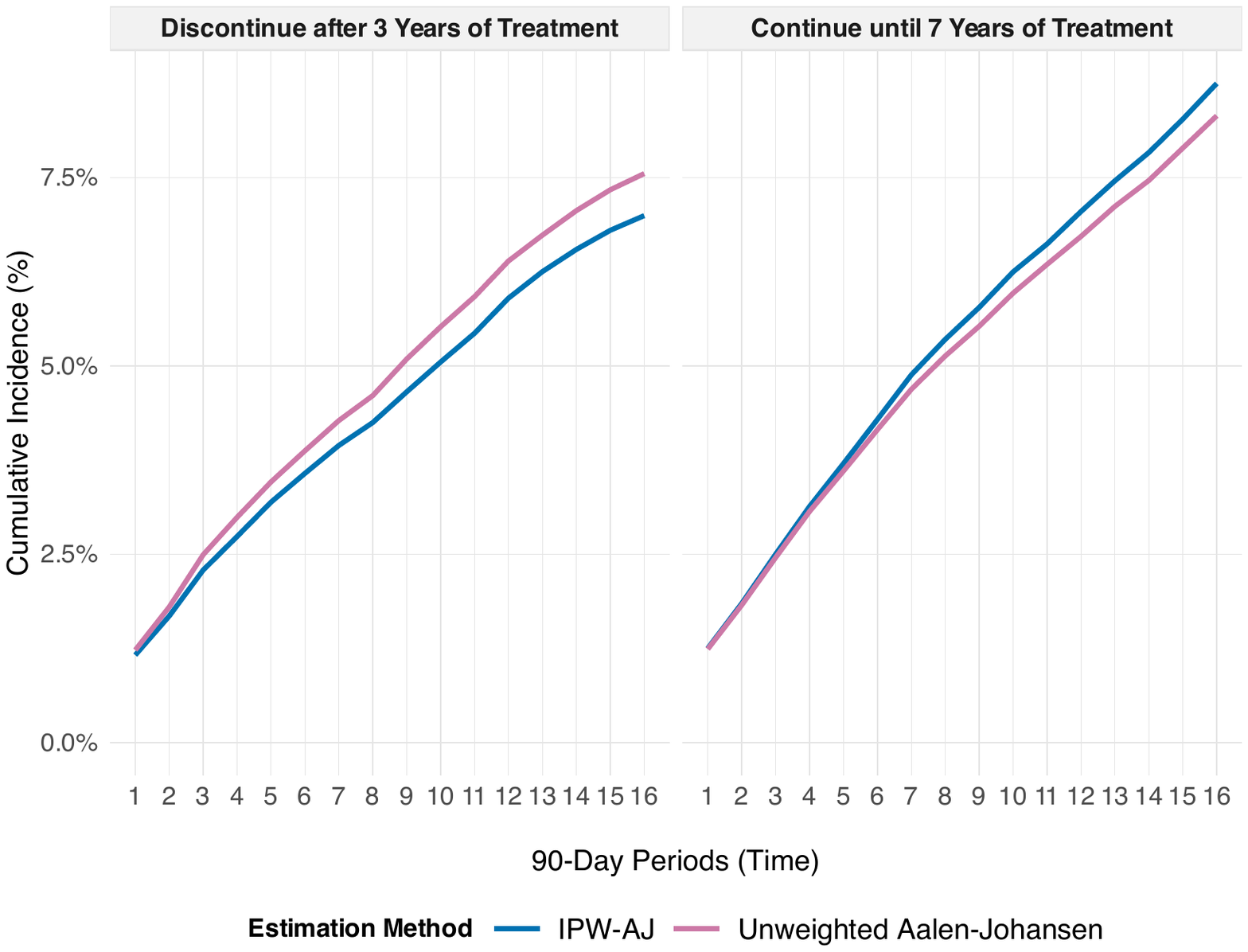}
    \caption{\textbf{Estimated fracture cumulative incidence across 16 90-day intervals using unweighted and IPW Aalen-Johansen estimators.} Results are shown under a 3-year discontinuation regime (left) and a 7-year continuation regime (right).}
    \label{fig:ipw_vs_unweighted_aj}
\end{figure}

Finally, Figure \ref{fig:three_method_comparison} compares the cumulative fracture incidence trajectories across the two treatment regimes using the IPW Aalen-Johansen (IPW-AJ) estimator, the G-formula, and the proposed EIF-based estimator. Across both regimes, all three methods capture a steady upward trend in fracture incidence over time. In the final interval, the estimated cumulative incidence of fractures under the proposed EIF-based method is approximately 10.13\% for the group discontinuing oral BP therapy after three years, compared to approximately 10.88\% for those continuing BP therapy for an additional four years. While the G-formula estimates closely track the EIF-based trajectories throughout the follow-up period, the IPW-AJ estimator diverges downward in the latter half, yielding notably lower cumulative incidence estimates by interval 16 (approximately 6.99\% and 8.75\% for the discontinuation and continuation regimes, respectively). The final estimates reflect a grouping based on model structure; the g-formula and the EIF-based approach, which both incorporate outcome modeling, yield cumulative incidence values at the end of the study that are roughly 2 to 4\%  higher than the purely propensity score-weighted IPW-AJ estimator.

\begin{figure}[!htb]
    \centering
    \includegraphics[width=\textwidth]{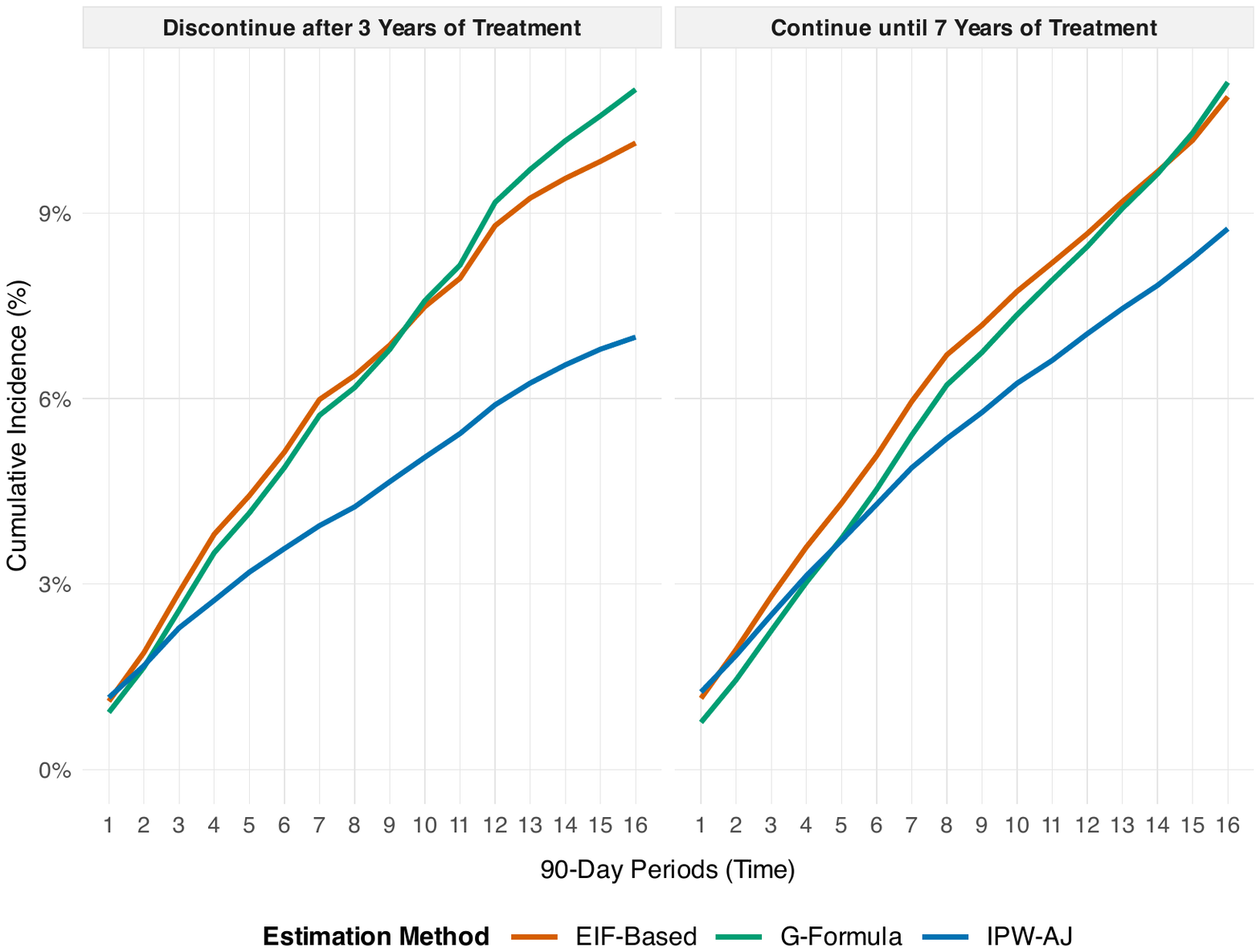}
    \caption{\textbf{Estimated fracture cumulative incidence across 16 90-day intervals using three statistical estimators.} Results compare the IPW Aalen-Johansen estimator, the g-formula, and the proposed EIF-based estimator under a 3-year discontinuation regime (left) and a 7-year continuation regime (right).}
    \label{fig:three_method_comparison}
\end{figure}

\end{document}